\newif\iffull
\newtheorem{lemma}{Lemma}[section]
\newtheorem{definition}{Definition}[section]
\newtheorem{theorem}{Theorem}[section]
\newtheorem{proposition}{Proposition}[section]
\def\execution#1#2#3{\ensuremath{(#1, #2)\Downarrow#3}}
\def\executionnt#1#2{\ensuremath{(#1, #2) \rightarrowtriangle \dots}}
\def\defseq#1{\ensuremath{#1 = (\defvec)^*}}
\def\loweq#1#2{\ensuremath{\restrict{#1}{L} = \restrict{#2}{L}}}
\def\Irck#1#2{\ensuremath{I_{rc_{#2}}^{#1}}}
\def\Irc#1{\ensuremath{I_{rc}^{#1}}}
\def\channeleq#1#2#3{\ensuremath{\restrict{#1}{#3} = \restrict{#2}{#3}}}
\def\dequeue#1#2{\ensuremath{dequeue(#1, #2)}}
\def\restrict#1#2{\ensuremath{{#1}|_{#2}}}
\def\constructV#1#2{\ensuremath{\ouvd[#2\mapsto#1]}}
\def\substitute#1#2#3{\ensuremath{#1[#2\mapsto#3]}}
\def\select#1{\ensuremath{pick(#1)}}
\def\remove#1#2{\ensuremath{remove(#1,#2)}}
\def\fork#1#2{\ensuremath{fork(#1,#2)}}
\def\fassign#1#2#3#4#5#6{\ensuremath{assign(#1,#2,#3,#4,#5,#6)}}
\def\signal{\ensuremath{signal}}
\def\intsig#1{\ensuremath{#1}}
\def\ProgM{\ensuremath{\Prog_{M}}}
\def\ProgR{\ensuremath{\Prog_{R}}}
\def\st{\ensuremath{st}}
\def\sE{\textbf{E}}
\def\sS{\textbf{S}}
\def\Vals{\ensuremath{\Sigma}}
\def\LVL{\ensuremath{LVL}}
\def\TAV{\ensuremath{T_{M}}}
\def\TPV{\ensuremath{T_{R}}}
\def\ST{{\ensuremath{EX}}}
\def\TOP{\ensuremath{TOP}}
\def\ouvd{\ensuremath{\vec{\bot}}}
\def\VTRUE{\ensuremath{\textbf{T}}}
\def\VFALSE{\ensuremath{\textbf{F}}}
\def\chnl{\ensuremath{c}}
\def\NIL{\ensuremath{\bot}}
\def\Cin{\ensuremath{C_{in}}}
\def\Cout{\ensuremath{C_{out}}}
\def\ACCORCIA{\vspace*{-\baselineskip}}
\def\lpair#1#2{#1\!\!:\!\!\ensuremath{#2}}
\def\lcomma{\ensuremath{\!\!:\!\!}}
\def\lecconf#1#2#3{\ensuremath{\ST[#1].#2\lcomma#3}}
\def\sanserif#1{\ensuremath{\sf #1}}
\def\REDUCE{\ensuremath{\sanserif{REDUCE}}}
\def\MAP{\ensuremath{\sanserif{MAP}}}
\def\defVal{\ensuremath{val_{def}}}
\def\valueM{\ensuremath{val}}
\def\gconf{\ensuremath{\gamma}}
\def\emptyQ{\ensuremath{\epsilon}}
\def\defvec{\ensuremath{\vec{df}}}
\def\EM{\ensuremath{{\sanserif{EM}}}}
\def\Prog{\ensuremath{\pi}}
\def\Progl#1{\ensuremath{\Prog[#1]}}
\def\comm{\ensuremath{\pi}}
\def\commM{\ensuremath{\comm_M}}
\def\commR{\ensuremath{\comm_R}}
\def\Iid{\ensuremath{I}}
\def\Oid{\ensuremath{O}}
\def\EMP{\ensuremath{\EM(}\ensuremath{\Prog)}}
\def\prop1{\text{SubDI}}
\def\EmptyRule#1#2{\ensuremath{\inferrule*[Left={}]{#1}{#2}}}
\def\LabelRule#1#2#3{\ensuremath{{\inferrule*[Left={#1}]{#2}{#3}}}}
\def\ruleconl#1#2{\ensuremath{\Delta, #1  \rightarrowtriangle \Delta, #2}}
\def\RASSG{ASSG}
\def\RCOMP{COMP}
\def\RIFT{IF-T}
\def\RIFF{IF-F}
\def\RWHILET{WHIL-T}
\def\RWHILEF{WHIL-F}
\def\RSKIP{SKIP}
\def\RINPUT{INP}
\def\ROUTPUT{OUTP}
\def\RINPUTYL{LINP1}
\def\RINPUTNL{LINP2}
\def\ROUTPUTL{LOUTP}
\def\RINPM{INPM}
\def\RMAP{MAP}
\def\RCLONE{CLON}
\def\RWAKE{WAKM}
\def\RWAKER{WAKR}
\def\RINPUTR{RETR}
\def\ROUTPUTR{OUTR}
\def\RCLEAN{CLN}
\def\RMAPINIT{MACT}
\def\RREDUCEINIT{RACT}
\def\NASSG{assignment}
\def\NCOMP{sequence}
\def\NIF{if}
\def\NWHILE{while}
\def\NSKIP{skip}
\def\NINPUT{input}
\def\NOUTPUT{output}
\def\NMAP{map}
\def\NCLONE{clone}
\def\NWAKE{wake}
\def\NINPUTR{retrieve}
\def\NCLEAN{clean}
\def\iinput#1#2{\textbf{input}~\ensuremath{#1}~\textbf{from}~\ensuremath{#2}}
\def\iinputr#1#2#3{\textbf{retrieve}~\ensuremath{#1}~\textbf{from}~(#2,#3)}
\def\ioutput#1#2{\textbf{output} ~\ensuremath{#1}~ \textbf{to}~ \ensuremath{#2}}
\def\imap#1#2#3{\ensuremath{\textbf{\NMAP}(#1, #2,#3)}}
\def\iclone#1#2#3{\ensuremath{\textbf{\NCLONE}(#1,#2,#3)}}
\def\iskip{\ensuremath{\textbf{\NSKIP}}}
\def\iwake#1{\ensuremath{\textbf{\NWAKE}(#1)}}
\def\iwaker#1{\ensuremath{\textbf{\NWAKE}(#1)}}
\def\iclean#1#2{\ensuremath{\textbf{\NCLEAN}(#1,#2)}}
\def\ctab{}
\def\icomment#1{\ensuremath{\ctab/\!/\textnormal{#1}}}
\def\qItem#1#2{\ensuremath{(\linecode{#1}=#2)}}
\def\spaceIns{~~~~~~~~~~}
\def\NPRED{\ensuremath{PRED}}
\def\NINIMAP{\ensuremath{WAITI}}
\def\NINIREDUCE{\ensuremath{WAITO}}
\def\NCMAP{\ensuremath{canTell}}
\def\NISREADY{\ensuremath{isReady}}
\def\emptyPlace{\ensuremath{[\ ]}}
\def\canMap#1{\ensuremath{\NCMAP(#1)}}
\def\isReady#1{\ensuremath{\NISREADY(#1)}}
\def\identical#1{\ensuremath{identical(#1)}}
\def\PREDD{\ensuremath{\NPRED\emptyPlace}}
\def\INIMAPD{\ensuremath{\NINIMAP\emptyPlace}}
\def\INIREDUCED{\ensuremath{\NINIREDUCE\emptyPlace}}
\def\PREDE#1{\ensuremath{\NPRED[#1]}}
\def\INIMAPE#1{\ensuremath{\NINIMAP[#1]}}
\def\INIREDUCEE#1{\ensuremath{\NINIREDUCE[#1]}}
\def\Pred#1{\ensuremath{Pred(#1)}}
\def\indexOf#1{\ensuremath{assignIndex(#1)}}
\def\COND{\ensuremath{COND[\ ]}}
\def\CONDP#1{\ensuremath{COND[#1]}}
\def\lprog{{\sf prg}}
\def\lprogm{{\sf prg}}
\def\lprogr{{\sf prg}}
\def\lmem{{\sf mem}}
\def\lmemi{{\sf \ensuremath{mem}}}
\def\lmemm{{\sf \ensuremath{mem}}}
\def\lmemr{{\sf \ensuremath{mem}}}
\def\linput{{\sf in}}
\def\linputi{{\sf in}}
\def\loutput{{\sf out}}
\def\loutputi{{\sf out}}
\def\ltav{\ensuremath{{\sf t_{m}}}}
\def\ltpv{\ensuremath{{\sf t_{r}}}}
\def\ltop{{\sf top}}
\def\LECS{{\sf \ensuremath{LECS}}}
\def\LECSI{\ensuremath{\LECS_{i}}}
\def\LECSUIR#1#2{\ensuremath{\bigcup_{#1}#2}}
\def\litr{{\sf int}}
\def\lmap{{\sf map}}
\def\lreduce{{\sf red}}
\def\lstate{{\sf stt}}
\def\CMAP{{\ensuremath{M}}}
\def\CREDUCE{{\ensuremath{R}}}
\def\tgetask{\ensuremath{at}}
\def\task{\ensuremath{a}}
\def\tget{\ensuremath{t}}
\def\tnoaction{\ensuremath{-}}
\def\ttellput{\ensuremath{at}}
\def\ttell{\ensuremath{t}}
\def\tput{\ensuremath{a}}
\def\tindex#1{\ensuremath{\Progl{#1}}}
\def\tcolm{\ensuremath{PRIV_{T_M}}}
\def\tcolr{\ensuremath{PRIV_{T_R}}}
\def\length#1{\ensuremath{\parallel#1\parallel}}
\def\figdesc#1{
\noindent
\centering
\begin{minipage}{0.95\columnwidth}
\vspace{3pt}
\begin{footnotesize}
#1
\end{footnotesize}
\end{minipage}
}
\lstdefinelanguage{javascript}
{morekeywords={abstract,boolean,break,byte,case,catch,char,class,const,continue,debugger,default,delete,do,double,else,enum,export,extends,false,final,finally,float,for,function,goto,if,implements,import,in,	instanceof,int,interface,long,native,new,null,package,private,protected,public,return,short,static,super,switch,synchronized,this,throw,throws,transient,true,try,typeof,var,void,volatile,while,with, input, output, from, to, then},
showstringspaces=false,
sensitive,
morecomment=[l]//,
morecomment=[s]{/*}{*/},
morestring=[b]",
morestring=[b]'}[keywords,comments,strings]
\newfont{\pica}{cmpica scaled 800}
\def\linecode#1{{\texttt{#1}}}
\def\linecodeb#1{{\texttt{\bfseries #1}}}
\newsavebox{\mylistingbox}
\newif\ifquestions
\newif\ifremarks
\newif\ifskeletons
\newif\ifideas
\newcolumntype{m}{>{$\vcenter\bgroup\hbox\bgroup}c<{\egroup\egroup$}}
\begin{document}


\title{MAP-REDUCE Runtime Enforcement of Information Flow Policies}


%

\author{Minh Ngo}
\author{Fabio Massacci}
\author{Olga Gadyatskaya}
\affil{University of Trento, Italy\\ \small{\{surname\}@disi.unitn.it} }

\date{April 2013\\Revised May 2013}

\maketitle

\pagestyle{plain}     
\enlargethispage{4ex} 

\begin{abstract}
We propose a flexible framework that can be easily customized to enforce a large variety of information flow properties. Our framework combines the ideas of secure multi-execution and map-reduce computations. The information flow property of choice can be obtained by simply changes to a map (or reduce) program that control parallel executions.

We present the architecture of the enforcement mechanism and its customizations for non-interference (NI) (from Devriese and Piessens) and some properties proposed by Mantel, such as removal of inputs (RI) and deletion of inputs (DI), and demonstrate formally soundness and precision of enforcement for these properties.
\end{abstract}

\section{Introduction}\label{sec:introduction}
Information flow properties define the acceptable behaviours of computer programs with respect to allowed and forbidden flows of information. The most well-known information flow property is \emph{non-interference} (NI), which roughly requires that the input data classified as confidential (also called secret, or high) should not influence the public (low) outputs \cite{Gogu-Mese-82-IEEESP,Devr-Pies-10-IEEESP}. 

By weakening or strengthening the definition of NI in order to address some of its problems, security researchers have proposed different information flow properties \cite{MANT-00-CSF,McCu-87-SP,Guttman-Nadel-88-CSF,McLe-94-SP,ZAKI-LEE-97-SP}. For instance, the definition of NI in \cite{Gogu-Mese-82-IEEESP} is based on an assumption that if there is no high input, then there is no high output. This assumption does not always hold. In \cite{McLe-94-SP}, the \emph{generalized non-inference} (GNF) property is defined for systems that generate high outputs even if there are no high inputs.

Different information flow properties led to different enforcement techniques. To the best of our knowledge, there is no proposal in the literature with a unified approach to the enforcement of multiple information flow properties. The existing enforcement mechanisms (e.g. \cite{Devr-Pies-10-IEEESP,Bart-Arge-Rezk-11-MSCS,Volpano-Irvine-Smith-96-JCS,LeGu-07,Shro-Smit-Thob-2007,Russo-Sabe-09-ESORICS,Capi-Long-Venk-08-ACSAC}) can be configured to accommodate different information flow policies that identify what is confidential and what is public, and what are the authorized flows in the security lattice \cite{Devr-Pies-10-IEEESP,Sabe-Myer-2003}, and, sometimes, they can as well enforce declassification policies
\footnote{These policies are required when one needs to disclose information that depends on confidential data in some way, see e.g. \cite{Myers-Sabelfeld-Zdancewic-04-CSFW,Sabelfeld-Sands-09-JCS} for details.}
(e.g. \cite{Aust-Flan-12-POPL}). Yet, the adaptation of an existing enforcement mechanism (for example, for NI) to enforce another property (for instance, GNF) is not straight-forward.

We aim to fill this gap by providing an enforcement framework that can be extended by different information flow properties.  The framework is inspired by the MAP-REDUCE approach explored by Google \cite{Lamm-07}; and generalizes the secure multi-execution (SME) technique proposed by Devriese and Piessens in \cite{Devr-Pies-10-IEEESP} so that it can enforce other information flow properties, e.g. properties from \cite{MANT-00-CSF}.

The main idea is to execute multiple ``local'' instances of the original program, feeding different inputs to each instance of the program. The local inputs are produced from the original program inputs by the \MAP\ component, depending on the set of security levels defined in the framework and the input channels available. Upon receiving the necessary data (for instance, after each individual program instance is terminated), the \REDUCE\ component collects the local outputs and generates the common output, thus ensuring that the overall execution is secure. \MAP\ and \REDUCE\ are customizable and by changing their programs the user can easily change the enforced property. Two simple tables (\TAV\ and \TPV) tell \MAP\ and \REDUCE\ what they should do when receiving respectively input and output requests from local executions on a channel.


In this report we present the following contributions:

\begin{itemize}
\item The architecture of this flexible enforcement framework.

\item A set of simple instructions for programming the framework components, such as a ``\NCLONE'' instruction to spawn new processes.

\item The instantiation of the framework's configuration for non-interference (NI) from \cite{Devr-Pies-10-IEEESP}, Removal of Inputs (RI) and Deletion of Inputs (DI) from \cite{MANT-00-CSF}. The components are summarized in Tab.~\ref{tab:component:ifp}. We prove formally soundness and precision of these enforcement mechanisms with respect to the corresponding properties for a model programming language with simple I/O instructions.
\item An example on how a simple change to the configuration can lead to the enforcement of new information flow properties.
\end{itemize}

\begin{table}
\centering
\caption{Enforcement mechanisms for the selected information flow properties}
\label{tab:component:ifp}
\begin{tabular}{|p{4.1cm}|c|c|c|c|}
\hline
\multicolumn{1}{|c|}{\multirow{2}{*}{\textbf{Property}}} & \multirow{2}{*}{\textbf{Section}} & \multicolumn{3}{c|}{\textbf{Components}}\\
\cline{3-5}
 & & \MAP& \REDUCE& \TAV/\TPV \\
\hline
\hline
Removal of inputs \cite{MANT-00-CSF} & \S\ref{sec:em:RI} & Fig.\ref{alg:MAP:RI} & Fig.\ref{alg:REDUCE:RI} & Fig.\ref{fig:table:RI:MAP},\ref{fig:table:RI:REDUCE} \\
\hline
Deletion of inputs \cite{MANT-00-CSF}  & \S\ref{sec:em:DI} & Fig.\ref{alg:MAP:DI} & Fig.\ref{alg:REDUCE:DI} & Fig.\ref{fig:table:DI:MAP},\ref{fig:table:DI:REDUCE} \\
\hline
Termination (in)sensitive  & \S\ref{sec:em:NI} & Fig.\ref{alg:MAP:NI:SME} & Fig.\ref{alg:REDUCE:NI} & Fig.\ref{fig:table:NI:MAP},\ref{fig:table:NI:REDUCE} \\
non-interference \cite{Devr-Pies-10-IEEESP} & & & &\\
\hline
\end{tabular}
\end{table}

The rest of the paper is organized as follows. \S\ref{sec:overview} gives an overview of the idea behind our approach and the architecture of the enforcement framework. The semantics of the controlled programs is introduced in \S\ref{sec:formalization:proram}. The formalization of the framework is presented in \S\ref{sec:formalization:em}. The enforcement mechanisms for the information flow properties we have selected are described in \S\ref{sec:em}. The soundness and precision of the enforcement mechanisms constructed are presented in respectively \S\ref{sec:soundness} and \S\ref{sec:precision}. We discuss options for fine-tuning the enforcement framework and ideas for its further extensions with other properties  in \S\ref{sec:discussion}. The relationships among the properties enforced, and the limitations of the framework are discussed respectively in \S\ref{sec:discussion:relationship} and \S\ref{sec:discussion:limitations}. Then we discuss related work in \S\ref{sec:relwork} and conclude in \S\ref{sec:conclusion}.

\section{Overview}\label{sec:overview}
Fig.~\ref{fig:archiectureEMP} depicts the general architecture of the enforcement mechanism for an information flow property on a program \Prog. It is composed by a stack \ST\ of local executions ($\Prog[0], \dots,$ $\Prog[\TOP]$, where $\TOP$ is the index of the top of the stack), global input and output queues, the \MAP\ and \REDUCE\ components, and the tables \TAV\ and \TPV.

Local executions (instances of the original program that are executed in parallel and are unaware of each other) are separated from the environment input and output actions by the enforcement mechanism. A local execution has its own input and output queues. The local input (resp. output) queue of a local execution contains the input (resp. output) items that can be freely consumed (resp. generated) by this local execution. \MAP\ and \REDUCE\ are responsible for respectively the global input queue containing the input items from the external environment (received from the user or other input channels), and the global output queue containing the output items filtered by the enforcement mechanism to the environment.

\begin{figure}
\centering
\includegraphics[scale=0.75]{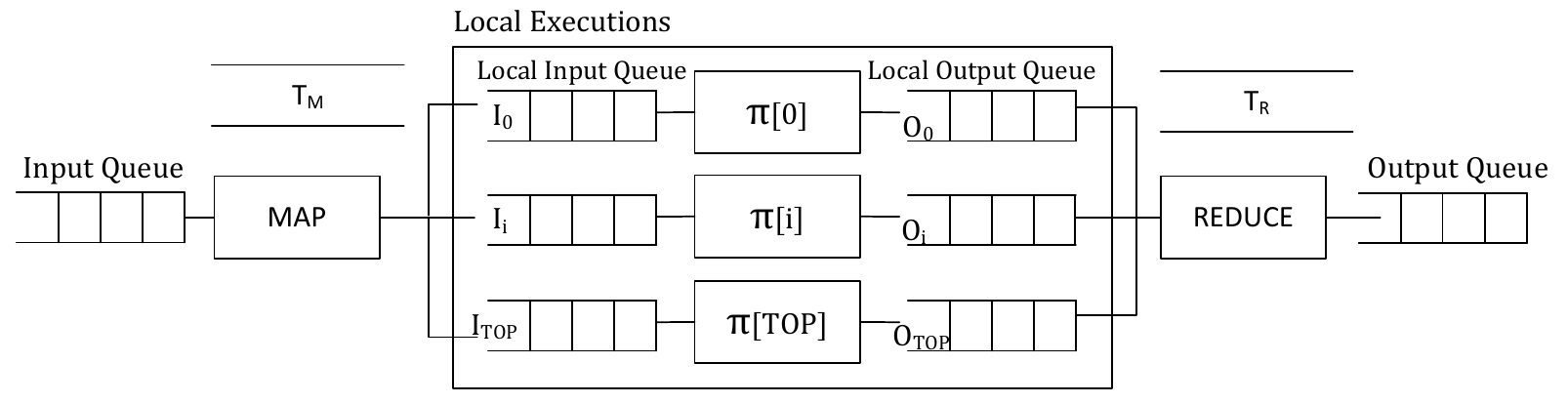}
\caption{Architecture of enforcement mechanisms}
\label{fig:archiectureEMP}
\end{figure}

When a local execution needs an input item that is not yet ready in its local input queue, it will request the help of \MAP\ by emitting an \emph{interrupt signal} (or just \emph{signal} for short). When different local executions request values from the same channel, there will be only one actual input action performed by the enforcement mechanism. After the value is read, \MAP\ will distribute it to local executions, replacing the actual value by the default (fake) one, if necessary. Similarly, when a local execution generates an output item, the output item will be handled by \REDUCE.

\MAP\ and \REDUCE\ can also autonomously send and, respectively, collect items from local queues. For example, upon receiving an input item from the environment, \MAP\ can send it to \emph{all} local executions that satisfy a predicate. The parallel broadcast and parallel collection to and from local processors are the characteristic features of \MAP-\REDUCE\ programs \cite{Lamm-07}; this explains our choice for the name of the enforcement mechanism.

The actions of \MAP\ (respectively \REDUCE) on an input (output) request from a local execution depend on the configuration information in the table \TAV\ (\TPV). These components of the enforcement mechanism are customized depending on the desired information flow property. The framework components configured to implement the chosen information flow properties are listed in Tab.~\ref{tab:component:ifp} (for each selected property the table contains pointers to the actual component configurations).

The configuration of input and output actions of local executions is based on two privileges: \emph{ask} (\task) and \emph{tell} (\ttell). If a local execution has the \emph{ask} privilege on the input channel \chnl, then \MAP\ can fetch the input item from the environment upon receiving the interrupt signal from a local execution. If a local execution has the \emph{tell} privilege on the input channel \chnl, then this local execution can get the real value from the channel \chnl\ when \MAP\ broadcasts the input item to local executions, otherwise it will get a default value. If a local execution has the \emph{ask} privilege on the output channel \chnl, then \REDUCE\ will actually ask the execution for the real value that it wished to send to \chnl. Otherwise, \REDUCE\ will just replace it with a default value. If a local execution has the \emph{tell} privilege on the output channel \chnl, it can invoke \REDUCE\ to send the values generated by itself to \chnl.

Notice that an execution may have only one privilege. For example, an execution with the \emph{ask} but not the \emph{tell} privilege in \TPV\ will provide the real value to \REDUCE, but will not be able to invoke \REDUCE\ to put the value in the external output. It will have to wait for somebody else with the \emph{tell} privilege on the channel to produce an output.

\section{Semantics of Controlled Programs}\label{sec:formalization:proram}
\begin{figure}
\centering
	\begin{alignat*}{2}
	\comm ::= &~& \spaceIns instructions:\\
		& |x:=e & \NASSG\\
		& |\comm;\comm & \NCOMP\\
		& |\textbf{if} ~e ~\textbf{then} ~\comm ~\textbf{else} ~\comm & \NIF\\
		& |\textbf{while} ~e ~\textbf{do} ~\comm & \NWHILE \\
		& |\textbf{skip} & \NSKIP\\
		& |\textbf{input} ~x ~\textbf{from} ~c & \NINPUT\\
		& |\textbf{output} ~e ~\textbf{to} ~c & \NOUTPUT
\end{alignat*}
\caption{Language instructions}
\label{fig:Comm:Standard}
\end{figure}

Our model programming language is close to the one used in the SME paper \cite{Devr-Pies-10-IEEESP}. Valid values in this language are boolean values ($\VTRUE$ and $\VFALSE$) or non-negative integers. A program \Prog\ is an instruction composed from the terms described in Fig.~\ref{fig:Comm:Standard}. In this figure, $\comm$, $e$, $x$, and $\chnl$ are meta-variables for respectively instructions, expressions, variables, and input/output channels. Since a program is just a sequence of instructions (i.e. a complex instruction itself), we will use program and instruction interchangeably when referring to complex instructions.

We model an input (output) item as a vector and define input (output) of program instances as queues. We use vectors of channel to accommodate forms in which multiple fields are submitted simultaneously but are classified differently (e.g. credit card numbers vs. user names)

\begin{definition}\label{def:vector}
An \emph{input vector} $\vec{v}$ is a mapping from input channels to their values, $\vec{v}: C_{in}  \rightarrow \Vals \cup \{\NIL\}$, where the value \NIL\ is the special undefined value. An \emph{output vector} $\vec{v}$ is a mapping from output channels to their values, $\vec{v}: C_{out} \rightarrow \Vals \cup \{\NIL\}$.
\end{definition}

Given a vector $\vec{v}$ and a channel $\chnl$, the \emph{value of the channel} is denoted by $\vec{v}[\chnl]$. The symbol $\ouvd$ denotes an output vector mapping all channels to \NIL. To simplify the formal presentation, in the sequel w.l.o.g. we assume that each input and output operation only affect one channel at a time. Thus, for each vector, there is only one channel $\chnl$ such that $\vec{v}[\chnl] \neq \NIL$.

Let queue $Q$ be a sequence of elements $q_1 \dots q_n$. We denote the addition of a new element to the queue $Q$ as $Q.q$, or $q_1 \dots q_n.q$; the removal of the first element from the queue $Q$ is denoted by $q_2 \dots q_n$. By $\emptyQ$ we denote an empty queue.

To define an execution configuration, we use a set of labelled pairs. A labelled pair is composed by a label and an object and in the form ${\sf label}\lcomma object$. The ${\sf label}$ is attached to the $object$ in order to differentiate this object from others, so each label occurs only once. For example, $\ltav:\TAV$ is the configuration table for \MAP. A summary of labels and their semantics used in this report is in Tab.~\ref{tbl:label-semantics}.


\begin{table}
\centering
\caption{Labels and their semantics}
\label{tbl:label-semantics}
\begin{tabular}{|c|l|}
\hline
\textbf{Label} & \multicolumn{1}{c|}{\textbf{Semantics}} \\
\hline
\hline
\lprog & Program executed by the component \\
\hline
\lmem &  Memory of the component \\
\hline
\linput & Input queue of the component \\
\hline
\loutput & Output queue of the component \\
\hline
\ltop & Index of the top of the stack \ST \\
\hline
\lstate & State of a local execution \\
\hline
\litr & Interrupt signal sent by a local execution\\
\hline
\ltav & Table \TAV \\
\hline
\ltpv & Table \TPV \\
\hline
\lmap & \MAP\ component \\
\hline
\lreduce & \REDUCE\ component\\
\hline
\end{tabular}
\end{table}

\begin{definition}\label{def:execConf}
An \emph{execution configuration} of a program is a set $\{\lpair{\lprog}{\comm}, \lpair{\lmem}{m}, \lpair{\linput}{I}, \lpair{\loutput}{O} \}$, where $\comm$ is the program to be executed,  $m$ is the memory (a function mapping variables to values), $I$ ($O$ respectively) is the queue of input (resp. output) vectors.
\end{definition}

\begin{figure}[!t]
\begin{center}
\ACCORCIA
\begin{mathpar}
		\LabelRule{\RASSG}{\comm = x:= e \\ m(e) = \valueM}
		{\ruleconl {\lpair{\lprog}{\comm}, \lpair{\lmem}{m}}
		{\lpair{\lprog}{\textbf{skip}}, \lpair{\lmem}{\substitute{m}{x}{\valueM}}}} \\
		\LabelRule{\RCOMP}{\lprog \lcomma \comm_1, \lmem \lcomma m, \linput \lcomma I, \loutput \lcomma O	\rightarrowtriangle \lprog \lcomma \comm_1', \lmem \lcomma m', \linput \lcomma I', \loutput \lcomma O'}
		{\lprog\lcomma\comm_1;\comm_2, \lmem\lcomma m, \linput\lcomma I, \loutput\lcomma O  \rightarrowtriangle \lprog\lcomma\comm_1';\comm_2,\lmem\lcomma m', \linput\lcomma I', \loutput\lcomma O'} \\
		\LabelRule{\RIFT}{\comm = \textbf{if}~e~\textbf{then}~\comm_1 ~\textbf{else}~ \comm_2\\
		 m(e) = \VTRUE}
		{ \ruleconl{\lprog\lcomma \comm}{\lprog\lcomma\comm_1}} \\
		\LabelRule{\RIFF}{\comm = \textbf{if}~e~\textbf{then}~\comm_1 ~\textbf{else}~ \comm_2\\
		 m(e) = \VFALSE}
		{\ruleconl{ \lprog\lcomma\comm}{\lprog\lcomma\comm_2}}  \\
		\LabelRule{\RWHILET}{\comm = \textbf{while}~e ~\textbf{do}~\comm_{loop}\\
		 m(e) = \VTRUE }
		{\ruleconl{\lprog\lcomma\comm}{\lprog\lcomma\comm_{loop};\comm}} \\
		\LabelRule{\RWHILEF}{\comm = \textbf{while}~e ~\textbf{do}~\comm_{loop}\\
		 m(e) = \VFALSE }
		{\Delta, \lprog\lcomma\comm \rightarrowtriangle \Delta,\lprog\lcomma\textbf{skip}}\\
\LabelRule{\RSKIP}{ }
		{\ruleconl{\lprog\lcomma\textbf{skip};\comm}{ \lprog\lcomma\comm}}\\
		\LabelRule{\RINPUT}{\comm = \textbf{input} ~ x ~ \textbf{from} ~ c \\ I = \vec{v}.I'   \\ \vec{v}[c] \neq \NIL}
		{\Delta, \lprog\lcomma\comm, \lmem\lcomma m, \linput\lcomma I \rightarrowtriangle \Delta, \lprog\lcomma\textbf{skip}, \lmem\lcomma \substitute{m}{x}{\vec{v}[c]}, \linput\lcomma I'} \\
		\LabelRule{\ROUTPUT}{\comm = \textbf{output} ~ e ~\textbf{to} ~c \\
		 \vec{v} = \constructV{m(e)}{\chnl}}
		{\Delta, \lprog\lcomma\comm, \loutput \lcomma O  \rightarrowtriangle \Delta, \lprog\lcomma \textbf{skip},\loutput\lcomma O.\vec{v}} \and
\end{mathpar}
\ACCORCIA	
\caption{Semantics of instructions of controlled programs}
\label{fig:CommandSem}
\end{center}
\end{figure}

The operational semantics of the language is described in Fig.~\ref{fig:CommandSem}. The conclusion part of each semantic rule is written as $\Delta, \Gamma \Rightarrow \Delta,\Gamma'$, where $\Delta$ denotes the elements of the execution configuration that are unchanged upon the transition. The semantics of the comma ``$,$'' in the expression $\Delta, \Gamma$ is the disjoint union of $\Delta$ and $\Gamma$. We abuse the notation of the memory function $m(.)$ and use it to evaluate expressions to values. When an output command sends a value to the channel $c$, an output vector $\vec{v} = \constructV{\valueM}{\chnl}$ is inserted into the output queue, where $\vec{v}$ is the vector with all undefined channels, except \chnl\ that is mapped to $m(e)$, so $\vec{v}[c'] = \NIL$ for all $c' \neq c$ and $\vec{v}[c] = m(e)$.

\begin{definition} \label{def:execution}
An \emph{execution} of the program $\Prog$ is a finite sequence of configuration transitions $\gconf_0 \rightarrowtriangle \gconf_1 \rightarrowtriangle \ldots \rightarrowtriangle \gconf_k$, where $\gconf_0 = \{ \lprog\lcomma\Prog, \lmem\lcomma m_0, \linput\lcomma I, \loutput \lcomma \emptyQ \}$ is the initial configuration, $m_0$ is the function mapping every variable to the initial value, and $k$ is the number of transitions.
\end{definition}

The transition sequence can be also written as  $\gconf_0 \rightarrowtriangle^k \gconf_k$, or $\gconf_0 \rightarrowtriangle^* \gconf_k$ if the exact number of transitions does not matter, or $(\Prog$, $I, \emptyQ)$ $\rightarrowtriangle^*$ $(\comm^k, I^k, O^k)$, where $\gconf_k =  \{\lprog\lcomma\comm^k, \lmem\lcomma m^k, \linput \lcomma I^k, \loutput\lcomma O^k \}$. All sequences have the form

\begin{equation*} \label{conf:transition}
\left((\gconf_1 \overset{\EmptyRule{\text{INST}_1}{\text{COMP}}}{\rightarrowtriangle} \gconf_2 \overset{\EmptyRule{}{\text{\RSKIP}}}{ \rightarrowtriangle} \gconf_3)|(\gconf_1 \overset{\EmptyRule{\text{INST}_2}{\text{COMP}}}{\rightarrowtriangle} \gconf_2)|(\gconf_1 \overset{\EmptyRule{}{\text{\RSKIP}}}{\rightarrowtriangle} \gconf_2)\right)^* \overset{\EmptyRule{\text{INST}_3}{\text{\RCOMP}}}{\rightarrowtriangle} \gconf_4
\end{equation*}, where  $\text{INST}_1$ is \RASSG, \RWHILEF, \RINPUT, or \ROUTPUT; $\text{INST}_2$ is \RIFT, \RIFF, or \RWHILET; and $\text{INST}_3$ is \RASSG, \RINPUT, \ROUTPUT, or \RSKIP.


An infinite sequence is written as $\gconf_0 \rightarrowtriangle \gconf_1 \rightarrowtriangle \dots$.

\begin{definition}\label{def:termination}
The program \emph{terminates} if there exists a configuration $\gconf_f = \{ \lprog \lcomma \NSKIP, \lmem\lcomma m, \linput\lcomma \emptyQ, \loutput\lcomma O \}$ such that $\gconf_0 \rightarrowtriangle^* \gconf_f$. We denote this whole derivation sequence by $\execution{\Prog}{I}{O}$ using the big step notation.
\end{definition}

\section{Semantics of the Enforcement Mechanism}\label{sec:formalization:em}
\begin{definition}\label{def:sem:global}
A \emph{configuration of an enforcement mechanism} is a set $\{\ltav\lcomma\TAV, \ltpv\lcomma\TPV, \ltop\lcomma\TOP, \lmap\lcomma\CMAP, \lreduce\lcomma\CREDUCE, \linput\lcomma I, \loutput\lcomma O, \LECSUIR{i}{\LECSI}\}$, where \TAV\ and \TPV\ are configuration tables for respectively \MAP\ and \REDUCE, \TOP\ is the index of the top of the stack of configurations of local executions \ST, $\CMAP$ and $\CREDUCE$ are configurations of respectively \MAP\ and \REDUCE\ components, $I$ and $O$ are respectively the input and output queues of the enforcement mechanism, and \LECSI\ is the configuration of the i-th local execution.
\end{definition}

We denote the enforcement mechanism on \Prog\ by \EMP. For the initial configuration, all local input and output queues will be empty, all local executions will be in the executing state, and skip is the only instruction in \MAP\ and \REDUCE\ programs. The enforcement mechanism is terminated when all local executions, \MAP\ and \REDUCE\ programs are terminated, and the global input queue is consumed completely.

\begin{definition}\label{def:termination:em}
The enforcement mechanism \emph{terminates} if there exists a configuration $\gconf_f = \{\ltav\lcomma\TAV, \ltpv\lcomma\TPV, \ltop\lcomma\TOP, \lmap\lcomma\CMAP, \lreduce\lcomma\CREDUCE, \linput\lcomma \emptyQ, \loutput\lcomma O, \LECSUIR{i}{\LECSI} \}$ such that $\gconf_0 \rightarrowtriangle^* \gconf_f$, where $\ST[i].\lprog\lcomma\iskip$ for all $i$, $\lmap.\lprogm\lcomma\iskip$, and $\lreduce.\lprogr\lcomma\iskip$. We denote this whole derivation sequence by $\execution{\EMP}{I}{O}$ using the big step notation.
\end{definition}

We now specify the semantics of the enforcement mechanism components: local executions, the programs of \MAP\ and \REDUCE. The general approach is that execution of parallel programs is modeled by the interleaving of concurrent atomic instructions \cite{Lamport-83-SCP} so each transition rule either by a local execution, by \MAP, or by \REDUCE\ is a step of the enforcement mechanism as a whole.

\subsection{Local Executions}\label{sec:formalization:locExec}
Each local execution is associated with a unique identifier $i$, that is its number on the stack \ST. A local execution can be in one of the two states: \sE\ (Executing) or \sS\ (Sleeping). Initially, the state of all local executions is \sE. A local execution moves from \sE\ to \sS\ when it has sent an interrupt signal to require an input item that is not ready in its local input queue, or to signal that it has generated an output item. A local execution moves from \sS\ to \sE\ when it is awaken by the \MAP\ component (the input item it required is ready) or by the \REDUCE\ component (its output item is consumed).

\begin{definition}\label{def:execConf:lec}
An execution configuration of a local execution is a set $\LECSI \triangleq \{\lecconf{i}{\lstate}{\st},\lecconf{i}{\litr}{signal}, \lecconf{i}{\lprog}{\comm}, \lecconf{i}{\lmem}{m}, \lecconf{i}{\linputi}{\Iid}, \lecconf{i}{\loutputi}{\Oid}\}$, where \ST\ is the global stack of local execution, $i$ denotes the i-th execution, $\st$ is the state of the local execution, $signal$ is the interrupt signal sent by the local execution, $\comm$ is an instruction to be executed, $m$ is the memory, and $\Iid$ and $\Oid$ are queues of input and output vectors respectively.
\end{definition}

We define the dequeue operator \dequeue{Q}{\chnl} on a queue $Q$ and a channel \chnl\ that returns $(\valueM, Q')$, where the value of $\valueM$ is $\vec{v}[\chnl]$,  where $\vec{v}$ is the first vector, such that $\vec{v}[\chnl] \neq \NIL$, and $Q'$ is obtained by removing $\vec{v}$ from $Q$; otherwise (there is no vector $\vec{v}$ in $Q$, where $\vec{v}[\chnl] \neq \NIL$), $\valueM = \NIL$  and $Q' = Q$.

The semantics of local executions for assignment, composition, if, while, skip instructions is essentially identical to the one described in Fig.~\ref{fig:CommandSem}. The only difference is the explicit condition that the local state must be $\sE$. We do not present these rules in the paper. We provide the rules for input and output instructions in Fig.~\ref{fig:LocCommandSem2}. When the \NINPUT\ instruction is executed and the input item required is in the local input queue, this item will be consumed (rule \RINPUTYL). Otherwise, the local execution emits an input interrupt signal $\intsig{\chnl}$ and moves to the sleep state (rule \RINPUTNL).  The output interrupt signal $\intsig{\chnl}$ is generated when the \NOUTPUT\ instruction is executed (rule \ROUTPUTL).

\begin{figure}[!t]
\begin{center}
\ACCORCIA
\begin{mathpar}
	\LabelRule{\RINPUTYL}{\ST[i].\st = \sE \\ \comm = \textbf{input} ~ x ~ \textbf{from} ~ c \\ \dequeue{\Iid}{\chnl} = (\valueM,\Iid') \\ \valueM \neq \NIL}
	{\Delta, \ST[i].\lprog\lcomma\comm , \ST[i].\lmemi\lcomma m, \ST[i].\linputi\lcomma \Iid \\\\ \Rightarrow \Delta, \ST[i].\lprog\lcomma\textbf{skip}, \ST[i].\lmemi\lcomma \substitute{m}{x}{\valueM}, \ST[i].\linputi\lcomma \Iid'} \and
	\LabelRule{\RINPUTNL}{\ST[i].\st = \sE \\
		\comm = \textbf{input} ~ x ~ \textbf{from} ~ c \\
		\dequeue{\Iid}{\chnl} = (\NIL,\Iid')}
	{\Delta, \ST[i].\lstate\lcomma\sE, \ST[i].\litr\lcomma \NIL  \Rightarrow \Delta,\ST[i].\lstate\lcomma\sS, \ST[i].\litr\lcomma\intsig{\chnl}} \and
	\LabelRule{\ROUTPUTL}{\ST[i].\st = \sE \\
	\comm = \textbf{output} ~ e ~\textbf{to} ~c \\
	\ST[i].\lmem = m \\
	 \vec{v} = \constructV{m(e)}{\chnl} }
	{\Delta, \ST[i].\lstate\lcomma\sE, \ST[i].\litr\lcomma\NIL, \ST[i].\lprog\lcomma \comm,  \ST[i].\loutputi \lcomma \Oid \\\\ \Rightarrow \Delta,\ST[i].\lstate\lcomma\sS, \ST[i].\litr\lcomma\intsig{\chnl}, \ST[i].\lprog\lcomma\textbf{skip}, \ST[i].\loutputi\lcomma \Oid.\vec{v}}
\end{mathpar}
\ACCORCIA
\caption{Semantics of the \NINPUT\ and \NOUTPUT\ instructions of \Progl{i}}
    \label{fig:LocCommandSem2}
\end{center}
\end{figure}

The initial configuration of the i-th local execution is $\{ \lecconf{i}{\lstate}{\sE},\lecconf{i}{\litr}{\NIL}, \lecconf{i}{\lprog}{\comm}, \lecconf{i}{\lmem}{m_0}, \lecconf{i}{\linputi}{\emptyQ}, \lecconf{i}{\loutputi}{\emptyQ}\}$.
A local execution is terminated if there is only the skip instruction to be executed.

\subsection{\MAP}
A \MAP\ program is normally composed of three steps: the input retrieval step, the value distribution step and the wake up step. In the first step, an input item is fetched by performing an actual input action from the specified channel, or by using the default value (\defVal). In the second step, a real input item or the default item is sent to local executions. These two steps depend on the configuration in \TAV. In the third step, local executions are waken up if a certain condition is satisfied, e.g., these local executions were waiting for input items and they have received the input items they required.

In addition to the instructions in Fig.~\ref{fig:Comm:Standard} (except the \NOUTPUT\ instruction that is replaced by the \NMAP\ instruction), the program \ProgM\ is also composed by the instructions described in Fig.~\ref{fig:Comm:MAP}, where $\PREDD \triangleq \lambda x. \Pred{x}$\ is a meta-variable for predicates. The evaluation of the predicate \PREDE{\ } on the configuration of the local execution \Progl{i} is denoted as \PREDE{i}.

\begin{figure}
\begin{center}
\begin{alignat*}{2}
		\commM ::=&\dots &  \spaceIns instructions:\\
		& |\imap{e}{\chnl}{\PREDD} & \NMAP\\
		& |\iwaker{\PREDD} & \NWAKE \\
		& |\iclone{\PREDD}{\tcolm}{\tcolr} & \NCLONE
\end{alignat*}
\vspace{-10pt}
\caption{\MAP\ instructions}
\label{fig:Comm:MAP}
\end{center}
\end{figure}

\emph{The execution of \NMAP, \NWAKE, or \NCLONE\ instruction is applied simultaneously to all local executions \Progl{i} such that \PREDE{i} is true} as follows. First, the value of the expression $e$ is sent to the input queues of all local executions. The value sent is considered as a value from the channel \chnl. Then all local executions $\Prog[i]$ are awaken and the interrupt signals generated by those local executions (if there were some) are removed. The execution of the \NCLONE\ instruction clones the configuration of each local execution \Progl{i}. The new program and the overall configuration will be appended to the local executions stack. The state of the new executions is \sS. The privileges of the new local executions are copied from the lists of privileges \tcolm\ and \tcolr. The list \tcolm\ (respectively \tcolr) is an input (resp. output) privilege configuration template which varies depending on the enforced property. We give an example of such templates in \S\ref{sec:em:DI}, where the enforced property requires cloning.

\begin{definition}\label{def:confMAP}
A configuration of the \MAP\ component is a set $\{\lmap.\lpair{\lprogm}{\commM}, \lmap.\lpair{\lmem}{m}\}$, where $\comm_M$ is the instruction to be executed, and $m$ is the memory.
\end{definition}

The semantics of instructions of \NASSG, \NCOMP, \NIF, \NWHILE, and \NSKIP\ of \MAP\ is almost the same as the semantics presented in Fig.~\ref{fig:CommandSem}. The \NOUTPUT\ instruction is not used in \ProgM. The semantics of \NINPUT\ and \NMAP\ is described in Fig.~\ref{fig:sem:MAP}. For the \NMAP, \NWAKE, and \NCLONE\ instructions, if there is no $i$ such that \PREDE{i} holds, then the execution of these instructions makes all local executions move from their current configurations to themselves.

\begin{figure}[!t]
\begin{center}
\ACCORCIA
\begin{mathpar}
\LabelRule{\RINPM}{\commM = \textbf{input} ~x ~\textbf{from}~$\chnl$ \\ I = \vec{v}.I' \\ \vec{v}[c] \neq \NIL}
{\Delta,  \lmap.\lprogm\lcomma\commM, \lmap.\lmemm\lcomma m, \linput\lcomma I  \Rightarrow \Delta,\lmap.\lprogm\lcomma\textbf{skip}, \lmap.\lmemm\lcomma \substitute{m}{x}{\vec{v}[\chnl]}, \linput\lcomma I'} \and
\LabelRule{\RMAP}{\commM = \textbf{\NMAP}(e,\chnl, \PREDD) \\
	S = \{i \in \{0,\dots,\TOP\}:\PREDE{i}\} \\
	\LECS = \bigcup_{i \in S}\{\ST[i].\linput\lcomma \Iid \}\\
	\vec{v} = \constructV{m(e)}{\chnl} \\
	\LECS' = \bigcup_{i \in S}\{\ST[i].\linput\lcomma \Iid.\vec{v} \}}
{\Delta, \lmap.\lprogm\lcomma\commM, \LECS  \Rightarrow  \Delta,\lmap.\lprogm\lcomma\textbf{skip},\LECS'} \and
\LabelRule{\RWAKE}{\commM = \textbf{\NWAKE}(\PREDD) \\
	S = \{i \in \{0,\dots,\TOP\}:\PREDE{i}\} \\\\
	\LECS = \bigcup_{i \in S} \{ \ST[i].\litr\lcomma\signal, \ST[i].\lstate\lcomma\sS\} \\
	\LECS' = \bigcup_{i \in S} \{\ST[i].\litr\lcomma\NIL, \ST[i].\lstate\lcomma\sE \} }
{\Delta, \lmap.\lprogm\lcomma\commM, \LECS \Rightarrow \Delta,\lmap.\lprogm\lcomma\textbf{skip}, \LECS'} \and
\LabelRule{\RCLONE}{\commM = \iclone{\PREDD}{\tcolm}{\tcolr} \\
	 S = \{i \in \{0,\dots,\TOP\}:\PREDE{i}\} \\
	 \LECS = \bigcup_{i}\LECSI\\
	 \LECS' = \LECS \cup \bigcup_{i \in S}\fork{\LECSI}{\TOP + \indexOf{i}} \\
	 \TOP' = \TOP + |S|\\
	 (\TAV', \TPV') = \fassign{\TAV}{\TPV}{\TOP}{\TOP'}{\tcolm}{\tcolr}}
{\Delta, \ltav\lcomma\TAV, \ltpv\lcomma\TPV, \ltop\lcomma\TOP, \lmap.\lprogm\lcomma\commM,\LECS
	\\\\ \Rightarrow \Delta,  \ltav\lcomma\TAV', \ltpv\lcomma\TPV', \ltop\lcomma\TOP',\lmap.\lprogm\lcomma\textbf{skip},\LECS'}\and
\LabelRule{\RMAPINIT}{\lmap.\lprogm\lcomma\iskip \\
	 S = \{i \in \{0,\dots,\TOP\}:\INIMAPE{i}\}\\
	 S \neq \emptyset \\
	 i = \select{S} \\
	 \ST[i].\lprog = \iinput{x}{\chnl};\comm}
{\Delta, \ST[i].\litr\lcomma\chnl, \lmap.\lprogm\lcomma\iskip, \lmap.\lmemm\lcomma m \\\\
\Rightarrow \Delta, \ST[i].\litr\lcomma\NIL, \lmap.\lprogm \lcomma\ProgM(i,\chnl), \lmap.\lmemm\lcomma m_0}
\end{mathpar}
\ACCORCIA
\caption{Semantics of instructions of \MAP}
	\label{fig:sem:MAP}
\end{center}
\end{figure}

The bijective function $\indexOf{}:S \rightarrow \{1, \dots, |S|\}$ assigns and returns an unique index of the element $i$ in the set $S$ (the index starts from $1$). The function $\fork{\LECSI}{j}$ makes a copy of the local execution \Progl{i}; the new execution can be referred as $\ST[j]$. The function \fassign{\TAV}{\TPV}{\TOP}{\TOP'}{\tcolm}{\tcolr} modifies tables \TAV\ and \TPV\ by adding new columns for the newly cloned processes and the corresponding values for the privileges from \tcolm\ and \tcolr\ for the input and output channels for these processes.

The initial configuration of \MAP\ is $\{\lmap.\lprogm\lcomma\NSKIP, \lmap.\lmem\lcomma m_0 \}$. The execution of \MAP\ is terminated if \NSKIP\ is the only instruction in the \MAP\ program.

We define the predicate  \INIMAPD\ that indicates whether a local execution is waiting for an input  item or not. The function $\select{S}$ returns an element from the non-empty set $S$. The selection of an element in a non-empty set $S$ can be non-deterministic or in the round-robin way.

\begin{eqnarray*}
\INIMAPD &\triangleq & \lambda x. \ST[x].\lstate = \sS \wedge \exists \chnl \in \Cin: \ST[x].\litr = \intsig{\chnl} \wedge \nonumber \\
		& & {}\wedge \ST[x].\lprog = \iinput{y}{\chnl};\comm
\end{eqnarray*}

The \MAP\ component activation is presented in Fig.~\ref{fig:sem:MAP} (the rule $\RMAPINIT$). \MAP\ can be activated when there is only the \NSKIP\ instruction in the \MAP\ program, and there is an interrupt signal  \intsig{\chnl} from the local execution \Progl{i}, the state of this local execution is sleeping (\sS), the instruction to be executed is an input instruction.  The activation of \MAP\  on a signal on channel \chnl\ from \Progl{i} will remove the signal from configuration of \Progl{i}.

\subsection{\REDUCE} \label{sec:emp:reduce}
The \REDUCE\ component controls the output actually generated by the enforcement mechanism. A \REDUCE\ program $\comm_R$ can ask an item from a local execution, send an item to the external output, clean local output queues of local executions and wake local executions up.

\begin{figure}
\centering
	\begin{alignat*}{2}
		\commR ::=&\dots & \spaceIns instructions:\\
		& |\iinputr{x}{i}{\chnl} & \NINPUTR\\
		& |\iclean{\chnl}{\PREDD} & \NCLEAN
	\end{alignat*}
\caption{\REDUCE\ instructions}
\label{fig:comm:REDUCE}
\end{figure}

\begin{definition}\label{def:execConf:REDUCE}
A configuration of the \REDUCE\ component is a set $\{\lreduce.\lpair{\lprogr}{\commR}, \lreduce.\lpair{\lmem}{m}\}$, where $\comm_R$ is the instruction to be executed, and $m$ is the memory.
\end{definition}

Except for the \NINPUT\ instruction that is replaced by the \NINPUTR\ instruction, in addition to the instructions in Fig.~\ref{fig:CommandSem} and the \NWAKE\ instruction, the program of the \REDUCE\ component may contain the instructions described in Fig.~\ref{fig:comm:REDUCE}. The execution of the \NINPUTR\ instruction reads the value from the output queue of $\Progl{i}$ and stores it into $x$. The execution of the \NCLEAN\ instruction is applied to all local execution \Progl{i} such that \PREDE{i} is true. This instruction removes the first vector $\vec{v}$ of the output queue \Oid\ of \Progl{i}, where the value of $\vec{v}[\chnl]$ is different from \NIL.

The semantics of the \NINPUTR, \NOUTPUT, \NWAKE, and \NCLEAN\ instructions is described in Fig.~\ref{fig:Sem:REDUCE}, where the function \remove{\Oid}{\chnl} removes the first vector $\vec{v}$ in \Oid\ where $\vec{v}[\chnl] \neq \NIL$.

\begin{figure}[!t]
\vspace{15pt}
\begin{center}
\ACCORCIA
\begin{mathpar}
\LabelRule{\RINPUTR}{\commR = \iinputr{x}{i}{\chnl} \\
	\ST[i].\loutput = \Oid \\
	\dequeue{\Oid}{\chnl} = (\valueM,\Oid') \\
	\valueM \neq \NIL}
{\Delta, \lreduce.\lprogr\lcomma\commR, \lreduce.\lmemr\lcomma m \Rightarrow \Delta,\lreduce.\lprogr\lcomma\textbf{skip}, \lreduce.\lmemr\lcomma \substitute{m}{x}{\valueM}} \and
\LabelRule{\ROUTPUTR}{\commR = \ioutput{e}{\chnl} \\
	\lreduce.\lmem = m \\
	\vec{v} = \constructV{m(e)}{\chnl}}
{\Delta, \lreduce.\lprogr\lcomma\commR, \loutput\lcomma O \Rightarrow \Delta,\lreduce.\lprogr\lcomma\textbf{skip}, \loutput\lcomma O.\vec{v}} \and
\LabelRule{\RWAKER}{\commR = \iwaker{\PREDD} \\
	S = \{i \in \{0,\dots,\TOP\}:\PREDE{i}\} \\
	\LECS = \bigcup_{i \in S} \{\ST[i].\litr\lcomma\signal, \ST[i].\lstate\lcomma\sS \} \\
	\LECS' = \bigcup_{i \in S} \{\ST[i].\litr\lcomma\NIL, \ST[i].\lstate\lcomma\sE \}}
{\Delta, \lreduce.\lprogr\lcomma\commR, \LECS  \Rightarrow \Delta, \lreduce.\lprogr\lcomma\textbf{skip},\LECS'} \and
\LabelRule{\RCLEAN}{\commR = \iclean{\chnl}{\PREDD} \\
	S = \{i \in \{0,\dots,\TOP\}:\PREDE{i}\} \\
	\LECS = \bigcup_{i \in S} \{\ST[i].\loutput\lcomma\Oid\} \\
	\LECS' = \bigcup_{i \in S} \{\ST[i].\loutput\lcomma\remove{\Oid}{\chnl} \}}
{\Delta, \lreduce.\lprogr\lcomma\commR, \LECS \Rightarrow \Delta,\lreduce.\lprogr\lcomma\textbf{skip},\LECS'}\and
\LabelRule{\RREDUCEINIT}{\lreduce.\lprogm\lcomma\iskip \\
	S = \{i \in \{0,\dots,\TOP\}:\INIREDUCEE{i}\} \\
	S \neq \emptyset\\\\ i = \select{S} \\
	\ST[i].\lprog = \ioutput{e}{\chnl};\comm}
{\Delta, \ST[i].\litr\lcomma\chnl,\lreduce.\lprogm\lcomma\iskip, \lreduce.\lmemr\lcomma m \\\\
 \Rightarrow
\Delta, \ST[i].\litr\lcomma\NIL, \lreduce.\lprogm\lcomma \ProgR(i,\chnl),\lreduce.\lmemm\lcomma m_0}
\end{mathpar}
\ACCORCIA
\caption{Semantics of instructions of \REDUCE}
\label{fig:Sem:REDUCE}
\end{center}
\end{figure}

The initial configuration of \REDUCE\ is $\{\lreduce.\lprogr\lcomma\NSKIP, \lreduce.\lmem\lcomma m_0 \}$. Similar to the execution of \MAP, the execution of \REDUCE\ is terminated if \NSKIP\ is the only instruction in the \REDUCE\ program.


We define the predicate \INIREDUCED\ indicating whether a local execution is sleeping on an output instruction.

\begin{eqnarray*}
\INIREDUCED & \triangleq & \lambda x. \ST[x].\lstate = \sS \wedge \exists \chnl \in \Cout:\ST[x].\litr = \intsig{\chnl} \wedge \nonumber \\
			& & {} \wedge \ST[x].\lprog = \ioutput{e}{\chnl};\comm
\end{eqnarray*}

In Fig.~\ref{fig:Sem:REDUCE} we present the \REDUCE\ activation rule ($\RREDUCEINIT$).  Similarly to \MAP, \REDUCE\ can be activated when there is only the \NSKIP\ instruction in the \REDUCE\ program, and there is an interrupt signal  \intsig{\chnl} from the local execution \Progl{i}, the state of this local execution is sleeping (\sS), the instruction to be executed is an output instruction. The activation of  \REDUCE\ on a signal on channel \chnl\ from \Progl{i} will remove the signal from configuration of \Progl{i}.

\section{Configurations for the Selected Properties}\label{sec:em}

In \cite{MANT-00-CSF}, Mantel proposes a uniform framework to define possibilistic information flow properties and he proves that existing possibilistic information flow properties can be expressed as a predefined basic security predicate (BSP) or conjunction of these BSPs. A BSP is generally defined in the framework of Mantel based on removal of some high inputs and events.



In the next sections, we will demonstrate configurations of our framework for enforcement of two BSPs, RI and DI, and the SME-style NI. It might not be obvious whether these properties are actually different in our model. We resolve possible doubts of the attentive reader in Sec.~\ref{sec:discussion:relationship}.

\begin{definition}
Let $\COND\ \triangleq \lambda \vec{v}. Cond()$ be a predicate and \CONDP{\vec{v}} be the result of the evaluation of \COND\ on $\vec{v}$. We define the restriction operator on the queue $Q$ with \COND\ as follows:
\begin{equation*}
\restrict{Q}{\COND} \triangleq \begin{cases}
	\emptyQ, &\text{if $Q = \emptyQ$};\\
	\vec{v}.\restrict{Q'}{\COND}, &\text{if $Q = \vec{v}.Q'$ and $\CONDP{\vec{v}} = \VTRUE$}; \\
	\restrict{Q'}{\COND}, &\text{if $Q = \vec{v}.Q'$ and $\CONDP{\vec{v}} = \VFALSE$}.
	\end{cases}
\end{equation*}
\end{definition}

We will use the notation \restrict{Q}{l}, the restriction on security level $l$, if $Cond(l) \triangleq \lambda \vec{v}. \exists \chnl:\vec{v}[\chnl] \neq \NIL \wedge \LVL[\chnl] = l$; and the notation \restrict{Q}{\chnl}, the restriction on channel \chnl, if $Cond(c) \triangleq \lambda \vec{v}. \vec{v}[\chnl] \neq \NIL$.


In the sequel, we will use the program described in Fig.~\ref{fig:example:source} to illustrate how the enforcement mechanism works for different information flow properties. The program has two high input channels \linecode{cH1}, \linecode{cH2}, and one high output channel \linecode{cH3}. With the execution of instructions at lines~\ref{example:HInfL1}, \ref{example:HInfL2}, \ref{example:LInfH}, and \ref{example:InputH2}, the secret values from \linecode{cH1} (line~\ref{example:InputH1}) and \linecode{cH2} (line~\ref{example:InputH2}) can influence the value sent to the low output channel \linecode{cL3} (line~\ref{example:outputL}). In addition, the sequences of high input items are effected by the low input (line~\ref{example:LInfH} and \ref{example:InputH2}); for example, if the value of \linecode{l1} is \VTRUE, an input item from \linecode{cH2} will be consumed.

We consider the execution of the program with the input sequence \qItem{cH1}{\VTRUE} \qItem{cL1}{\VFALSE} \qItem{cL2}{m} \qItem{cH2}{M}. 

\begin{figure}[!t]
\begin{javascript}
input h1 from cH1          (*\label{example:InputH1}*)
input l1 from cL1
if !h1 then                (*\label{example:HInfL1}*)
    l1 := !l1              (*\label{example:HInfL2}*)
input l2 from cL2
h2 := 0
if l1 then	               (*\label{example:LInfH}*)
    input h2 from cH2      (*\label{example:InputH2}*)
output l2 + h2 to cH3      (*\label{example:outputH}*)
output l2 + h2 to cL3	   (*\label{example:outputL}*)
\end{javascript}
\caption{Running Example Program}
\label{fig:example:source}
\end{figure}

\subsection{Removal of Inputs}\label{sec:em:RI}

The property of removal of inputs \cite{MANT-00-CSF} requires that if a possible trace is perturbed by removing all high input items, then the result can be corrected into a possible trace.
\begin{equation*} \label{equa:RI:Mantel}
\forall t \in Tr. \exists t' \in Tr. t|_L = t'|_L \wedge t'|_{HI} = \epsilon
\end{equation*}

In our notation, if all high input items in an input queue are replaced by default items or removed, the input queue can be modified to an input such that that the program will be terminated when executing on this input and the output generated will be equivalent at the low level with the original output.

\begin{definition}\label{def:RI}
A program $\Prog$ satisfies the property of \emph{removal of inputs} iff

\begin{multline*}
 \forall I, \forall\ values\ of\ \defVal: \execution{\Prog}{I}{O} \implies  \exists I': \loweq{I'}{I} \wedge I'|_H = (\defvec)^*  \wedge \\
    \wedge \forall \chnl \in \Cin, \length{I'|_c} \leq \length{I|_c} \wedge \execution{\Prog}{I'}{O'} \wedge \loweq{O'}{O},
\end{multline*}where the vector \defvec\ contains the default value, and \length{Q} returns the length of $Q$.
\end{definition}


The enforcement mechanism of the RI property on the program \Prog\ only needs two parallel programs: the high ($\Prog[0]$) and the low ($\Prog[1]$). We specify the full configuration of the local executions in Fig.~\ref{fig:table:RI}. The high execution can receive (real) input values from $L$ and $H$ channels, while the low execution can receive only (real) input values from $L$ channels. The high execution can write output values only to $H$ channels, the low execution can write values only to $L$ channels. If the interrupt signal is from \Progl{1}, or the interrupt signal is from \Progl{0} and the level of channel \chnl\ is $H$, then the input action will be performed. Otherwise, the local execution keeps sleeping.

\begin{figure}[!t]
\subfloat[\MAP\ for RI for an input from \chnl\ from \Progl{i}]{
\label{alg:MAP:RI}
\fbox{
\hspace{-12pt}
\begin{minipage}{0.45\columnwidth}
\begin{algorithmic}[1]
    \IF {$a \in \TAV[i][\chnl]$}                            \label{alg:RI:MAP:l1}
        \STATE \iinput{x}{\chnl} 							\label{alg:RI:MAP:l2}
        \STATE \imap{x}{\chnl}{\canMap{\chnl}}				\label{alg:RI:MAP:l3}
        \STATE \imap{\defVal}{\chnl}{\neg \canMap{\chnl}}	\label{alg:RI:MAP:l4}
        \STATE \iwake{\isReady{\chnl}}						\label{alg:RI:MAP:l5}
    \ELSE													\label{alg:RI:MAP:l6}
        \STATE \iskip										\label{alg:RI:MAP:l7}
    \ENDIF
\end{algorithmic}
\end{minipage}
}
}
\hfill
\subfloat[\REDUCE\ for RI for an output to \chnl\ from \Progl{i}]{
\label{alg:REDUCE:RI}
\fbox{
\hspace{-12pt}
\begin{minipage}{0.49\columnwidth}
\begin{algorithmic}[1]
    \STATE $x := \defVal$                      \label{alg:RI:REDUCE:l1}
	\IF{$\tput \in \TPV[i][\chnl]$}            \label{alg:RI:REDUCE:l2}
	    \STATE $\iinputr{x}{i}{c}$             \label{alg:RI:REDUCE:l3}
	\ENDIF
	\IF {$\ttell \in \TPV[i][\chnl]$}          \label{alg:RI:REDUCE:l4}
        \STATE $\ioutput{x}{\chnl}$            \label{alg:RI:REDUCE:l5}
    \ENDIF
    \STATE \iclean{\chnl}{\identical{i}}       \label{alg:RI:REDUCE:l6}
    \STATE \iwake{\identical{i}}               \label{alg:RI:REDUCE:l7}
\end{algorithmic}
\end{minipage}}
} \\
\subfloat[\TAV]{
\label{fig:table:RI:MAP}
\begin{tabular}{|c|c|c|}
\hline
~ & \tindex{0} & \tindex{1} \\
\hline
$\LVL[\chnl] = H$ & \tgetask & \task \\
\hline
$\LVL[\chnl] = L$ & \tget & \tgetask \\
\hline
\end{tabular}
}
\hfill
\subfloat[\TPV]{
\label{fig:table:RI:REDUCE}
\begin{tabular}{|c|c|c|}
\hline
~ &  \tindex{0} & \tindex{1} \\
\hline
$\LVL[\chnl] = H$ & \ttellput & \tnoaction \\
\hline
$\LVL[\chnl] = L$ & \tnoaction & \ttellput \\
\hline
\end{tabular}
}\\
\figdesc{RI prevents attackers from inferring what high input items have been read, since the removal and the replacement of high input items with fake ones have no effect on what is visible to attackers. Only the low execution in the enforcement mechanism can output to low channels. \MAP\ can perform input actions for all requests from the low execution. The low execution can only receive default values for high input items.}
\caption{Configuration for the enforcement mechanism of RI}
\label{fig:table:RI}
\end{figure}

The program of \MAP\ is described in Fig.~\ref{alg:MAP:RI}. The function $\canMap{c}$ indicates whether the local execution \Progl{x} can receive real values from \MAP:
\begin{equation}  \label{func:canMap}
\canMap{\chnl} \triangleq \lambda x. \ttell \in \TAV[x][\chnl]
\end{equation}

If a local execution that is sleeping and waiting for an input item from a channel has received the input item required, this local execution is ready to be waken up:
\begin{eqnarray}\label{func:isReady}
\isReady{c} & \triangleq & \lambda x. \ST[x].\lstate = \sS \wedge \ST[x].\lprog = \textbf{input}~ y ~\textbf{from}~ c;\comm \wedge \nonumber \\
			& & \wedge \ST[x].\linput = \Iid \wedge \dequeue{\Iid}{\chnl} = (\valueM, I') \wedge  \valueM \neq \NIL
\end{eqnarray}


When there is an interrupt signal on channel \intsig{\chnl} from \Progl{i} on an output instruction, the program \REDUCE\ provided in Fig.~\ref{alg:REDUCE:RI} is activated. If the local execution $\Progl{i}$ can send items to channel \chnl\ ($\TPV[i][\chnl] = 1$), the output action is performed. Otherwise, there is no output action. After that, the output queue of $\Prog[i]$ is cleaned and only $\Progl{i}$ is waken. Since the execution of the wake instruction wakes only \Progl{i} up, the function $\identical{}$ is defined as
\begin{equation} \label{func:identical}
	\identical{i} \triangleq \lambda x. x = i
\end{equation}

\paragraph{Example.}
We consider the tables \TAV\ and \TPV\ of the enforcement mechanism of RI for the program described in Fig.~\ref{fig:example:source}. In Figure~\ref{fig:table:RI:MAP:example} we show the \TAV\ instantiation for the given set of channels. Following Fig.~\ref{fig:table:RI:MAP}, on the channels \linecode{cH1} and \linecode{cH2}  we assign the ``$\task\ttell$'' permissions for the high execution \Progl{0}, but only the ``$\task$'' permission to the low execution \Progl{1}.  We also assign the permission ``$\ttell$'' on the low channels \linecode{cL1} and \linecode{cL2} for the high execution \Progl{0}, while we set ``$\task\ttell$'' for the low execution \Progl{1}. The table \TPV\ in Fig.~\ref{fig:table:RI:REDUCE:example} is configured similarly, following Fig.~\ref{fig:table:RI:REDUCE}.

\begin{figure}
\centering
\subfloat[\TAV]{
\label{fig:table:RI:MAP:example}
\begin{minipage}{0.30\columnwidth}
\centering
\begin{tabular}{|c|c|c|}
\hline
~ & \tindex{0} & \tindex{1} \\
\hline
\linecode{cH1} & \tgetask & \task \\
\hline
\linecode{cH2} & \tgetask & \task \\
\hline
\linecode{cL1} & \tget & \tgetask \\
\hline
\linecode{cL2} & \tget & \tgetask \\
\hline
\end{tabular}
\end{minipage}
} 
\subfloat[\TPV]{
\label{fig:table:RI:REDUCE:example}
\begin{minipage}{0.30\columnwidth}
\centering
\vspace{25pt}
\begin{tabular}{|c|c|c|}
\hline
~ &  \tindex{0} & \tindex{1} \\
\hline
\linecode{cH3} & \ttellput & \tnoaction \\
\hline
\linecode{cL3} & \tnoaction & \ttellput \\
\hline
\end{tabular}
\end{minipage}
}
\caption{Example tables \TAV\ and \TPV\ for RI}
\label{fig:example:execution:RI:TRTR}
\end{figure}

Examples of executions of the high execution \Progl{0} and the low execution \Progl{1} are shown in Fig.~\ref{fig:example:execution:RI:HLEx}. Here, we assume that the high execution runs faster. At line~\ref{example:RI:HEx:l1}, \Progl{0} sends a request to \MAP\ and moves to the state \sS. \MAP\ is activated and the code from line~\ref{alg:RI:MAP:l1} to line~\ref{alg:RI:MAP:l5} in Fig.~\ref{alg:MAP:RI} is executed, since the permission ``$\task$'' is in $\TAV[0][\linecode{cH1}]$. Line~\ref{alg:RI:MAP:l2} reads the value \VTRUE\ from the global queue; line~\ref{alg:RI:MAP:l3} sends \VTRUE\ to the local input queue of \Progl{0}, since \canMap{\chnl} returns \VTRUE\ with \Progl{0}; line~\ref{alg:RI:MAP:l4} sends the default value \VFALSE\ to the local input queue of \Progl{1}; line~\ref{alg:RI:MAP:l5} wakes \Progl{0} up, since \isReady{\linecode{cH1}} returns \VTRUE\ with \Progl{0}. The high execution \Progl{0} is waken up and it continues to execute at line~\ref{example:RI:HEx:l1}. At this time, since there is a value from \linecode{cH1} in the local input queue, the execution of the input instruction follows the rule \RINPUTYL. Next, line~\ref{example:RI:HEx:l2} in Fig.~\ref{fig:example:RI:execution:H} is executed. The high execution moves to the state \sS\ and waits for an input item from \linecode{cL1}.

\begin{figure}[!t]
\centering
\input{RI-Example-H}
\subfloat[The high execution \Progl{0}]{\label{fig:example:RI:execution:H}\usebox{\mylistingbox}}%
\input{RI-Example-L}
\hfill
\subfloat[The low execution \Progl{1}]{\label{fig:example:RI:execution:L}\usebox{\mylistingbox}}%
\caption{Executions of local copies for RI}
\label{fig:example:execution:RI:HLEx}
\end{figure}

The low execution starts executing. The execution of line~\ref{example:RI:LEx:l1} follows the rule \RINPUTYL\ since there is an item (with default value) from \linecode{cH1} in the local input queue. The execution of line~\ref{example:RI:LEx:l2} moves the low execution to the state \sS. \MAP\ is activated, and it consumes \VFALSE\ from \linecode{cL1} (line~\ref{alg:RI:MAP:l2}), sends \VFALSE\ to both local input queues (line~\ref{alg:RI:MAP:l3}), and wakes the low execution \Progl{1} up (line~\ref{alg:RI:MAP:l5}). The execution of line~\ref{alg:RI:MAP:l4} does nothing since there is no local execution that can make $\neg\canMap{cL1}$ be true. After that, the low execution keeps executing.

Because values of \linecode{h1} and \linecode{l1} are respectively \VTRUE\ and \VFALSE, the high execution executes instructions at lines~\ref{example:RI:HEx:l3}, \ref{example:RI:HEx:l5}, \ref{example:RI:HEx:l6}, \ref{example:RI:HEx:l7}, \ref{example:RI:HEx:l9}, and \ref{example:RI:HEx:l10}. At line~\ref{example:RI:HEx:l9} the high execution moves to the state \sS\ and then \REDUCE\ is activated. The program described in Fig.~\ref{alg:REDUCE:RI} is executed. Since the permission ``$\task\ttell$'' is in \TPV[0][\linecode{cH3}], \REDUCE\ retrieves the value generated by the high execution (line~\ref{alg:RI:REDUCE:l3}), sends the value to the global output queue (line~\ref{alg:RI:REDUCE:l5}), cleans the output queue of \Progl{1} (line~\ref{alg:RI:REDUCE:l6}), and wakes the high execution \Progl{0} up (line~\ref{alg:RI:REDUCE:l7}).

\begin{figure}[!t]

\begin{lrbox}{\mylistingbox}%
\begin{tabular}{l}
\hspace{3pt}
\begin{minipage}{183pt}
The input to \MAP:\\
\begin{tabular}{|c|c|c|c|c|}
	\hline
	\backslashbox{Channel}{Time} & 0 & 1 & 2 & 3  \\
    \hline
	\linecode{cH1} & $\VTRUE$ & \NIL & \NIL & \NIL\\
	\hline
	\linecode{cH2} & \NIL & \NIL & \NIL & $M$ \\
	\hline
	\linecode{cL1} & \NIL & \VFALSE & \NIL & \NIL\\
	\hline
	\linecode{cL2} & \NIL & \NIL & $m$ & \NIL\\
	\hline 
\end{tabular}  
\end{minipage} $\Longrightarrow$ \MAP \\

\begin{tabular}{l}
Local Executions: \\
\begin{tabular}{|l|}
\hline
The high execution \Progl{0}:\\
\begin{tabular}{p{110pt}p{180pt}}
The local input: & The local output:\\
\begin{tabular}{|c|c|c|c|c|}
	\hline
	\linecode{cH1} & $\VTRUE$ & \NIL & \NIL & \NIL\\
	\hline
	\linecode{cH2} & \NIL & \NIL & \NIL & $M$\\
	\hline
	\linecode{cL1} & \NIL & \VFALSE & \NIL & \NIL\\
	\hline
	\linecode{cL2} & \NIL & \NIL & $m$ & \NIL\\
	\hline
\end{tabular} & 
\begin{minipage}{105pt}
\begin{tabular}{|c|c|c|c|c|>{\centering\arraybackslash}p{25pt}|>{\centering\arraybackslash}p{25pt}|}
	\hline	
	\linecode{cH3} & \NIL & \NIL & \NIL & \NIL & $m$ & \NIL\\
	\hline
	\linecode{cL3} & \NIL & \NIL & \NIL & \NIL & \NIL & $m$\\
	\hline 	
\end{tabular}
\vspace{25pt}
\end{minipage} \\

\end{tabular} \\\\
\hline
The low execution \Progl{1}:\\
\begin{tabular}{p{110pt}p{180pt}}
The local input: & The local output:\\
\begin{tabular}{|c|c|c|c|c|}
	\hline
	\linecode{cH1} & $\VFALSE$ & \NIL & \NIL & \NIL\\
	\hline
	\linecode{cH2} & \NIL & \NIL & \NIL & $*$\\
	\hline
	\linecode{cL1} & \NIL & \VFALSE & \NIL & \NIL\\
	\hline
	\linecode{cL2} & \NIL & \NIL & $m$ & \NIL\\
	\hline
\end{tabular} &
\begin{minipage}{180pt}
\begin{tabular}{|c|c|c|c|c|>{\centering\arraybackslash}p{25pt}|>{\centering\arraybackslash}p{25pt}|}
	\hline	
	\linecode{cH3} & \NIL & \NIL & \NIL & \NIL & $* + m$ & \NIL\\
	\hline
	\linecode{cL3} & \NIL & \NIL & \NIL & \NIL & \NIL & $* + m$\\
	\hline
\end{tabular}
\vspace{24pt}
\end{minipage}
\end{tabular} \\\\ \hline
\end{tabular} 
\end{tabular} \\
\vspace{3pt}
\hspace{16pt}
\REDUCE\ $\Longrightarrow$ 
\begin{minipage}{180pt}
~\\
The output by \REDUCE:\\
\begin{tabular}{|c|c|c|c|c|>{\centering\arraybackslash}p{25pt}|>{\centering\arraybackslash}p{25pt}|}
	\hline
	\backslashbox{Channel}{Time} & 0 & 1 & 2 & 3 & 4 & 5  \\
	\hline	
	\linecode{cH3} & \NIL & \NIL & \NIL & \NIL & $m$ & \NIL\\
	\hline
	\linecode{cL3} & \NIL & \NIL & \NIL & \NIL & \NIL & $* + m$\\
	\hline
\end{tabular}
\end{minipage}
\end{tabular}
\end{lrbox}


\usebox{\mylistingbox}
\caption{Example of input and output queues for RI}
\label{fig:example:RI:execution:io}
\end{figure}

Since the value of both \linecode{h1} and \linecode{l1} is \VFALSE, the instructions at lines~\ref{example:RI:LEx:l3}, \ref{example:RI:LEx:l4}, \ref{example:RI:LEx:l5}, \ref{example:RI:LEx:l6}, \ref{example:RI:LEx:l7}, \ref{example:RI:LEx:l8}, \ref{example:RI:LEx:l9}, and \ref{example:RI:LEx:l10} of the low execution \Progl{1} are executed. At line~\ref{example:RI:LEx:l8} the low execution moves to the state \sS\ and \MAP\ is activated. Because the permission ``$\task$'' is in $\TAV[1][\linecode{cH2}]$, \MAP\ reads $M$ from the global input queue, sends $M$ and a default value ($*$) to the local input queues of the high and the low execution respectively, and wakes the low execution up. When executing the output instruction at line~\ref{example:RI:LEx:l9} the low execution moves to the state \sS, and \REDUCE\ is activated. Since neither of the permissions ``$\task$'' and ``$\ttell$'' is in $\TPV[1][\linecode{cH3}]$, \REDUCE\ does not retrieve any item from the local output queue of the low execution and does not send any value to the global queue. \REDUCE\ cleans the output generated by \Progl{1} and wakes \Progl{1} up. The execution of the output instruction at line~\ref{example:RI:LEx:l10} is similar to the execution of the output instruction at line~\ref{example:RI:HEx:l9} of the high execution.

We describe the global input, output queues, and local input, output queues in Fig.~\ref{fig:example:RI:execution:io}. The global input queue is consumed completely by the execution of the enforcement mechanism. The values sent to \linecode{cH3} and \linecode{cL3} are respectively $m$ and $* + m$. Each column in the table corresponds to an input/output operation. Input and output tables should be read from left to right; columns describe the input/output to each channel at time $t = 0$, $t = 1$, etc.

\subsection{Non-Interference}\label{sec:em:NI}
The enforcement mechanism configured in this section mimics the SME-style enforcement of non-interference \cite{Devr-Pies-10-IEEESP} from Devriese and Piessens, and therefore inherits also the limitations of SME formal guarantees.

\begin{figure}[!t]
\subfloat[\MAP\ for SME for input from \chnl\ requested from \Progl{i}]{
\label{alg:MAP:NI:SME}
\fbox{
\hspace{-12pt}
\begin{minipage}{0.47\columnwidth}
\begin{algorithmic}[1]

    \IF {$\task \in \TAV[i][\chnl]$}                        \label{alg:NI:MAP:l1}
        \STATE \iinput{x}{\chnl}                     	    \label{alg:NI:MAP:l3}
        \STATE \imap{x}{\chnl}{\NCMAP(c)}               	\label{alg:NI:MAP:l4}
        \STATE \imap{\defVal}{\chnl}{\neg\NCMAP(c)}     	\label{alg:NI:MAP:l5}
        \STATE \iwake{\isReady{\chnl}}                  	\label{alg:NI:MAP:l6}
    \ELSE                                               	\label{alg:NI:MAP:l7}
    	\IF {$\ttell \not\in \TAV[i][\chnl]$}				\label{alg:NI:MAP:l2}
            \STATE \imap{\defVal}{\chnl}{\identical{i}}     \label{alg:NI:MAP:l8}
            \STATE \iwake{\identical{i}}                    \label{alg:NI:MAP:l9}
        \ELSE                                               \label{alg:NI:MAP:l10}
	        \STATE \iskip                                   \label{alg:NI:MAP:l11}
	    \ENDIF
    \ENDIF

\end{algorithmic}
\end{minipage}
}
}
\hfill
\subfloat[\REDUCE\ for SME for an output to \chnl\ from \Progl{i}]{
\label{alg:REDUCE:NI}
\begin{minipage}{0.47\columnwidth}
\vspace{45pt}
\fbox{
\hspace{-12pt}
\begin{minipage}{\columnwidth}
\begin{algorithmic}[1]
    \STATE $x := \defVal$
	\IF{$\tput \in \TPV[i][\chnl]$}
	    \STATE $\iinputr{x}{i}{c}$
	\ENDIF
	\IF {$\ttell \in \TPV[i][\chnl]$}
        \STATE $\ioutput{x}{\chnl}$
    \ENDIF
    \STATE \iclean{\chnl}{\identical{i}}
    \STATE \iwake{\identical{i}}
\end{algorithmic}
\end{minipage}}
\end{minipage}
} \\
\subfloat[\TAV]{
\label{fig:table:NI:MAP}
\begin{tabular}{|c|c|c|}
\hline
~ & \tindex{0} & \tindex{1} \\
\hline
$\LVL[\chnl] = H$ & \tgetask & \tnoaction \\
\hline
$\LVL[\chnl] = L$ & \tget & \tgetask \\
\hline
\end{tabular}
}
\hfill
\subfloat[\TPV]{
\label{fig:table:NI:REDUCE}
\begin{tabular}{|c|c|c|}
\hline
~ &  \tindex{0} & \tindex{1} \\
\hline
$\LVL[\chnl] = H$ & \ttellput & \tnoaction \\
\hline
$\LVL[\chnl] = L$ & \tnoaction & \ttellput \\
\hline
\end{tabular}
}\\
\figdesc{NI prevents attackers from inferring high input items, since all executions on inputs that are low-equivalent will generate low-equivalent outputs. The low execution \Progl{1} cannot ask values from high input channels or be told  real value from these channels. \MAP\ will not perform any input actions on requests for high inputs items from the low execution. \TPV\ and the program of \REDUCE\ are the same as in the enforcement mechanism of RI.}
\caption{Configuration for the enforcement mechanism of NI}
\label{fig:table:NI}
\end{figure}

Informally, a program satisfies the termination-insensitive non-interference (TINI) property if given two arbitrary inputs that are equivalent at the low level and the executions of the program on these two inputs are terminated, then the outputs generated are indistinguishable to the users at the low level. In other words, the high input items in these two inputs have no effect on  what observable is to users at the low level. Termination-sensitive non-interference (TSNI) additionally requires that the secret input items do not influence the termination of the program \cite{Bart-Arge-Rezk-11-MSCS}.

\begin{definition}\label{def:NI}
A program $\Prog$ satisfies the property of \emph{termination-insensitive non-interference} iff
\begin{equation*}
\forall I, I': \loweq{I'}{I} \implies \loweq{O'}{O},
\end{equation*}
where \execution{\Prog}{I}{O} and \execution{\Prog}{I'}{O'}.
\end{definition}

\begin{definition}\label{def:TSNI}
A program $\Prog$ satisfies the property of \emph{termination-sensitive non-interference} iff
\begin{equation*}
 \forall I, I': \loweq{I'}{I} \wedge \execution{\Prog}{I}{O} \implies \execution{\Prog}{I'}{O'} \wedge \loweq{O'}{O}
\end{equation*}
\end{definition}

To implement the SME approach \cite{Devr-Pies-10-IEEESP}, we use the following configuration. The high execution $\Progl{0}$ can only read high input items, but it can consume both high and low ones. For low input items this local execution needs to wait for the values read by the low execution \Progl{1}. The low execution $\Progl{1}$ can read and consume only low items. If the low execution requires a high input item, the default value will be used.

The configuration tables \TAV\ and \TPV\ and the program for \MAP, and \REDUCE\ to enforce the SME-style NI are presented in Fig.~\ref{fig:table:NI}. Compared to the program for \MAP\ in the enforcement mechanism of RI, the program for \MAP\ is different, as shown in Fig.~\ref{alg:MAP:NI:SME}. The functions $\canMap{\chnl}$, $\isReady{\chnl}$ and $\identical{i}$ are defined in respectively Eq.~\ref{func:canMap},~\ref{func:isReady} and \ref{func:identical} in~\S.~\ref{sec:em:RI}.

\paragraph{Example.}
The contents of the tables \TAV\ and \TPV\ for the enforcement mechanism of NI for the running example program in Fig.~\ref{fig:example:source} follow the patterns described in Fig.~\ref{fig:table:NI:MAP} and Fig.~\ref{fig:table:NI:REDUCE} respectively. 

\begin{figure}[!t]
\input{NI-Example-H}
\subfloat[The high execution \Progl{0}]{\label{fig:example:NI:execution:H}\usebox{\mylistingbox}}%
\input{NI-Example-L}\\
\subfloat[The low execution \Progl{1}]{\label{fig:example:NI:execution:L}\usebox{\mylistingbox}}%
\\
\figdesc{When the low execution \Progl{1} asks \MAP\ an item from \linecode{cH2}, \MAP\ does not perform any input operation and sends a default value to the local input queue of the low execution}
\caption{Executions of local copies for NI}
\label{fig:example:execution:NI:HLUEx}
\end{figure}

The executions of the high execution and the low execution are described in Fig.~\ref{fig:example:execution:NI:HLUEx}. The execution of the high program is similar to the execution of the high program in RI. The execution of the low execution is almost the same as the execution of the low execution in RI, except for handling the input instruction at line~\ref{example:NI:LEx:l8}.

When the input instruction at line~\ref{example:NI:LEx:l8} is executed, the low execution moves to the state \sS\ and sends a signal to \MAP. When \MAP\ is activated on this signal, the instructions at lines~\ref{alg:NI:MAP:l1}, \ref{alg:NI:MAP:l2}, \ref{alg:NI:MAP:l8}, \ref{alg:DI:MAP:l9} in the program described in Fig.~\ref{alg:MAP:NI:SME} are executed. \MAP\ sends a default value to the local input queue of the low execution (line~\ref{alg:DI:MAP:l8}), and wakes this local execution up (line~\ref{alg:DI:MAP:l9}).

We describe the global input, output queues, and local input, output queues in Fig.~\ref{fig:example:execution:NI:IO}. Compared to the execution of the enforcement mechanism of RI, the execution of the enforcement of NI consumes only three input items. The fourth input item is kept in the figure for the illustrative purposes and in order to make a comparison with the enforcement of RI. The local input queue of the high execution contains only three input items; these are two low input items requested by the low execution and one high input item requested by the high execution. The local input queue of the low execution \Progl{1} contains four input items, where the last one is the default item generated and sent by \MAP.

\begin{figure}[!t]
\begin{lrbox}{\mylistingbox}%
\begin{tabular}{l}
\begin{minipage}{183pt}
The input to \MAP:\\
\emph{(The last input item is not consumed.)} 
\begin{tabular}{|c|c|c|c|c|}
	\hline
	\backslashbox{Channel}{Time} & 0 & 1 & 2 & 3 \\
    \hline
	\linecode{cH1} & $\VTRUE$ & \NIL & \NIL & \NIL\\
	\hline
	\linecode{cH2} & \NIL & \NIL & \NIL & $M$ \\
	\hline
	\linecode{cL1} & \NIL & \VFALSE & \NIL & \NIL\\
	\hline
	\linecode{cL2} & \NIL & \NIL & $m$ & \NIL\\
	\hline 
\end{tabular} \\
\end{minipage} $\Longrightarrow$ \MAP \\
Local Executions: \\
\begin{tabular}{|l|}
\hline
The high execution \Progl{0}:\\
\begin{tabular}{p{110pt}p{180pt}}
The local input: & The local output:\\
\begin{tabular}{|c|c|c|c|c|}
	\hline
	\linecode{cH1} & $\VTRUE$ & \NIL & \NIL \\
	\hline
	\linecode{cH2} & \NIL & \NIL & \NIL \\
	\hline
	\linecode{cL1} & \NIL & \VFALSE & \NIL \\
	\hline
	\linecode{cL2} & \NIL & \NIL & $m$ \\
	\hline
\end{tabular} & 
\begin{minipage}{180pt}
\begin{tabular}{|c|c|c|c|c|>{\centering\arraybackslash}p{25pt}|>{\centering\arraybackslash}p{25pt}|}
	\hline	
	\linecode{cH3} & \NIL & \NIL & \NIL & \NIL & $m$ & \NIL\\
	\hline
	\linecode{cL3} & \NIL & \NIL & \NIL & \NIL &  \NIL & $m$\\
	\hline 	
\end{tabular}
\vspace{25pt}
\end{minipage} \\

\end{tabular} \\\\
\hline
The low execution \Progl{1}:\\
\begin{tabular}{p{110pt}p{105pt}}
The local input: & The local output:\\
\begin{tabular}{|c|c|c|c|c|}
	\hline
	\linecode{cH1} & $\VFALSE$ & \NIL & \NIL & \NIL\\
	\hline
	\linecode{cH2} & \NIL & \NIL & \NIL & $*$\\
	\hline
	\linecode{cL1} & \NIL & \VFALSE & \NIL & \NIL\\
	\hline
	\linecode{cL2} & \NIL & \NIL & $m$ & \NIL\\
	\hline
\end{tabular} &
\begin{minipage}{180pt}
\begin{tabular}{|c|c|c|c|c|>{\centering\arraybackslash}p{25pt}|>{\centering\arraybackslash}p{25pt}|}
	\hline	
	\linecode{cH3} & \NIL & \NIL & \NIL & \NIL & $* + m$ & \NIL\\
	\hline
	\linecode{cL3} & \NIL & \NIL & \NIL & \NIL & \NIL & $* + m$\\
	\hline
\end{tabular}
\vspace{24pt}
\end{minipage}
\end{tabular} \\\\ \hline

\end{tabular}  \\
\vspace{3pt}
\hspace{9pt}
\REDUCE\ $\Longrightarrow$ 
\begin{minipage}{180pt}
~\\
The output by \REDUCE:\\
\begin{tabular}{|c|c|c|c|c|>{\centering\arraybackslash}p{25pt}|>{\centering\arraybackslash}p{25pt}|}
	\hline
	\backslashbox{Channel}{Time} & 0 & 1 & 2 & 3 & 4 & 5 \\
	\hline	
	\linecode{cH3} & \NIL & \NIL & \NIL & \NIL & $m$ & \NIL\\
	\hline
	\linecode{cL3} & \NIL & \NIL & \NIL & \NIL & \NIL & $* + m$\\
	\hline
\end{tabular}
\end{minipage}
\end{tabular}
\end{lrbox}
\usebox{\mylistingbox}
\caption{Example of input and output queues for NI}
\label{fig:example:execution:NI:IO}
\end{figure}


\subsection{Deletion of Inputs} \label{sec:em:DI}
The property of deletion of inputs (DI) \cite{MANT-00-CSF} requires that if we perturb a possible trace $t$ (where $t = \beta.e.\alpha$ and there is no high input event in $\alpha$) by deleting the high input event $e$, then the result can be corrected into a possible trace $t'$ ($t' = \beta'.\alpha'$). The parts $\beta$ and $\beta'$ are equivalent on the low input events and the high input events. In other words, the low input events and the high input events in $\beta$ and $\beta'$ must be the same. The parts $\alpha$ and $\alpha'$ are also equivalent on the low events and the high input events. Since there is no high input events in $\alpha$, there is also no high input events in $\alpha'$.

\begin{multline*}\label{equa:DI:Mantel}
\forall t \in Tr, \forall \alpha, \beta \in E^*, \forall e \in E. (e \in HI \wedge t =\beta.e.\alpha \wedge\ \alpha|_{HI} = \epsilon) \implies \\ (\exists t' \in Tr. t|_L = t'|_L  \wedge\ t' = \beta'.\alpha'  \wedge \alpha'|_{L \cup HI} = \alpha|_{L \cup HI} \wedge \beta'|_{L \cup HI} = \beta|_{L \cup HI})
\end{multline*}

\begin{figure}[!t]
\centering
\subfloat[\MAP\ for DI for an input from \chnl\ from \Progl{i}]{
\label{alg:MAP:DI}
\fbox{
\hspace{-12pt}
\begin{minipage}{0.58\columnwidth}
\begin{algorithmic}[1]
    \IF{$\LVL[\chnl] == H$ \AND $i == 0$}                \label{alg:DI:MAP:l1}
        \STATE \iclone{\identical{i}}{\tcolm}{\tcolr}    \label{alg:DI:MAP:l2}
    \ENDIF
    \IF {$\task \in \TAV[i][\chnl]$}					 \label{alg:DI:MAP:l3}
        \IF{$\tget \in \TAV[i][\chnl]$}					 \label{alg:DI:MAP:l4}
	        \STATE \iinput{x}{\chnl}					 \label{alg:DI:MAP:l5}
	        \STATE \imap{x}{\chnl}{\NCMAP(c)}			 \label{alg:DI:MAP:l6}
	        \STATE \imap{\defVal}{\chnl}{\neg\NCMAP(c)}	 \label{alg:DI:MAP:l7}
            \STATE \iwake{\isReady{\chnl}}				 \label{alg:DI:MAP:l8}
        \ELSE											 \label{alg:DI:MAP:l9}
            \STATE \imap{\defVal}{\chnl}{\identical{i}}  \label{alg:DI:MAP:l10}
            \STATE \iwake{\identical{i}}				 \label{alg:DI:MAP:l11}
        \ENDIF	
    \ELSE												 \label{alg:DI:MAP:l12}
        \STATE \iskip									 \label{alg:DI:MAP:l13}
    \ENDIF
\end{algorithmic}
\end{minipage}
}
} \hfill
\subfloat[\REDUCE\ for DI for an output to \chnl\ from \Progl{i}]{
\label{alg:REDUCE:DI}
\begin{minipage}{0.37\columnwidth}
\vspace{70pt}
\fbox{
\hspace{-12pt}
\begin{minipage}{1\columnwidth}
\begin{algorithmic}[1]
    \STATE $x := \defVal$              \label{alg:DI:REDUCE:l1}
	\IF{$\tput \in \TPV[i][\chnl]$}    \label{alg:DI:REDUCE:l2}
	    \STATE $\iinputr{x}{i}{c}$     \label{alg:DI:REDUCE:l3}
	\ENDIF
	\IF {$\ttell \in \TPV[i][\chnl]$}  \label{alg:DI:REDUCE:l4}
        \STATE $\ioutput{x}{\chnl}$    \label{alg:DI:REDUCE:l5}
    \ENDIF
    \STATE \iclean{\chnl}{\identical{i}} \label{alg:DI:REDUCE:l6}
    \STATE \iwake{\identical{i}}         \label{alg:DI:REDUCE:l7}
\end{algorithmic}
\end{minipage}}
\end{minipage}} \\
\subfloat[\TAV]{
\label{fig:table:DI:MAP}
\begin{tabular}{|c|>{\centering\arraybackslash}p{13pt}|>{\centering\arraybackslash}p{13pt}|>{\centering\arraybackslash}p{34pt}|}
\hline
~ & \tindex{0} & \tindex{1} & $\tindex{i} > 1$\\
\hline
$\LVL[\chnl] = H$ & \tgetask & \task & \task \\
\hline
$\LVL[\chnl] = L$ & \tget & \tgetask & \tget\\
\hline
\end{tabular}
} \hfill
\subfloat[\TPV]{
\label{fig:table:DI:REDUCE}
\begin{tabular}{|c|>{\centering\arraybackslash}p{13pt}|>{\centering\arraybackslash}p{13pt}|>{\centering\arraybackslash}p{34pt}|}
\hline
~ &  \tindex{0} & \tindex{1} & $\tindex{i} > 1$\\
\hline
$\LVL[\chnl] = H$ & \ttellput & \tnoaction & \tnoaction \\
\hline
$\LVL[\chnl] = L$ & \tnoaction & \ttellput & \tnoaction \\
\hline
\end{tabular}
}\\
\figdesc{DI prevents attackers from deducing the value of the last high input item since the mechanism makes sure that if the last high input item is replaced with a fake item, the observation of the attackers is still the same. Whenever the high execution requests a high input item, the high execution will be cloned and the clone can only receive default values for high input items. When receiving a request from the low execution for a high input item, \MAP\ will return the default value. The program of \REDUCE\ is the same as in the enforcement mechanism of RI.}
\caption{Configuration for the enforcement mechanism of DI}
\label{fig:table:DI}
\end{figure}

In our notation, if we have an input queue $I$ = $I_1.\vec{v}.I_2$, where $\vec{v}$ contains a value from a high channel and in $I_2$ there are either no high input items or only high input items with default values, then this input queue can be changed by replacing $\vec{v}$ by the default vector. The obtained input queue can be sanitized by removing existing default high input items in $I_2$ or adding other default high input items to $I_2$. The sanitized queue can be consumed completely by a clone of the original program and the output should still be equivalent at the low level to the original output generated with the input $I$.

\begin{definition}\label{def:DI}
A program \Prog ~satisfies the property of \emph{deletion of inputs} $DI$ iff
\begin{multline*}
	\forall I, \forall\ values\ of\ \defVal: I = I_1.\vec{v}.I_2 \wedge LVL[\chnl] = H \wedge \defseq{I_2|_H} \wedge \execution{\Prog}{I}{O} \\ \implies
	\exists I': I' = I_1'.I_2' \wedge \loweq{I'}{I} \wedge \defseq{I_2'|_H} \wedge \execution{\Prog}{I'}{O'} \wedge \loweq{O'}{O},
\end{multline*} where $\vec{v}[\chnl] \neq \NIL$ and the vector \defvec\ contains a default value.
\end{definition}

DI is enforced with the idea that whenever the high execution requests a high input item, this execution will be cloned and the clone will not receive real values from high channels. Components configured to implement the enforcement mechanism of DI are presented in Fig.~\ref{fig:table:DI}. The program of \REDUCE\ is identical to the program used in the enforcement mechanism of RI.

The enforcement mechanism of DI requires more than two local executions. Only the high execution \Progl{0} can ask for and get the high input items, other local executions will only use default values. Each time the high execution is cloned the new execution is inserted into the stack of local executions. The configuration of the clones for input (respectively, output) is presented in Fig.~\ref{fig:table:DI:MAP} (respectively, \ref{fig:table:DI:REDUCE}) in the column $\tindex{i} > 1$; this is the privilege configuration template $\tcolm$ ($\tcolr$, respectively). In addition, only the low execution \Progl{1} can ask for low input items and generate low output items; other local executions will reuse the low input items retrieved by the low execution.

\paragraph{Example.}
Fig.~\ref{fig:example:execution:DI:TRTR} depicts \TAV\ and \TPV\  of the DI enforcement mechanism for the program described in Fig.~\ref{fig:example:source} that are instances of the tables described in Fig.~\ref{fig:table:NI:MAP} and Fig.~\ref{fig:table:NI:REDUCE} respectively. When the enforcement mechanism starts executing, there are only two columns (\Progl{0} and \Progl{1}) in each table. The third column is appended when the corresponding local execution is created. Following Fig.~\ref{fig:table:DI:MAP}, the new execution has the ``$\task$'' permission on high input channels (\linecode{cH1} and \linecode{cH2}) and the ``$\ttell$'' permission on low input channels (\linecode{cL1} and \linecode{cL2}). The permissions of the new local execution on output channels are configured in a similar way from the column $\tindex{i} > 1$ in Fig.~\ref{fig:table:DI:REDUCE}.

\begin{figure}
\centering
\subfloat[\TAV]{
\label{fig:table:DI:MAP:example}
\begin{tabular}{|c|c|c|c|}
\hline
~ & \tindex{0} & \tindex{1} & \tindex{2}\\
\hline
\linecode{cH1} & \tgetask & \task & \task\\
\hline
\linecode{cH2} & \tgetask & \task & \task\\
\hline
\linecode{cL1} & \tget & \tgetask & \tget\\
\hline
\linecode{cL2} & \tget & \tgetask & \tget\\
\hline
\end{tabular}
}
\subfloat[\TPV]{
\label{fig:table:DI:REDUCE:example}
\begin{minipage}{0.25\columnwidth}
\vspace{25pt}
\begin{tabular}{|c|c|c|c|}
\hline
~ &  \tindex{0} & \tindex{1} & \tindex{2}\\
\hline
\linecode{cH3} & \ttellput & \tnoaction & \tnoaction\\
\hline
\linecode{cL3} & \tnoaction & \ttellput & \tnoaction\\
\hline
\end{tabular}
\end{minipage}}
\caption{Example tables \TAV\ and \TPV\ for DI}
\label{fig:example:execution:DI:TRTR}
\end{figure}

The executions of local executions are described in Fig.~\ref{fig:example:execution:DI:HLUEx}. When line~\ref{example:DI:HEx:l1} of the high program is executed, the high execution moves to the state \sS; the program of \MAP\ in Fig.~\ref{alg:MAP:DI} is activated (the contents of \tcolm\ and \tcolr\ used at line~\ref{alg:DI:MAP:l2} are from the columns $\tindex{1} > 1$ in Fig.~\ref{fig:table:DI:MAP} and Fig.~\ref{fig:table:DI:REDUCE} respectively). Line~\ref{alg:DI:MAP:l2} creates a new clone of \Progl{0}, which is referred to as \Progl{2}. After that, \MAP\ consumes \VTRUE\ from the global input queue (line~\ref{alg:DI:MAP:l5}), sends \VTRUE\ to the local input queue of the high execution \Progl{0} (line~\ref{alg:DI:MAP:l6}), broadcasts \VFALSE\ to the local input queues of  \Progl{1} and \Progl{2} (line~\ref{alg:DI:MAP:l7}), and wakes \Progl{0} and \Progl{2} up (line~\ref{alg:DI:MAP:l8}).

\begin{figure}[!t]
\input{DI-Example-H}
\subfloat[The high execution \Progl{0}]{\label{fig:example:DI:execution:H}\usebox{\mylistingbox}}%
\input{DI-Example-L}\\
\subfloat[The low execution \Progl{1}]{\label{fig:example:DI:execution:L}\usebox{\mylistingbox}}%
\input{DI-Example-U} \\
\subfloat[The other execution \Progl{2}]{\label{fig:example:DI:execution:U}\usebox{\mylistingbox}}
\\
\figdesc{The execution \Progl{2} is created when the high execution requests the first high input item. When the low execution \Progl{1} (or the other execution \Progl{2}) asks \MAP\ for an item from \linecode{cH2}, \MAP\ does not perform any input operation, but it sends a default value to the local input queue of the low execution (respectively, the other local execution). }
\caption{Executions of local copies for DI}
\label{fig:example:execution:DI:HLUEx}
\end{figure}

When line~\ref{example:DI:HEx:l2} of \Progl{0} is executed, the high execution moves to the state \sS\ and waits for the corresponding input item to be requested by the low execution. After the low execution executes the input instruction at line~\ref{example:DI:LEx:l2} and \VFALSE\ is received from \linecode{cL1}, the high execution is waken up. Next, the instructions at lines~\ref{example:DI:HEx:l3}, \ref{example:DI:HEx:l5}, \ref{example:DI:HEx:l7}, \ref{example:DI:HEx:l9}, and \ref{example:DI:HEx:l10} are executed.

The execution of the low execution \Progl{1} is the same as the execution of the low program in NI. The execution of \Progl{2} follows the same pattern as the execution of \Progl{1}, with the only difference that also the result of the output instruction at line~\ref{example:DI:UEx:l10} is ignored by \REDUCE.

The input consumed and the output generated by the enforcement mechanism, the local input and output queues of the high execution \Progl{0} and the low execution \Progl{1} are the same as the ones in the enforcement mechanism of NI in Fig.~\ref{fig:example:execution:NI:IO}. The local input and output queues of the other local execution \Progl{2} are the same as respectively the input and output queues of the low execution.



\section{Soundness} \label{sec:soundness}
In this section we formalize the soundness property of an enforcement mechanism, and postulate the theorem on the security guarantees that the enforcement mechanisms configured earlier ensure with respect to the corresponding properties.

\begin{definition}\label{def:soundness}
An enforcement mechanism is \emph{sound} with respect to a property $P$ if for all programs $\pi$ the enforcement mechanism executed on $\pi$ satisfies $P$.
\end{definition}

\begin{theorem}[Soundness of Enforcement] \label{thm:soundess}
	Each enforcement mechanism in Tab.~\ref{tab:component:ifp} is sound with respect to the corresponding property, except for TSNI.
\end{theorem}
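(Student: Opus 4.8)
The plan is to prove all three cases from a single \emph{determinacy lemma} about the low execution, from which the low-level guarantees fall out immediately, and then to treat termination separately for RI and DI.

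\textbf{The core lemma.} First I would show that in each of the three configurations the run of the low execution \Progl{1} --- its control state, memory, the values it consumes, and the vectors it writes to its local output queue --- is a deterministic function of \restrict{I}{L} together with the fixed default values, and is entirely independent of the high values in $I$. The proof is by induction on the length of the \EMP-derivation, driven by two facts read directly off the tables. On every high channel \Progl{1} lacks the \ttell\ privilege in \TAV, so by the \MAP\ programs it only ever receives the default vector \defvec; and on every low channel \Progl{1} is the only execution holding the \task\ privilege, so it alone triggers the real low reads and therefore consumes \restrict{I}{L} in order, while \Progl{0} (and, for DI, every clone) merely reads buffered copies. A companion observation, again read off \TPV, is that on every low channel only \Progl{1} holds \ttell, so every vector \REDUCE\ places on the global output at a low channel originates from \Progl{1}; hence \restrict{O}{L} is itself a function of \restrict{I}{L}.

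\textbf{NI, and why TSNI is excluded.} For TINI (Def.~\ref{def:NI}) both runs are \emph{assumed} to terminate, so no termination claim is needed: given \loweq{I'}{I}, the lemma gives identical low executions in the two runs and the output observation yields \loweq{O'}{O}. TSNI (Def.~\ref{def:TSNI}) fails precisely because \EMP-termination also demands that \Progl{0} reach \iskip, and in the NI configuration \Progl{0} carries \ttell\ on high channels and thus runs \Prog\ on the \emph{real} high inputs; a high value can therefore make \Progl{0} diverge on $I'$ though it converged on $I$, so \EMP\ may halt on $I$ but not on the low-equivalent $I'$.

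\textbf{RI.} Here I must \emph{construct} the witness $I'$ and prove \EMP\ halts on it. The decisive point is that, by the lemma, the run of \Progl{1} on $I$ is exactly the run of \Prog\ on (\restrict{I}{L} with every high value defaulted); since \EMP\ halts on $I$, this defaulted run terminates, say requesting $a_c$ items on each high channel $c$. I take $I'$ to supply \restrict{I}{L} on the low channels and exactly $a_c$ default vectors on each high channel $c$. Then \loweq{I'}{I} and $\restrict{I'}{H}=(\defvec)^*$ hold by construction, and the bound $\length{\restrict{I'}{c}}\le\length{\restrict{I}{c}}$ follows from the counting fact that, because each \MAP\ read broadcasts one copy into every local queue, the number of real reads \MAP\ performs on a high channel $c$ in the run on $I$ equals $\max(a_c,b_c)$, where $b_c$ is the number of high requests of \Progl{0} on the real inputs; this equals the consumed length \length{\restrict{I}{c}}, and $a_c\le\max(a_c,b_c)$. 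On $I'$ both \Progl{0} and \Progl{1} perform the identical defaulted run, so both reach \iskip\ (they copy the terminating \Progl{1}-run on $I$) and the $a_c$ default vectors are consumed exactly; hence \execution{\EMP}{I'}{O'} with \loweq{O'}{O}.

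\textbf{DI, and the main obstacle.} For DI I would exploit the cloning rule directly: when \Progl{0} asks for the high vector the property deletes, the \MAP\ program spawns a clone that is denied \ttell\ on high channels and so continues on defaults; that clone, \emph{within the terminating run on} $I$, already computes the corrected trace in which that vector and the trailing high inputs are defaulted. Reading off the clone's input demands furnishes $I'=I_1'.I_2'$ with the equivalences required by Def.~\ref{def:DI}, and the clone's termination is inherited from \EMP-termination on $I$. I expect the main obstacle to be exactly the termination-and-exact-consumption bookkeeping shared by RI and DI: one must show the constructed $I'$ is neither over- nor under-supplied with default high vectors, which rests on the broadcast/on-demand counting of \MAP\ reads and on tying the halting of the constructed run to the already-established halting of the low (or cloned) execution on $I$. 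The NI case is immediate once the determinacy lemma is in hand.
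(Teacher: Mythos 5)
Your proposal is correct and follows the paper's strategy in all essentials: a privilege-driven lemma about what \MAP\ delivers to each local input queue and whose vectors \REDUCE\ forwards to the global output (your ``determinacy lemma'' fuses the paper's Prop.~\ref{prop:InputOfEMP} and Prop.~\ref{prop:outputOfEMP}), then TINI by comparing the two identical low runs (with TSNI excluded because \Progl{0} runs on the real high inputs and may diverge), RI by taking the defaulted low run as the witness, and DI by taking the run of the clone spawned at the last high request (the paper's \Progl{\TOP}). Two differences deserve comment. In your favour: your broadcast-counting argument, $\length{\restrict{I}{\chnl}}=\max(a_{\chnl},b_{\chnl})\geq a_{\chnl}$, explicitly discharges the bound $\length{\restrict{I'}{\chnl}}\leq\length{\restrict{I}{\chnl}}$ required by Def.~\ref{def:RI}, a clause the paper's proof never checks. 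Against you: you have no analogue of the paper's Prop.~\ref{prop:semantics-eq}, and that proposition does real work. Rule \RINPM\ only lets \MAP\ consume the \emph{head} of the global input queue, so an arbitrary interleaving of $\restrict{I}{L}$ with $a_{\chnl}$ default vectors per high channel will in general deadlock the mechanism rather than terminate it; the witness $I'$ must list its items in exactly the request order of the defaulted run. You do this implicitly for DI (``reading off the clone's input demands'') but leave the interleaving unspecified for RI; stating the RI witness as the request sequence of \Progl{1}'s run closes this small gap, and providing that reordering is precisely the role Prop.~\ref{prop:semantics-eq} plays in the paper. With that amendment your argument goes through.
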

In order to prove soundness, we state two basic properties specifying the behaviour of \MAP\ on receiving requests for low input items and the behaviour of \REDUCE\ when receiving an output request from a local execution. For the actual proof of the theorem, we perform a case-based reasoning showing that at the end, the output is what we expect. In this respect an important assumption is that the program to be enforced is deterministic.  For RI, we need an additional preliminary property showing the relationship between the execution of a controlled program and the corresponding local execution. For DI, we need another preliminary property about the input items that can be consumed by \Progl{i} where $i > 1$. Figure~\ref{fig:proof:SN} sketches the proof strategy for soundness.

\begin{figure}
\centering
\includegraphics[scale=1]{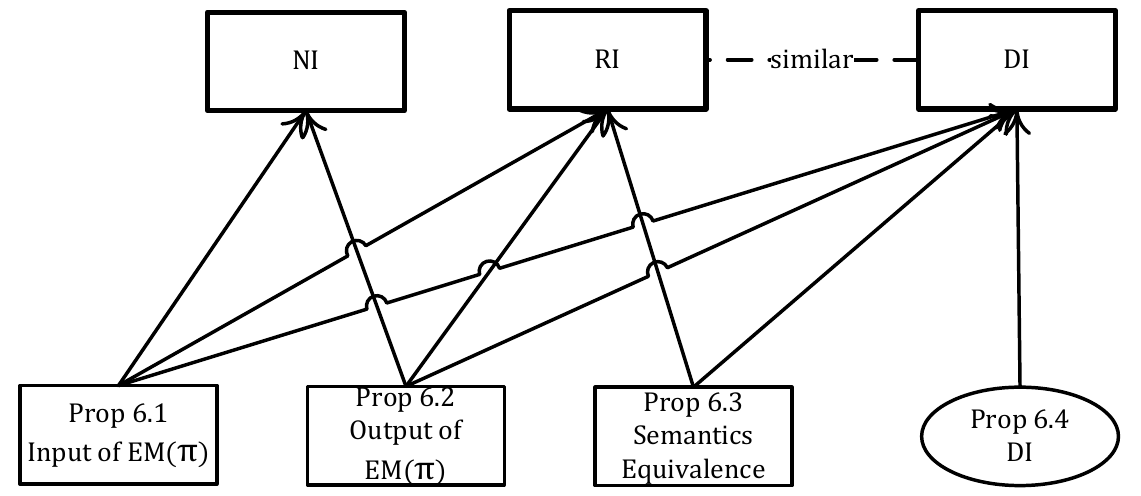}
\caption{Proof Strategy for Soundness}
\label{fig:proof:SN}
\end{figure}

We first prove the soundness theorem for NI. To this extend, we first need two simple propositions on the I/O behaviour of the enforcement components.

\begin{proposition}[Input items consumed by enforcement mechanisms and input items sent to local input queues of local executions]\label{prop:InputOfEMP} Consider the enforcement mechanisms of information flow properties, it follows that:
\begin{itemize}
\item \MAP\ will only ask low input items from the environment for low input requests from the low execution. The low execution can only receive default values for high input items. This local execution can receive real values for low input items.

\item For the enforcement mechanism of NI and DI, \MAP\ will only ask high input items from the environment for high input requests from the high execution.

\item For the enforcement mechanism of RI, \MAP\ will ask high input items from the environment for high input request from any local execution.

\item The high execution can receive real values for low and high input items.
\end{itemize}
\end{proposition}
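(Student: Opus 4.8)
The plan is to prove the proposition by directly inspecting the three \MAP\ programs (Fig.~\ref{alg:MAP:RI}, Fig.~\ref{alg:MAP:NI:SME} and Fig.~\ref{alg:MAP:DI}) against the corresponding \TAV\ tables (Fig.~\ref{fig:table:RI:MAP}, Fig.~\ref{fig:table:NI:MAP} and Fig.~\ref{fig:table:DI:MAP}), reading the behaviour off the operational semantics of \MAP\ in Fig.~\ref{fig:sem:MAP}. First I would isolate two invariants that drive every sub-claim. (i)~An actual environment input on channel~\chnl\ on behalf of \Progl{i} happens exactly when the instruction \iinput{x}{\chnl} is reached inside the \MAP\ program activated on \Progl{i}'s request and then executed via rule~\RINPM; in the RI and NI programs this branch is guarded solely by $\task \in \TAV[i][\chnl]$, whereas in the DI program it is additionally guarded by $\ttell \in \TAV[i][\chnl]$ (line~\ref{alg:DI:MAP:l4}). (ii)~A local execution \Progl{i} receives the real value rather than \defVal\ precisely when $\canMap{\chnl}$ holds for it, i.e.\ when $\ttell \in \TAV[i][\chnl]$, by the definition of \canMap\ in Eq.~\ref{func:canMap} and the broadcasting rule in Fig.~\ref{fig:sem:MAP}. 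Each sub-claim then reduces to reading the relevant cells of \TAV.

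Next I would dispatch the low-channel statements, which are uniform across the three mechanisms. In every \TAV\ table the row $\LVL[\chnl]=L$ grants \task\ only to \Progl{1}: the execution \Progl{0} carries \tget\ alone, and in DI each clone \Progl{i} with $i>1$ also carries \tget\ alone. Hence for a low channel the fetch guard of invariant~(i) holds only for $i=1$, so \MAP\ performs an environment input for a low item only in response to a low request from the low execution. Moreover $\ttell \in \TAV[1][\chnl]$ and $\ttell \in \TAV[0][\chnl]$ for low channels, so by invariant~(ii) both the low execution and the high execution obtain the real low value when the fetch is broadcast. This settles the low-input claims: the low fetch occurs only for the low execution, both executions obtain real low values, and no environment action is taken otherwise.

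For the high-channel statements I would argue per property. Under NI the row $\LVL[\chnl]=H$ gives \Progl{0} the privileges \task\ttell\ and leaves \Progl{1} empty, so the fetch guard holds only for $i=0$; thus \MAP\ asks a high item only for a high request from the high execution, which receives the real value since $\ttell\in\TAV[0][\chnl]$, whereas a high request from \Progl{1} takes the $\task\notin\TAV[1][\chnl]$, $\ttell\notin\TAV[1][\chnl]$ branch (lines~\ref{alg:NI:MAP:l8}--\ref{alg:NI:MAP:l9}) that merely injects \defVal. Under RI the row $\LVL[\chnl]=H$ gives \emph{both} \Progl{0} and \Progl{1} the \task\ privilege, and the single-guard program of Fig.~\ref{alg:MAP:RI} therefore performs the fetch for a high request from either execution --- exactly the RI high-fetch statement --- while only \Progl{0}, which also holds \ttell, obtains the real value and \Progl{1} receives the default.

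The step I expect to be the main obstacle is DI, because there the \TAV\ table alone is misleading: on high channels the clones \Progl{i} with $i>1$ and the low execution \Progl{1} all carry the \task\ privilege, so a naive reading of guard~(i) would wrongly predict that their high requests trigger environment fetches. The resolution is the second, nested test $\ttell\in\TAV[i][\chnl]$ at line~\ref{alg:DI:MAP:l4}: only the high execution \Progl{0} holds \ttell\ on high channels, so control reaches \iinput{x}{\chnl} only for $i=0$, and every other high request falls into the else branch (lines~\ref{alg:DI:MAP:l10}--\ref{alg:DI:MAP:l11}) that supplies \defVal\ with no environment action. I would therefore present DI by explicitly combining the table entries with this nested guard, which simultaneously gives the high-fetch statement for DI and the fact that the low execution and the clones receive only default high values. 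The remaining statement, that the high execution receives real values for both low and high items, follows in all three cases from $\ttell\in\TAV[0][\chnl]$ for every \chnl, closing the argument.
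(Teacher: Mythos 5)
Your proposal is correct and is in substance the same argument as the paper's own proof: both are exhaustive case analyses of the three \MAP\ programs against the \TAV\ tables, covering exactly the same branches --- the uniform low-channel behaviour, the RI fetch on a high request from either execution, the NI else-branch injection of \defVal, and the DI nested $\tget \in \TAV[i][\chnl]$ guard that blocks fetches for $\Progl{1}$ and the clones. The only real difference is presentational: the paper wraps this case analysis in an induction on the number of \MAP\ activations (whose induction hypothesis is never actually used in the cases) and organizes by the requesting execution's index, whereas you organize by channel level and property under two guiding invariants, which is an equally valid --- arguably cleaner --- rendering of the same inspection.
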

\begin{proof}
The proposition is obvious from the configurations of the corresponding enforcement mechanisms, where all input items in local input queues are sent by \MAP; and only \MAP\ can get input items from the environment.

The proposition is proven by using the induction technique on the number of times of activation of \MAP\ on  input requests from local executions. Let $k$ be the number of times of activation of \MAP\ on the input request.

\textbf{Base case:} $k = 0$. The proposition holds vacuously.

\textbf{Induction hypothesis:} Assume that the proposition holds for the case that $k < n$. We now prove that the proposition holds for $k = n$. We consider the n-th activation of \MAP\ on a request from \Progl{i} on channel \chnl. The following holds:
\begin{itemize}
\item Each instruction of \MAP, \REDUCE, and local executions is executed atomically,
\item The execution of the \MAP\ program does not interfere with the execution of the \REDUCE\ program and vice-versa,
\item The execution of a local execution does not interfere with the execution of the \MAP\ (\REDUCE) program,
\end{itemize} Therefore, we have:

\begin{itemize}
\item Case 1: $i = 0$		
	\begin{itemize}
	\item Case 1.1: $\LVL[\chnl] = L$: For RI, the instructions at lines \ref{alg:RI:MAP:l1} and \ref{alg:RI:MAP:l7} in Fig.~\ref{alg:MAP:RI} are executed. For NI, the instructions at lines \ref{alg:NI:MAP:l1} and \ref{alg:NI:MAP:l11} in Fig.~\ref{alg:MAP:NI:SME} are executed. For DI, the instructions at lines \ref{alg:DI:MAP:l1}, \ref{alg:DI:MAP:l3} and \ref{alg:DI:MAP:l13} in Fig.~\ref{alg:MAP:DI} are executed. \MAP\ does not perform any input action. This activation does not influence the items received by the low execution.
	\item Case 1.2: $\LVL[\chnl] = H$: For RI, the instructions from line \ref{alg:RI:MAP:l1} to \ref{alg:RI:MAP:l5} in Fig.~\ref{alg:MAP:RI} are executed. For NI, the instructions from line \ref{alg:NI:MAP:l1} to \ref{alg:NI:MAP:l6} in Fig.~\ref{alg:MAP:NI:SME} are executed. For DI, the instructions from line \ref{alg:DI:MAP:l1} to \ref{alg:DI:MAP:l8} in Fig.~\ref{alg:MAP:DI} are executed. \MAP\ performs an input action and sends a default value to the local input queue of the low execution, and the real value to the local input queue of the high execution.
	\end{itemize}

\item Case 2: $i = 1$.
	\begin{itemize}
	\item Case 2.1: $\LVL[\chnl] = L$: The instructions executed are the same as the ones in Case 1.2, except for the clone instruction at line~\ref{alg:DI:MAP:l2} in Fig.~\ref{alg:MAP:DI} that is not executed. The real value is sent to the local input queues of  \Progl{1} and \Progl{0}.
	
	\item Case 2.2: $\LVL[\chnl] = H$: We check the instructions of the corresponding \MAP\ programs. \MAP\ does not perform an input action on low channels in this case. For NI, (respectively DI), the default value is sent to the local input queue of \Progl{1} by the execution of the instruction at line \ref{alg:NI:MAP:l8} in Fig.~\ref{alg:MAP:NI:SME} (resp. line~\ref{alg:DI:MAP:l10} in Fig.~\ref{alg:MAP:DI}). For RI, an input operation is performed (line \ref{alg:RI:MAP:l2} in Fig.~\ref{alg:MAP:RI}). However, a default value is sent to the local input queue of \Progl{1} (line \ref{alg:RI:MAP:l4}), while the real value is sent to the local input queue of \Progl{0} (line \ref{alg:RI:MAP:l3}).
	\end{itemize}
	
\item Case 3: $i > 1$ (only for the enforcement mechanism of DI)
	\begin{itemize}
	\item Case 3.1: $\LVL[\chnl] = L$: \MAP\ does not perform any input actions. \MAP\ does not send any input item to the local input queue (line \ref{alg:DI:MAP:l10} in Fig.~\ref{alg:MAP:DI}).
	\item Case 3.2: $\LVL[\chnl] = H$: \MAP\ will send only a default input item to the local input queue of \Progl{i} (line \ref{alg:DI:MAP:l10} in Fig.~\ref{alg:MAP:DI}).
	\end{itemize}

\end{itemize}

The proposition holds for $k = n$. Therefore, the proposition holds for all $k \geq 0$.
\end{proof}

\begin{proposition}[Outputs of enforcement mechanisms]\label{prop:outputOfEMP}
Concerning the output of an enforcement mechanism for an information flow property, it follows that:
\begin{itemize}
\item For the enforcement mechanism of RI, NI, and DI, only the high execution \Progl{0} sends output items to high output channels.
\item For the enforcement mechanism of RI, NI,  and DI, only the low execution \Progl{1} can send output items to low output channels.
\item For the enforcement mechanism of DI, the output items generated by the local execution \Progl{i} with $i > 1$ are ignored.
\end{itemize}
\end{proposition}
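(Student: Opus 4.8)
The plan is to mirror the structure of the proof of Proposition~\ref{prop:InputOfEMP}, replacing the rôle of \MAP\ by that of \REDUCE\ and arguing by induction on the number of activations of \REDUCE\ on output requests. The single invariant that drives everything is that the \emph{global} output queue $O$ is modified only by the \ROUTPUTR\ rule of Fig.~\ref{fig:Sem:REDUCE}: a local execution running an \NOUTPUT\ instruction writes only to its \emph{own} local output queue (rule \ROUTPUTL), and neither the local executions nor \MAP\ ever touch $O$. Hence every item that appears in the global output queue is placed there by \REDUCE, and, inspecting the \REDUCE\ programs (Fig.~\ref{alg:REDUCE:RI} for RI, Fig.~\ref{alg:REDUCE:NI} for NI, Fig.~\ref{alg:REDUCE:DI} for DI, which are syntactically identical), the statement $\ioutput{x}{\chnl}$ is reached precisely when $\ttell \in \TPV[i][\chnl]$.

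First I would fix the base and inductive skeleton exactly as in Proposition~\ref{prop:InputOfEMP}: with zero activations the claim holds vacuously, and for the $n$-th activation I would appeal to the same three facts used there (atomicity of each instruction, and mutual non-interference of \MAP, \REDUCE, and the local executions), so that the effect of a single \REDUCE\ run on an output request from \Progl{i} on channel \chnl\ can be analysed in isolation. By the activation rule \RREDUCEINIT, such a run is triggered exactly by an \NOUTPUT\ signal of \Progl{i} on \chnl, so the pair $(i,\chnl)$ is well defined for the step under consideration.

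The core is then a case split on $(i, \LVL[\chnl])$, reading off the $\ttell$ entry from the \TPV\ tables. When $i = 0$ and $\LVL[\chnl] = H$ we have $\TPV[0][\chnl] = \ttellput$, so $\ttell$ is present and the high execution's value is sent to the global queue; when $i = 0$ and $\LVL[\chnl] = L$ the entry is \tnoaction, so no output is produced. Symmetrically, $\Progl{1}$ outputs only on low channels and stays silent on high channels. For DI I would additionally note that every clone \Progl{i} with $i > 1$ receives its output privileges from the template \tcolr\ installed when rule \RCLONE\ fires; by the column $\tindex{i} > 1$ of Fig.~\ref{fig:table:DI:REDUCE} this template sets \tnoaction\ on all channels, so $\ttell \notin \TPV[i][\chnl]$ for every \chnl, and the accompanying $\iclean{\chnl}{\identical{i}}$ step also empties the local output queue of \Progl{i}; thus its outputs are ignored. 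Collecting the cases yields the three bullet points.

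The main obstacle is not the case analysis, which is a table lookup, but pinning down the invariant cleanly: I must argue that \REDUCE\ is the \emph{sole} writer of the global output queue and that its output statement is guarded exactly by the $\ttell$ test, including, for DI, the fact that cloning preserves this guard for all newly spawned executions. Once this invariant is stated as a small lemma about the semantics (a direct consequence of rules \ROUTPUTL, \ROUTPUTR, \RCLONE\ and the common shape of the \REDUCE\ programs), the proposition follows immediately.
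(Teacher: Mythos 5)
Your proposal is correct and follows essentially the same route as the paper: the paper's own proof is exactly an induction on the number of activations of \REDUCE\ on output requests, carried out ``similarly to Prop.~\ref{prop:InputOfEMP}'', which is the skeleton you describe. In fact your write-up is more complete than the paper's one-line sketch, since you make explicit the key invariant (that \ROUTPUTR\ is the sole writer of the global output queue and that the $\ioutput{x}{\chnl}$ statement is guarded exactly by $\ttell \in \TPV[i][\chnl]$) and the treatment of DI clones via the \tcolr\ template installed by \RCLONE.
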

\begin{proof}
The proposition is proven by using the induction technique on the number of times of activation of \REDUCE\ on  output requests from local executions. The proof is similar to the proof of Prop.~\ref{prop:InputOfEMP}.
\end{proof}


Now we proceed to the proof of soundness of the enforcement mechanism for NI.
\paragraph{Proof of Theorem~\ref{thm:soundess} for NI.}

Let us consider two executions: \execution{\EMP}{I}{O} and \execution{\EMP}{I'}{O'}, where \loweq{I}{I'}.  

The following holds:

\begin{enumerate}
\item The low input items consumed by the enforcement mechanism depends only on the low execution (by Prop.~\ref{prop:InputOfEMP}). \label{lowEx:lowInput}

\item The low executions in the runs of the enforcement mechanism on $I$ and $I'$ always consume default values for high input items (by Prop~\ref{prop:InputOfEMP}). \label{lowEx:highInput}

\item  The low executions in these two runs consume the same low input items and same high output items. (By \ref{lowEx:lowInput}, \ref{lowEx:highInput} and \Prog\ be deterministic).  \label{lowEx:sameInput}

\item These low executions generate the same outputs (By \ref{lowEx:sameInput} and the fact that \Prog\ is deterministic). \label{lowEx:sameOutput}

\item The output items sent to low output channels are always generated by the low executions (by Prop~\ref{prop:outputOfEMP}). \label{output}

\item \loweq{O}{O'} (by \ref{lowEx:sameOutput} and \ref{output})
\end{enumerate}
This concludes the proof.\qed

For the proof of the soundness of the enforcement mechanism for RI, we need an  property stating the relationship between the controlled program and a local execution. We also need another simple property showing how \MAP\ handles the high input requests from local executions.

\begin{proposition}[Controlled programs and local executions] \label{prop:semantics-eq} Let $I_1$ and $I_2$ be two input queues, such that for all input channels \chnl, $I_1|_{\chnl} = I_2|_{\chnl}$. Then we have: for all programs \Prog,
\begin{multline*}
\forall I_1: (\Prog, I_1, \emptyQ) \rightarrowtriangle^k (\Prog_k, I_{1_k}, O_k) \implies \forall I_2:\forall \chnl \in \Cin: I_1|_{\chnl} = I_2|_{\chnl}: (\Prog, I_2,\emptyQ) \Rightarrow^k (\Prog, I_{2_k}, O_k)
\end{multline*} and $\channeleq{I_{1_k}}{I_{2_k}}{\chnl}$ for all \chnl.

And we have:
\begin{multline*}
\forall I_1: (\Prog, I_1,\emptyQ) \Rightarrow^k (\Prog_k, I_{1_k}, O_k) \implies \exists I_2:\forall \chnl \in \Cin: I_1|_{\chnl} = I_2|_{\chnl}: (\Prog, I_2,\emptyQ) \rightarrowtriangle^k (\Prog, I_{2_k}, O_k)
\end{multline*} and \channeleq{I_{1_k}}{I_{2_k}}{\chnl} for all \chnl.
\end{proposition}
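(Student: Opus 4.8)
The plan is to prove both implications by induction on the number of transition steps $k$, exploiting the fact that the only substantive difference between the standard semantics $\rightarrowtriangle$ (Fig.~\ref{fig:CommandSem}) and the local semantics $\Rightarrow$ (Fig.~\ref{fig:LocCommandSem2}) is the input rule: rule~\RINPUT\ consumes the \emph{head} of the queue and requires it to carry a value on $\chnl$, whereas rule~\RINPUTYL\ uses the operator \dequeue{\Iid}{\chnl}, which scans for the \emph{first} vector tagged with $\chnl$. Every other instruction (assignment, composition, conditionals, loops, skip, and output) leaves the input queue untouched and has the same effect on memory, residual program, and appended output vector in both systems; the state/signal bookkeeping of the local rules (e.g.\ \ROUTPUTL) affects none of these quantities. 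Since the statement asserts that the two runs reach the \emph{same} residual program $\Prog_k$ and the \emph{same} output $O_k$, I would strengthen the induction hypothesis to also carry equality of the two memories; this is what keeps the control flow, and hence the sequence of channels read and the values output, in lock-step. The w.l.o.g.\ normalisation that every vector is non-\NIL\ on exactly one channel is used throughout: deleting a $\chnl$-tagged vector changes $\restrict{Q}{\chnl}$ but leaves $\restrict{Q}{c'}$ untouched for every $c' \neq \chnl$, so per-channel equality is preserved by each input step.

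For the first (universally quantified) implication I would induct on $k$ and case-split on the rule firing at step $k+1$ in the given $\rightarrowtriangle$-run. For every non-input rule the same rule fires in the $\Rightarrow$-run with the same effect: the input queues are unchanged, the memories stay equal (for output this forces the appended vector to be identical, so $O_{k+1}$ agrees), and per-channel equality is inherited. For the input rule, rule~\RINPUT\ reads $\vec{v}[\chnl]$ from the head of $I_{1_k}$ with $\vec{v}[\chnl]\neq\NIL$; because $\channeleq{I_{1_k}}{I_{2_k}}{\chnl}$, channel $\chnl$ also has a first non-\NIL\ value in $I_{2_k}$ equal to $\vec{v}[\chnl]$, so \dequeue{I_{2_k}}{\chnl} succeeds and rule~\RINPUTYL\ fires with the same value and the same memory update. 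Removing the head $\chnl$-vector from $I_{1_k}$ and the first $\chnl$-vector from $I_{2_k}$ drops the same (first) entry of the $\chnl$-projection and leaves every other projection unchanged, so $\channeleq{I_{1_{k+1}}}{I_{2_{k+1}}}{c'}$ holds for all $c'$. In particular each $\rightarrowtriangle$-step is matched by exactly one $\Rightarrow$-step, so the step counts coincide and no sleep transition (rule~\RINPUTNL) is ever needed.

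For the converse implication the difficulty is reversed: $\rightarrowtriangle$ is rigid and can only read the head, so I must \emph{construct} a witness $I_2$ that presents the vectors in precisely the order the given $\Rightarrow$-run consumes them. I would run the (deterministic) $\Rightarrow$-execution on $I_1$ for its $k$ steps, record the ordered list of vectors returned by the successive dequeue operations, and define $I_2$ to be that list followed by the unread vectors of $I_1$ kept in their original relative order. Because this is merely a reordering of $I_1$, we have $\restrict{I_1}{\chnl}=\restrict{I_2}{\chnl}$ for every $\chnl$. A second induction on $k$ then shows that $\rightarrowtriangle$ simulates the run step for step: since the memories stay equal the control flow is identical, so the $j$-th input of the $\rightarrowtriangle$-run requests the same channel as the $j$-th input of the $\Rightarrow$-run, and by construction the current head of $I_2$ is exactly the vector that the $\Rightarrow$-run dequeued there, whence rule~\RINPUT\ applies with the identical value; non-input steps coincide as in the first part, and the leftover queues remain per-channel equal.

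I expect the converse, and specifically the reordering construction together with the verification of per-channel equality, to be the main obstacle, because a naive step-by-step induction fails: extending a $k$-step witness to $k+1$ steps may require \emph{inserting} a vector into the middle of the previously built queue, so the witness does not grow monotonically; building $I_2$ globally from the recorded read-order side-steps this. The other delicate point, common to both directions, is the mutual dependence between memory agreement and read order, which is why memory equality must be threaded through the induction hypothesis so that identical branching guarantees the two runs request the same channels in the same order and emit the same outputs. Determinism of \Prog\ (the standing assumption of the soundness proofs) and the single-non-\NIL-channel normalisation of vectors are both essential to this argument.
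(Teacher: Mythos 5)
Your proof is correct and takes essentially the same approach as the paper, whose entire proof is the single remark that the result follows ``by induction on $k$ and the length of the input queue $I_1$, along with the fact that controlled programs and the local executions are deterministic.'' Your strengthened induction hypothesis (threading memory equality through the induction, which is exactly how determinism enters) and the global read-order construction of the witness $I_2$ in the converse direction simply supply the details that the paper's one-line proof leaves implicit.
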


\begin{proof}
The proposition is proven by using the induction technique on $k$ and the length of the input queue $I_1$, along with the fact that controlled programs and the local executions are deterministic.
\end{proof}



Next, we prove the soundness of the enforcement mechanism for RI.
\paragraph{Proof of Theorem~\ref{thm:soundess} for RI.}
Let \Irc{1} be the input consumed by the low execution in the run \execution{\EMP}{I}{O}, and $k$ be the number of high input items in $I$.

\textbf{Base case:} $k = 0$. The theorem holds for RI.(in this case $I' = I$).

\textbf{Induction hypothesis:} Assume that the theorem holds for $k < n$ for RI. We now prove that the theorem holds for $k = n$. The following holds:

\begin{enumerate}
\item \loweq{\Irc{1}}{I} (by Prop.~\ref{prop:InputOfEMP}). \label{RI:step1:loweq1}

\item \defseq{\Irc{1}|_H} (by Prop.~\ref{prop:InputOfEMP}). \label{RI:step2:HI}

\item There exists $I^*$ such that \execution{\Prog}{I^*}{O^*}, and  \channeleq{I^*}{\Irc{1}}{\chnl} for all \chnl\ (by Prop.~\ref{prop:semantics-eq}). \label{RI:constructedI}
\item \loweq{I^*}{\Irc{1}} (by \ref{RI:constructedI}, Prop.~\ref{prop:InputOfEMP}, and  the fact that \Prog\ is deterministic). \label{RI:step4:loweq2}
\item \loweq{I^*}{I} (by \ref{RI:step1:loweq1} and \ref{RI:step4:loweq2}).
\end{enumerate}

If we used $I^*$ as an input for the enforcement mechanism and the high execution is run only after the low execution is terminated:
\begin{enumerate}
\setcounter{enumi}{5}
\item The low execution will consume all input items in $I^*$ and is not stuck (by Prop~\ref{prop:InputOfEMP}). \label{RI:step6:lowEx}

\item Both the high and the low executions will consume with the same input (by \ref{RI:step2:HI}, \ref{RI:constructedI}, and Prop~\ref{prop:InputOfEMP}). \label{RI:step7:inputconsumed}

\item \label{termination} Both the high and the low execution are terminated (by \ref{RI:step6:lowEx} and \ref{RI:step7:inputconsumed}). \label{RI:step8:termination}

\item The output items are generated by the low execution (by Prop.~\ref{prop:outputOfEMP}). \label{RI:step9:outputgenerated}
\end{enumerate}

From \ref{RI:step4:loweq2}, \ref{RI:step8:termination}, and \ref{RI:step9:outputgenerated}, we have there exists $I^*$, such that \execution{\EMP}{I^*}{O^*} and \loweq{O^*}{O}. Let $I' \triangleq I^*$, it follows that the theorem for the enforcement mechanism of RI holds for the case $k = n$. Thus, the theorem holds for RI. \qed

To prove the soundness of DI, we need a proposition showing the influence of local execution \Progl{i} (with $i > 1$) on the input consumed by the enforcement mechanism.


\begin{proposition} \label{prop:DI:inputconsumed}
For the enforcement mechanism of DI, a local execution \Progl{i} with $i > 1$ has no effect on the input consumed by the enforcement mechanism.
\end{proposition}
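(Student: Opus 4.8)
The plan is to show that the items removed from the global input queue are completely determined by the high execution \Progl{0} and the low execution \Progl{1}, so that the clones \Progl{i} with $i>1$ can neither trigger a fresh consumption nor perturb the consumptions caused by \Progl{0} and \Progl{1}. The only transition that removes an item from the global input queue is rule \RINPM, which fires exactly when \MAP\ executes the instruction $\iinput{x}{\chnl}$ at line~\ref{alg:DI:MAP:l5} of Fig.~\ref{alg:MAP:DI}. Hence it suffices to examine, for each activation of \MAP, whether that line is ever reached, and to confirm that the reachability of that line for \Progl{0} and \Progl{1} is insensitive to the presence of the clones.

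First I would argue that an activation of \MAP\ on a request from \Progl{i} with $i>1$ never reaches line~\ref{alg:DI:MAP:l5}. That line is guarded by the tests $\task \in \TAV[i][\chnl]$ (line~\ref{alg:DI:MAP:l3}) and $\tget \in \TAV[i][\chnl]$ (line~\ref{alg:DI:MAP:l4}), so it is executed only when \Progl{i} holds the combined privilege $\tgetask$ on \chnl. Inspecting Fig.~\ref{fig:table:DI:MAP}, for $i>1$ the entry is $\task$ on every high channel and $\tget$ on every low channel, and in neither case is it $\tgetask$. Consequently \MAP\ takes the ask-only branch at lines~\ref{alg:DI:MAP:l10}--\ref{alg:DI:MAP:l11} (high channel) or the \iskip\ branch at line~\ref{alg:DI:MAP:l13} (low channel), and in both cases no \RINPM\ step occurs. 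This is precisely the specialisation of Prop.~\ref{prop:InputOfEMP} to DI: high items are fetched only on requests from \Progl{0} and low items only on requests from \Progl{1}.

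It then remains to exclude an indirect effect, namely that the mere existence of the clones could alter which requests \Progl{0} and \Progl{1} emit, and thus which items they cause to be consumed. Here I would reuse the locality facts already invoked in the proof of Prop.~\ref{prop:InputOfEMP}: each instruction is atomic, and a local execution reads and writes only its own memory, local input queue, and local output queue. A clone \Progl{i} ($i>1$) only ever receives a default high value (line~\ref{alg:DI:MAP:l10}) together with the low values broadcast to it, and all its wake-ups target itself through $\identical{i}$; none of these touch the configuration of \Progl{0} or \Progl{1}. Therefore the sequence of requests issued by \Progl{0} and \Progl{1}, and hence the sequence of \RINPM\ steps they trigger, is identical whether or not the clones are present.

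I would make this precise by induction on the number of \MAP\ activations, paralleling Prop.~\ref{prop:InputOfEMP}, carrying the invariant that (i) no activation on a request from a clone performs an \RINPM\ step, and (ii) the configurations of \Progl{0} and \Progl{1} agree with those of the corresponding run in which the clones are deleted. The main obstacle is purely a bookkeeping issue about interleaving: since the scheduler may interleave clone steps with steps of \Progl{0} and \Progl{1}, I must verify that deleting the clone steps still yields a well-formed run exhibiting the same global-queue operations in the same order. The atomicity and locality properties above guarantee this, so the whole argument collapses to the case analysis of the second paragraph, establishing that \Progl{i} with $i>1$ has no effect on the input consumed by the enforcement mechanism.
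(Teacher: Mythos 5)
Your proposal is correct and follows essentially the same route as the paper: the paper's entire proof is the one-line claim that the result is ``obvious from the configuration of the enforcement mechanism,'' and your case analysis of the \TAV\ entries for $\tindex{i}>1$ (never $\tgetask$, hence line~\ref{alg:DI:MAP:l5} and rule \RINPM\ are unreachable on a clone's request) together with the locality argument for why clones cannot perturb \Progl{0} and \Progl{1} is precisely the verification that claim leaves implicit. You simply supply the detail—including the induction mirroring Prop.~\ref{prop:InputOfEMP}—that the paper omits.
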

\begin{proof}
The proof is obvious from the configuration of the enforcement mechanism.
\end{proof}

\paragraph{Proof of Theorem~\ref{thm:soundess} for DI.}
The idea of the enforcement mechanism of DI is that when the high execution requests a high input item, the high execution will be duplicated and the newly duplicated execution will receive the default values for high input items. If we replace the last high input item in the original input $I$ with a default item, then there exists another input queue satisfying the definition of DI. Such an input queue is the input consumed by the \Progl{\TOP}.


Let $I$ be an input, such that \execution{\EMP}{I}{O}. The proof of soundness of the enforcement mechanism of DI is based on the induction technique on the number of high input item in $I$.

\textbf{Base case:} If there is no high input item in $I$, the theorem holds vacuously.

\textbf{Induction Hypothesis}: The theorem holds for all $I$, such that the number of high input items is smaller than $n$. We now prove that the theorem also holds for the case when the number of high input items is equal to $n$. Now $I$ can be written as $I_1.\vec{v}_1.\dots.I_n.\vec{v}_n.I_{n+1}$

Based on the configuration of the enforcement mechanism, \Progl{TOP} is created when the high execution requests the last high input item. Let \Irc{\TOP} be the input consumed by \Progl{TOP}, \Irc{1} be the input consumed by \Progl{1}. We have:
\begin{enumerate}
\item $\Irc{\TOP} = I_1.\vec{v}_1.\dots.I_n.\defvec.I'_{n+1}$, where \defseq{I'_{n+1}|_H}. \label{DI:step:InputTOP}

\item \loweq{I}{\Irc{1}} \label{DI:step:InputLow}
\end{enumerate}

Let $I^*$ be an input queue, such that $I^* = I_1.\dots.I_n.I_{n+1}.\vec{v}_1.\dots.\vec{v}_{n-1}.\defvec.I^*_{n+1}$, where $I^*_{n+1} = I'_{n+1}|_H$. Assume that the order of executing local executions is first \Progl{1}, then \Progl{0}. We have:
\begin{enumerate}
\setcounter{enumi}{2}
\item \loweq{I^*}{I}. \label{DI:step:lowEqInput}

\item The low execution will consume the part $I_1.\dots.I_n.I_{n+1}$ for low input items and default values for high input items (by Prop~\ref{prop:InputOfEMP}). \label{DI:step:LowExInput}

\item \label{DI:step:HExInput} The high execution \Progl{0} will consume $\vec{v}_1.\dots\vec{v}_{n-1}.\defvec.I^*_{n+1}$ (by \ref{DI:step:InputTOP} and Prop.~\ref{prop:semantics-eq}).

\item For every $i$, such that  $1 < i \leq \TOP$, the execution of the local execution \Progl{i} is terminated and it does not effect the input consumed by the enforcement mechanism (by the assumption that \execution{\EMP}{I}{O} and Prop.~\ref{prop:DI:inputconsumed}). \label{DI:step:OtherEx}

\item $I^*$ is consumed completely by the enforcement mechanism (by \ref{DI:step:LowExInput} and \ref{DI:step:HExInput}). \label{DI:step:EMPInput}

\item The high execution is terminated (by the assumption that \execution{\EMP}{I}{O}). \label{DI:step:HExTer}

\item \execution{\EMP}{I^*}{O^*} (by \ref{DI:step:OtherEx}, \ref{DI:step:EMPInput}, \ref{DI:step:HExTer}). \label{DI:step:EMPTer}

\item \loweq{O^*}{O} (by Prop.~\ref{prop:outputOfEMP}) \label{DI:step:EMPOut}
\end{enumerate}

From \ref{DI:step:lowEqInput}, \ref{DI:step:EMPTer}, and \ref{DI:step:EMPOut}, the theorem holds for the case of the number of high input item in $I$ is $n$. Therefore, the theorem holds for the enforcement mechanism of DI. \qed

\section{Precision} \label{sec:precision}
The notion of precision for enforcement of a property is taken from \cite{Devr-Pies-10-IEEESP,DeGroef-etal-12-CCS}. The intuition is that the enforcement mechanism does not change the visible behavior of a program that is already secure with respect to the chosen property (and in particular each I/O on specific channels). Devriese and Piessens separated by construction the input queues of each channel. Since in our formulation the channels are merged into a global stream, our definition of precision must make explicit that the partial order of input items on a channel is preserved. This observation applies also to the order of output items in output queues. In our framework, the local executions are executed in parallel with no specific order. Therefore, the total order of input items consumed by the enforcement mechanism can be different from  the total order of input items in the input queue consumed by the controlled program that already obeys the desired property. However, the partial order of input items on a channel is preserved. This observation applies also to the order of output items in output queues.

\begin{definition}\label{def:precision}
An enforcement mechanism is \emph{precise} with respect to a property, if for any program \Prog\ that satisfies the property, and for every input $I$, where \execution{\Prog}{I}{O}, regardless of the order of executing local executions, the input $I^*$ and $O^*$ of the enforcement mechanism will be such that \channeleq{I^*}{I}{\chnl}, \channeleq{O^*}{O}{\chnl} for every channel \chnl, and \execution{\EMP}{I^*}{O^*}.
\end{definition}

\begin{theorem}[Precision of Enforcement] \label{thm:precision}
	Each enforcement mechanism in Tab.~\ref{tab:component:ifp} is precise with respect to the corresponding property, except for TINI.
\end{theorem}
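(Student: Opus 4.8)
The plan is to prove precision separately for the three termination-guaranteeing properties (RI, DI, TSNI) through a common two-part argument, and then to record why it collapses for TINI. The skeleton is to feed \EMP\ an environment stream whose per-channel projection matches $I$ and to show that (i) the high execution $\Progl{0}$ re-enacts the run $\execution{\Prog}{I}{O}$, so that it halts and reproduces $O$ on every channel, and (ii) the low execution $\Progl{1}$, supplied with the real low inputs and default high inputs, halts and reproduces $\restrict{O}{L}$. By Prop.~\ref{prop:outputOfEMP} the mechanism draws its high-channel outputs only from $\Progl{0}$ and its low-channel outputs only from $\Progl{1}$, so gluing (i) and (ii) gives $\channeleq{O^*}{O}{\chnl}$ for every $\chnl$; the companion claim $\channeleq{I^*}{I}{\chnl}$ will follow from counting how many real items each channel of the global queue releases. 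Since the local executions are deterministic, each statement must hold for every interleaving, which is exactly what Prop.~\ref{prop:InputOfEMP} and~\ref{prop:outputOfEMP} already package.

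For part (i) I would first invoke Prop.~\ref{prop:InputOfEMP} to conclude that $\Progl{0}$ receives the real value on every channel, low and high. Prop.~\ref{prop:semantics-eq} then transports the controlled run $\execution{\Prog}{I}{O}$ (in the $\rightarrowtriangle$ system) to the local run of $\Progl{0}$ (in the $\Rightarrow$ system): the two agree on each input channel, so determinism of \Prog\ forces $\Progl{0}$ to perform the same instruction sequence, to consume exactly $\restrict{I}{\chnl}$ and to emit exactly $\restrict{O}{\chnl}$ on each $\chnl$. Hence $\Progl{0}$ terminates and, via Prop.~\ref{prop:outputOfEMP}, fixes the high-channel part of $O^*$ as $\restrict{O}{\chnl}$.

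Part (ii) is where the property is consumed and the cases diverge. For TSNI the low execution runs on some $I_{low}$ with $\loweq{I_{low}}{I}$ and defaulted high part, and Def.~\ref{def:TSNI} applied to $\execution{\Prog}{I}{O}$ directly yields $\execution{\Prog}{I_{low}}{O_{low}}$ together with $\loweq{O_{low}}{O}$, which is exactly what part (ii) demands. For RI I would take the witness $I'$ of Def.~\ref{def:RI} (low-equivalent, defaulted high part, with $\length{\restrict{I'}{\chnl}}\le\length{\restrict{I}{\chnl}}$); since a run on fixed low inputs and always-default high inputs is deterministic, $I'$ must coincide with the input that $\Progl{1}$ actually consumes, so $\Progl{1}$ halts with $\restrict{O}{L}$. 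For DI the low execution again sees an all-default high part, and I would reach the fully-defaulted run by induction on the number of non-default high items of $I$, peeling the last one off with Def.~\ref{def:DI} at each step while the clause that keeps the prefix's low and high inputs fixed makes the induction measure strictly decrease and preserves $\restrict{O}{L}$; the clones $\Progl{i}$ with $i>1$ are inert (they halt by the same defaulted-run argument, their outputs are discarded by Prop.~\ref{prop:outputOfEMP}, and they leave the consumed input untouched by Prop.~\ref{prop:DI:inputconsumed}).

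The hard part will be $\channeleq{I^*}{I}{\chnl}$ for RI, where both $\Progl{0}$ and $\Progl{1}$ hold the \emph{ask} privilege on high channels so either may trigger a real read. I would argue that, whichever execution fires first on a high channel $\chnl$, the \MAP\ activation removes a single item from the global $\chnl$-queue and broadcasts it (real to $\Progl{0}$ through $\canMap{\chnl}$, default to $\Progl{1}$), so the later request finds its value already queued locally and causes no second read; each logical item is thus consumed exactly once, independently of the interleaving. As the global queue holds $\length{\restrict{I}{\chnl}}$ items on $\chnl$, $\Progl{0}$ needs all of them, and $\length{\restrict{I'}{\chnl}}\le\length{\restrict{I}{\chnl}}$ forbids $\Progl{1}$ from demanding more, so exactly $\length{\restrict{I}{\chnl}}$ reads occur and $\channeleq{I^*}{I}{\chnl}$ follows; for NI and DI this step is easier because \MAP\ reads a high channel only on a request from $\Progl{0}$. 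Finally I would record why TINI is the exception: Def.~\ref{def:NI} delivers $\loweq{O'}{O}$ only under the standing assumption that the low-equivalent run terminates, so once $\Progl{1}$ diverges on its defaulted input the enforcement may fail to terminate and precision cannot be claimed.
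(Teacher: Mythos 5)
Your proposal is correct and follows essentially the same route as the paper: the high execution re-enacts the original run via Prop.~\ref{prop:semantics-eq} and determinism of \Prog, the low execution's termination on defaulted high inputs is extracted from the property itself (TSNI directly, the RI witness with its length bound, a peel-off-the-last-high-item induction for DI with the clones handled by Prop.~\ref{prop:DI:inputconsumed}), outputs are attributed through Prop.~\ref{prop:outputOfEMP}, and the TINI exception is explained by possible divergence of the low execution. The differences are presentational rather than substantive: the paper packages the termination-plus-input claim into per-property lemmas (Lem.~\ref{lem:ni:running}, Lem.~\ref{lem:ri:running}) proved by contradiction against determinism and reconstructs $I^*$ via Prop.~\ref{prop:relationship:localinputqueue}, whereas you argue the per-channel input equality by counting \MAP\ reads, and you actually spell out the DI case that the paper dismisses with ``follows the same structure.''
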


\begin{figure}
\centering
\includegraphics[scale=0.8]{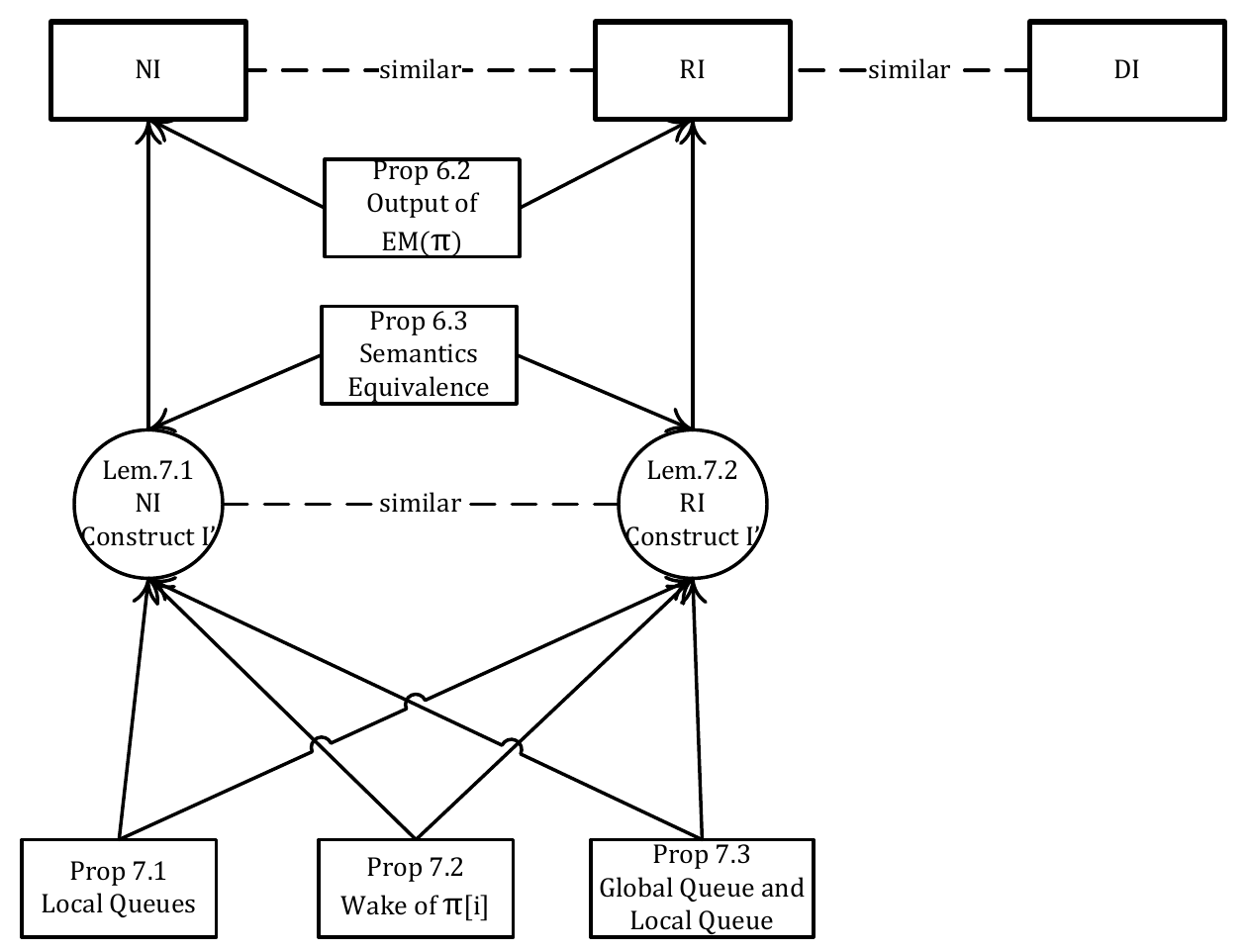}
\caption{Proof Strategy for Precision}
\label{fig:proof:PC}
\end{figure}

Figure~\ref{fig:proof:PC} shows the proof strategy for precision.
The proof of precision is more complex than the proof of soundness. At first, we need to prove a number of simple properties on the correct handling of interrupt signals and the equivalence between the semantics of controlled programs and the semantics of local executions.

\begin{proposition}[Local executions and local input queues] \label{prop:reuse:local}
For a local execution, when the input instruction is executed, if the input item required is in its local input queue, this item will be consumed. Otherwise, an interrupt signal is generated.
\end{proposition}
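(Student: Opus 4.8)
The plan is to prove this directly by reading off the two semantic rules for the input instruction of local executions given in Fig.~\ref{fig:LocCommandSem2}, namely \RINPUTYL\ and \RINPUTNL. The key observation is that, for a local execution whose state is \sE\ and whose current instruction is $\textbf{input}~x~\textbf{from}~c$, these two rules are mutually exclusive and jointly exhaustive: they are distinguished precisely by whether the dequeue operation $\dequeue{\Iid}{\chnl}$ returns a value different from \NIL\ or equal to \NIL.

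First I would recall the definition of the dequeue operator. By definition, $\dequeue{\Iid}{\chnl}$ returns $(\valueM, \Iid')$ with $\valueM \neq \NIL$ exactly when there exists a vector $\vec{v}$ in $\Iid$ with $\vec{v}[\chnl] \neq \NIL$ (i.e.\ the required input item is present in the local input queue), and returns $(\NIL, \Iid)$ otherwise. Thus the informal phrase ``the input item required is in its local input queue'' corresponds exactly to the side condition $\valueM \neq \NIL$ of rule \RINPUTYL, and its negation corresponds exactly to the side condition $\dequeue{\Iid}{\chnl} = (\NIL, \Iid')$ of rule \RINPUTNL.

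The proof then proceeds by case analysis. In the first case, where the item is present, \RINPUTYL\ is the only applicable rule; its conclusion substitutes the dequeued value $\valueM$ into the memory via \substitute{m}{x}{\valueM} and replaces the local input queue by $\Iid'$, i.e.\ the item is consumed. In the second case, where the item is absent, \RINPUTNL\ is the only applicable rule; its conclusion sets $\ST[i].\litr\lcomma\intsig{\chnl}$ and moves the local state from \sE\ to \sS, i.e.\ an interrupt signal on channel \chnl\ is generated. Since the two side conditions are exhaustive and mutually exclusive, exactly one of the two rules fires, which establishes the claim.

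The statement is essentially immediate from the operational semantics, so there is no genuine obstacle; the proof is a one-line appeal to the structure of Fig.~\ref{fig:LocCommandSem2}. The only points requiring a little care are the implicit precondition that the local execution be in the executing state \sE\ (shared by both rules), and the precise matching of the natural-language condition ``the item is in the queue'' with the dequeue side condition $\valueM \neq \NIL$; once these are made explicit, the result follows directly.
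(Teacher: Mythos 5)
Your proof is correct and follows exactly the route the paper takes: the paper's own proof is the one-line remark that the claim ``follows obviously from the semantics of local executions,'' and your case analysis on the rules \RINPUTYL\ and \RINPUTNL\ of Fig.~\ref{fig:LocCommandSem2}, keyed on whether $\dequeue{\Iid}{\chnl}$ yields a value different from \NIL, is precisely the spelled-out version of that observation. Nothing is missing; you have simply made explicit the mutual exclusivity and exhaustiveness of the two side conditions that the paper leaves implicit.
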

\begin{proof}
Proof follows obviously from the semantics of local executions.
\end{proof}

\begin{proposition}[The wake of local executions]
\label{prop:reuse:global}
The following facts hold:
\begin{enumerate}
\item If a local execution is sleeping on an input instruction that required an input item from the channel \chnl, this local execution will be waken up when the input item is ready and the instruction of \ProgM\ executed is the \NWAKE\ instruction. In addition, when a local execution is awaken, there is no interrupt signal in its configuration.

\item A local execution is not awaken when the input item required is not ready or when the input item required is ready, but the instruction executed of \ProgM\ is not the \NWAKE\ instruction.

\end{enumerate}
\end{proposition}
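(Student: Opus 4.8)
The plan is to establish both facts by direct inspection of the operational semantics of \MAP\ in Fig.~\ref{fig:sem:MAP}, reading each claim off the wake rule \RWAKE\ together with the definition of the readiness predicate \isReady\ in Eq.~\ref{func:isReady}. Throughout I fix the meaning of ``\Progl{x} is awaken'' to be that its state component changes from $\sS$ to $\sE$. A sleeping execution cannot take a step on its own, since every local-execution rule carries the side condition $\ST[x].\lstate = \sE$; hence within a \MAP\ phase the state of \Progl{x} can only be altered by a \MAP\ rule, and it suffices to inspect those.

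For the first fact, I would assume $\ST[x].\lstate = \sS$, that the next instruction of \Progl{x} is an input from \chnl, and that the required item is ready, i.e. $\dequeue{\ST[x].\linput}{\chnl} = (\valueM, I')$ with $\valueM \neq \NIL$. These are precisely the conjuncts in the definition of \isReady{\chnl}, so \isReady{\chnl} evaluates to \VTRUE\ at $x$. In every \MAP\ program of the enforcement mechanisms (Figs.~\ref{alg:MAP:RI},~\ref{alg:MAP:NI:SME},~\ref{alg:MAP:DI}) the readiness wake is issued as \iwake{\isReady{\chnl}}; when this \NWAKE\ instruction fires, rule \RWAKE\ forms $S = \{i : \isReady{\chnl}\text{ at }i\}$ and $x \in S$. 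The conclusion of \RWAKE\ rewrites every $i \in S$ to $\ST[i].\lstate\lcomma\sE$ and $\ST[i].\litr\lcomma\NIL$; specialising to $i = x$ yields both that \Progl{x} is moved to $\sE$ and that its interrupt signal is reset to \NIL, which is the additional ``no interrupt signal'' clause. (The targeted default-delivery wakes in the NI and DI programs use the predicate \identical{x} instead; there $x \in S$ holds directly and the item just placed in the queue is again ready, so the same conclusion applies.)

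For the second fact I would split into the two stated situations. If the required item is not ready, then $\dequeue{\ST[x].\linput}{\chnl}$ returns $(\NIL, I')$, the final conjunct of \isReady{\chnl} fails, \isReady{\chnl} is \VFALSE\ at $x$, and so $x \notin S$ even when \RWAKE\ executes; since \RWAKE\ leaves every configuration outside $S$ inside the unchanged frame $\Delta$, \Progl{x} stays in state $\sS$. If instead \MAP\ executes some instruction other than \NWAKE, I would argue by exhausting the remaining \MAP\ rules: the standard rules (\RASSG, \RCOMP, \RIFT, \RIFF, \RWHILET, \RWHILEF, \RSKIP) and \RINPM\ touch only \MAP's own program, memory, and the global input queue; \RMAP\ rewrites only the $\ST[i].\linput$ components of the matched executions; \RCLONE\ appends fresh executions (themselves created in state $\sS$) and edits \TAV\ and \TPV; and the activation rule \RMAPINIT\ resets only the interrupt signal of the selected execution to \NIL\ while leaving its state at $\sS$. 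None of these rules moves an existing execution from $\sS$ to $\sE$, so no local execution is awaken.

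The main obstacle is making the case analysis for the second fact genuinely exhaustive and, in particular, treating \RMAPINIT\ with care: it is the one non-\NWAKE\ rule that does rewrite an interrupt-signal component, so I must state explicitly that it nevertheless keeps the execution in $\sS$, lest ``signal removed'' be conflated with ``awaken''. The only other delicate point is matching the generic \NWAKE\ of the statement with the concrete predicates (\isReady{\chnl}, and \identical{x} for the default case) actually appearing in the \MAP\ programs, so that the membership $x \in S$ in the first fact coincides exactly with the readiness hypothesis.
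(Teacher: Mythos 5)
Your proposal is correct, but it takes a genuinely different route from the paper's own proof, which consists of a single sentence: the proposition ``follows by induction on the length of the derivation sequence of the enforcement mechanism.'' In effect, you have written out the content that this one-line induction leaves implicit. The paper's induction wrapper is what propagates the claim through an arbitrary interleaving of \MAP, \REDUCE\ and local-execution steps; you replace it with the (correct) framing observation that a sleeping execution is frozen --- every local-execution rule carries the side condition $\ST[x].\lstate = \sE$ --- so only a rule of the enforcement components can alter its configuration, and then you verify the two facts rule by rule: \RWAKE\ together with the definition of \isReady{\chnl} yields fact 1, including the resetting of the signal to \NIL, and exhaustion of the remaining rules (\RINPM, \RMAP, \RCLONE, \RMAPINIT) yields fact 2. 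Your explicit treatment of \RMAPINIT\ --- it clears the interrupt signal yet leaves the execution in $\sS$ --- is a point the paper's proof never surfaces at all, and it is exactly the case where ``signal removed'' must not be conflated with ``awaken.'' What the paper's induction buys is a uniform statement over whole derivations; what your approach buys is that the actual case analysis, which any honest unfolding of that induction would need, is on the page.

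Two small additions would make your case analysis airtight. First, in the ``item not ready'' branch of fact 2 you argue only about wakes predicated on \isReady{\chnl}; a wake predicated on \identical{x} (used in the NI and DI programs) would place $x$ in $S$ regardless of readiness, so you need to invoke there the argument you placed parenthetically under fact 1: every \iwake{\identical{x}} is immediately preceded by \imap{\defVal}{\chnl}{\identical{x}}, and no rule can remove that item from a sleeping execution's input queue in between, so the ``not ready'' hypothesis is vacuous for such wakes. Second, since the statement is phrased in terms of \ProgM\ but the semantics interleaves \REDUCE\ steps freely, a sentence disposing of \RWAKER\ would close the remaining loophole: in the given configurations its predicate \identical{i} targets only the execution whose output request activated \REDUCE, which is sleeping on an output rather than on an input, so it never wakes an execution covered by this proposition.
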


\begin{proof}
Proof follows by induction on the length of the derivation sequence of the enforcement mechanism.
\end{proof}

Next we show that from \Progl{0}'s input, we can reconstruct the original global input.

\begin{proposition}[Global input and local inputs] \label{prop:relationship:localinputqueue}
Let $k$ be the number steps of derivation of the execution of the enforcement mechanism of RI, DI, or NI. Assume that we have $(\EMP, I, \emptyQ) \Rightarrow^k (\EMP_k, I_k, O_k)$, and \Irck{0}{k} is the queue of the input items that have been received by \Progl{0}, then it follows that: \begin{itemize}
\item $\Irck{0}{k}.I_k = I$
\end{itemize}
\end{proposition}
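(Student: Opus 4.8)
The plan is to prove the identity by induction on the number of derivation steps $k$, treating $\Irck{0}{k}.I_k = I$ as a step-indexed invariant. The whole argument rests on two structural facts about how the global input queue and $\Progl{0}$'s local input queue evolve. First, the \emph{only} transition that removes an item from the global queue is an actual input action of $\MAP$ (rule $\RINPM$), which strips the first vector $\vec{v}$ off the front of $I_k$; local executions read from their own local queues (rule $\RINPUTYL$) and $\REDUCE$ only writes to the output queue, so neither touches $I_k$. Second, the only transition that enlarges $\Progl{0}$'s local input queue is a broadcast of $\MAP$ (rule $\RMAP$), which appends a vector at the back.

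The key preliminary fact I would establish first is that, in all three configurations (RI, NI, DI), $\Progl{0}$ holds the \emph{tell} privilege (so $\canMap{\chnl}$ holds for $x=0$) on every channel, and $\Progl{0}$ never receives a default vector. Inspecting the tables $\TAV$ (Figs.~\ref{fig:table:RI:MAP}, \ref{fig:table:NI:MAP}, \ref{fig:table:DI:MAP}) shows $\TAV[0][\chnl]\in\{\tgetask,\tget\}$ for every $\chnl$, so $\ttell\in\TAV[0][\chnl]$ always holds and the real-value broadcast $\imap{x}{\chnl}{\canMap{\chnl}}$ always includes $\Progl{0}$, whereas the default broadcasts are guarded either by $\neg\canMap{\chnl}$ (false for $\Progl{0}$) or by $\identical{i}$ with $i\neq 0$; the branches that default-map the requester are unreachable when the requester is $\Progl{0}$, since $\Progl{0}$ has the ask privilege on every high channel and merely \emph{waits} (a $\iskip$), never defaults, on low channels. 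Consequently, each vector $\vec{v}$ removed from the front of the global queue is, within the same $\MAP$ activation, appended unchanged to the back of $\Progl{0}$'s received queue, and no other vector is ever appended. This is essentially the $\Progl{0}$ half of Prop.~\ref{prop:InputOfEMP}, refined to track delivery order.

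With this in hand the induction is routine. Base case $k=0$: $\Irck{0}{0}=\emptyQ$ and $I_0=I$. For the inductive step I case on the applied rule. All rules other than $\RINPM$ and $\RMAP$ leave both $\Irck{0}{k}$ and $I_k$ unchanged, so the invariant is preserved trivially; this includes the DI-specific clone step (rule $\RCLONE$), which modifies only the stack and the privilege tables, touching neither $I_k$ nor $\Progl{0}$'s queue. The one delicate point is that the read ($\RINPM$) and its matching broadcast ($\RMAP$) are two distinct atomic steps, so strictly between them the item sits in $\MAP$'s memory and the bare equality momentarily fails. I would therefore strengthen the invariant to $\Irck{0}{k}.\vec{w}_k.I_k = I$, where $\vec{w}_k$ is the at-most-one vector that $\MAP$ has read but not yet broadcast (empty otherwise). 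An $\RINPM$ step moves the front vector of $I_k$ into $\vec{w}$; the following $\RMAP$ step moves $\vec{w}$ onto the back of $\Progl{0}$'s received queue; both preserve the concatenation. That there is never more than one pending vector follows because $\MAP$ reads at most one item per activation and can be re-activated (rule $\RMAPINIT$) only once its program has returned to $\iskip$, i.e. after the broadcast has occurred; this also pins the delivery order to the order of removal from $I$. The stated equality $\Irck{0}{k}.I_k=I$ is then exactly the strengthened invariant restricted to the configurations in which $\MAP$'s program is $\iskip$ (equivalently $\vec{w}_k=\emptyQ$).

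The main obstacle I anticipate is precisely this in-flight bookkeeping combined with the uniformity across the three configurations: one must verify, channel class by channel class and rule by rule, that $\Progl{0}$ receives each consumed vector exactly once, with its real value and never a default. The RI case is the most subtle, since there the low execution $\Progl{1}$ also carries the ask privilege on high channels and can itself trigger the read, yet that read still delivers the real value to $\Progl{0}$ and advances the global queue by exactly one vector, so no item is read twice and no item is lost.
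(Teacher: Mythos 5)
Your proposal is correct, and its skeleton is the same as the paper's: the paper proves this proposition by ``induction on the length of the global input queue and the length of the derivation sequence of the enforcement mechanism, along with the fact that \ldots executions are deterministic'' --- a one-sentence sketch --- while you do the induction over derivation steps explicitly. But your elaboration adds something beyond routine detail, in two ways. First, you notice that the bare equality $\Irck{0}{k}.I_k = I$ is \emph{not} an inductive invariant: it fails at every configuration lying strictly between \MAP's read (rule \RINPM), which pops the vector off the global queue, and the subsequent real-value broadcast (rule \RMAP), which appends it to \Progl{0}'s received items; during that window the vector sits in \MAP's memory. Your strengthened invariant $\Irck{0}{k}.\vec{w}_k.I_k = I$, with $\vec{w}_k$ the at-most-one in-flight vector, is exactly the repair needed, and the observation that rule \RMAPINIT\ can only re-activate \MAP\ once its program has returned to \iskip\ secures both the ``at most one'' claim and the preservation of delivery order; the paper neither states this qualification nor hints at it, so strictly speaking its proposition holds only under the restriction you make explicit (configurations with no in-flight vector, in particular \MAP\ idle). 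Second, your argument is purely structural --- a rule-by-rule case analysis plus the privilege-table fact that \Progl{0} holds \emph{tell} on every channel in all three configurations (so it receives every consumed vector with its real value, never a default, including in the delicate RI case where \Progl{1} can trigger high reads) --- and therefore dispenses with the paper's appeal to determinism, which is indeed irrelevant for a statement about one fixed derivation. The only loose phrase is calling $\vec{w}_k=\emptyQ$ ``equivalent'' to \MAP's program being \iskip: the implication goes only from idle \MAP\ to empty $\vec{w}_k$, but that is the only direction your conclusion uses, so nothing breaks.
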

\begin{proof}
The lemma is proven by using the induction technique on the length of the global input queue and the length of the derivation sequence of the enforcement mechanism, along with the fact that the execution of the controlled program and the executions of local executions are deterministic.
\end{proof}

At this point, we have all that is needed to present the key lemma for the proof of precision for NI that shows that all inputs have been processed and there is nothing left within the enforcement mechanism.

\begin{lemma}[Inputs of a controlled program and inputs consumed by the corresponding enforcement mechanism]\label{lem:ni:running}
Let \Prog\ be a program satisfying TSNI and \execution{\Prog}{I}{O}. Regardless of the order of executing local executions, if the low execution consumes the same low input items as in $I$, and the high execution consumes high input and low input items as in $I$, then it follows that the execution of the enforcement mechanism is terminated, and the input consumed by the enforcement mechanism is $I^*$, where \channeleq{I^*}{I}{\chnl} for all \chnl.


\end{lemma}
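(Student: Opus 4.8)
The plan is to reduce termination of the whole mechanism to termination of the two local executions $\Progl{0}$ (high) and $\Progl{1}$ (low), obtained by lifting the assumed termination of $\Prog$ through the semantic correspondence of Prop.~\ref{prop:semantics-eq}, and then to read off $\channeleq{I^*}{I}{\chnl}$ from the input-accounting results Prop.~\ref{prop:InputOfEMP} and Prop.~\ref{prop:relationship:localinputqueue}. First I would fix an arbitrary interleaving satisfying the two consumption hypotheses and reason about each local execution in isolation.

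For $\Progl{0}$: by hypothesis it consumes exactly the high and low items of $I$ and, by Prop.~\ref{prop:InputOfEMP}, it receives the \emph{real} value on every channel, so its effective input is channel-wise equal to $I$; applying the direction of Prop.~\ref{prop:semantics-eq} that lifts the controlled run $\execution{\Prog}{I}{O}$ to a local run on a channel-wise equal input, $\Progl{0}$ terminates, and since $\Prog$ on $I$ consumes its whole input queue (Def.~\ref{def:termination}), $\Progl{0}$ receives all the items of $I$. For $\Progl{1}$: by hypothesis it consumes the low items of $I$ and, by Prop.~\ref{prop:InputOfEMP}, only default values for high items, so its effective input $I'$ satisfies $\loweq{I'}{I}$; since $\Prog$ satisfies TSNI, Def.~\ref{def:TSNI} applied to $\execution{\Prog}{I}{O}$ and $\loweq{I'}{I}$ yields $\execution{\Prog}{I'}{O'}$, and Prop.~\ref{prop:semantics-eq} again lifts this to termination of $\Progl{1}$.

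It then remains to see that $\MAP$, $\REDUCE$ and the global queue reach the terminal configuration of Def.~\ref{def:termination:em}. Prop.~\ref{prop:reuse:local} and Prop.~\ref{prop:reuse:global} guarantee that each interrupt signal raised by a sleeping execution is eventually serviced by a single activation of $\MAP$ or $\REDUCE$ and that the execution is woken once its item sits in its local queue; hence the finitely many requests of the two terminating executions are all serviced, after which $\MAP$ and $\REDUCE$ remain at $\iskip$ for lack of signals. For the consumed input, Prop.~\ref{prop:InputOfEMP} says the only high items fetched from the environment are those requested by $\Progl{0}$ and the only low items fetched are those requested by $\Progl{1}$; by the two consumption hypotheses these are precisely the high and low items of $I$, giving $\channeleq{I^*}{I}{\chnl}$ for every \chnl, while the identity $\Irck{0}{k}.I_k = I$ of Prop.~\ref{prop:relationship:localinputqueue} together with $\Progl{0}$ having received all of $I$ forces the residual global queue $I_k = \emptyQ$. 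I expect the main obstacle to be deadlock-freedom: showing that for \emph{every} interleaving the mutual dependency between $\Progl{0}$'s blocked low-item reads and $\Progl{1}$'s fetching of those same low items resolves. The two consumption hypotheses are exactly what aligns the low items the two executions read, and combined with the wake-up guarantee of Prop.~\ref{prop:reuse:global} this should be discharged by an induction on the total number of outstanding requests (equivalently, on the remaining derivation length).
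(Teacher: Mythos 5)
Your proof takes essentially the same route as the paper's: termination of the low execution from TSNI (Def.~\ref{def:TSNI}), termination and non-deviation of the high execution from determinism of $\Prog$ via Prop.~\ref{prop:semantics-eq}, liveness of the sleep/wake protocol from Prop.~\ref{prop:reuse:local} and Prop.~\ref{prop:reuse:global}, and the final channel-wise accounting from Prop.~\ref{prop:InputOfEMP} and Prop.~\ref{prop:relationship:localinputqueue}. The only difference is presentational: the paper argues by contradiction (extra requests, missing requests, or non-termination of a local execution would contradict TSNI or determinism), whereas you lift the terminating runs forward and collect the same content into your concluding deadlock-freedom induction, which corresponds exactly to the paper's case analysis for the high execution being stuck on low-input requests.
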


\begin{proof}
The proof of this lemma is based on the proposition of equivalence between semantics of controlled programs and semantics of local executions (Prop.~\ref{prop:semantics-eq}) and the proposition of the relationships between the global input queue and local input queues (Prop.~\ref{prop:relationship:localinputqueue}).

According to the semantics of the enforcement mechanism of NI, the high execution does not influence the termination of the low execution, the input consumed and the output generated by the low execution.

Therefore, regardless of the order of executing local executions, if the low execution consumes the same low input items as in $I$, then the input consumed by the low execution is $I|_L.(\defvec)^*.I_a$, where $I_a$ contains only low input items. We next prove that $I_a =\emptyQ$ and the low execution is terminated.

\begin{itemize}
\item Assume that $I_a \neq \emptyQ$. This means there exists an input $I'$, where $I'|_L = I|_L.I_a$ and \defseq{I'|_H}, and \executionnt{\Prog}{I'}. Since \Prog\ satisfies TSNI, this case cannot happen.

\item Assume that \Progl{1} is not terminated. However, this leads to the conclusion that \Prog\ does not satisfy TSNI.
\end{itemize}

We now prove that the high execution is also terminated and does not request any high input item that is not in $I$. 
\begin{itemize}
    \item Case 1: Assume that the high execution is stuck on a request for low input items. If the high input execution needs a low input item, the enforcement mechanism will behave accordingly to Prop.~\ref{prop:reuse:local} and Prop.~\ref{prop:reuse:global}. The high execution is stuck on low input items when it requests for an input item that is never requested by the low execution. Since the low input items consumed by the low execution is $I|_L$, the stuck of the high execution leads to the conclusion that \Prog\ is non-deterministic.

	\item Case 2: The high execution requests a high input items that is not in $I$. Regarding this assumption, because of Prop.~\ref{prop:semantics-eq}, there are two instances of \Prog\ that consume some input items, but at some point run in different paths of execution. In other words, \Prog\ is non-deterministic.

    \item Case 3: The high execution receives all input items it needs, but is in an infinite loop. This case also leads to the conclusion that \Prog\ is non-deterministic.
\end{itemize}

Therefore both local executions are terminated. Let \Irc{0} be the input queue received by \Progl{0}. Since \Progl{0} does not request any other input items that are not in $I$, then \channeleq{\Irc{0}}{I}{\chnl}. From Prop.~\ref{prop:relationship:localinputqueue}, we have $I^* = \Irc{0}$. Thus \channeleq{I^*}{I}{\chnl} and \execution{\EMP}{I^*}{O^*}.


\end{proof}

We have now all that is needed for the main theorem.

\paragraph{Proof of Theorem~\ref{thm:precision} for NI.}
Let $I$ be an input queue, such that $\execution{\Prog}{I}{O}$. We need to prove that regardless of the order of executing local execution, the input $I^*$ and output $O^^*$ will be such that \channeleq{I^*}{I}{\chnl}, \channeleq{O^*}{O}{\chnl}, and \execution{\EMP}{I^*}{O^*}.


The proof of precision of the enforcement mechanism of NI is based on Lem.~\ref{lem:ni:running} and Prop.~\ref{prop:outputOfEMP}. We have:
\begin{itemize}
\item Regardless of the order of executing local execution, the input $I^*$ and output $O^*$ will be such that \channeleq{I^*}{I}{\chnl}, and \execution{\EMP}{I^*}{O^*} (by Lem.~\ref{lem:ni:running}).
\item \channeleq{O^*}{O}{\chnl} (by Prop.~\ref{prop:outputOfEMP} and \Prog\ satisfying TSNI).
\end{itemize}

Therefore, the theorem holds for the enforcement mechanism of NI. \qed

For the RI property, the structure of the proof is similar. We can directly state the main lemma since we have already stated the key propositions.

\begin{lemma} \label{lem:ri:running}
Let \Prog\ be a program satisfying RI and \execution{\Prog}{I}{O}. Regardless of the order of running local executions, if the low execution consumes the same low input items as in $I$, and the high execution consumes high input and low input items as in $I$, then it follows that the execution of the enforcement mechanism is terminated, and the input consumed by the enforcement mechanism is $I^*$, where \channeleq{I^*}{I}{\chnl} for all \chnl.
\end{lemma}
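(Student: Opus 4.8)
The plan is to reuse the skeleton of the proof of Lemma~\ref{lem:ni:running}, but with two RI-specific adaptations: the termination of the low execution is obtained from the RI property (Def.~\ref{def:RI}) instead of from TSNI, and I must control the fact that under the RI configuration (Fig.~\ref{fig:table:RI:MAP}) the low execution $\Progl{1}$ is itself allowed to trigger a real input action on a high channel (Prop.~\ref{prop:InputOfEMP}). First I would fix the two local inputs using Prop.~\ref{prop:InputOfEMP} and the hypotheses: $\Progl{0}$ receives a queue $\Irc{0}$ with $\channeleq{\Irc{0}}{I}{\chnl}$ for every $\chnl$ (the real low and high items of $I$), while $\Progl{1}$ receives $\Irc{1}$ with $\loweq{\Irc{1}}{I}$ and $\defseq{\Irc{1}|_H}$ (the real low items of $I$, and default values for all high requests).

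Next I would prove termination of both executions under any interleaving. Since $\channeleq{\Irc{0}}{I}{\chnl}$ and $\execution{\Prog}{I}{O}$, Prop.~\ref{prop:semantics-eq} gives that $\Progl{0}$ halts in the same number of steps, consuming exactly the per-channel items of $I$. For $\Progl{1}$, Def.~\ref{def:RI} applied to $\Prog$ and $I$ yields a witness $I'$ with $\loweq{I'}{I}$, $I'|_H=(\defvec)^*$, $\length{I'|_c}\le\length{I|_c}$ and $\execution{\Prog}{I'}{O'}$; as $\Irc{1}$ is channel-equivalent to $I'$ (same low projection, all-default high projection, and, by determinism of $\Prog$, the same number of high requests per channel), Prop.~\ref{prop:semantics-eq} makes $\Progl{1}$ terminate. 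To exclude deadlock I would read off Fig.~\ref{fig:table:RI:MAP} that $\Progl{1}$ owns the \task\ privilege on every channel, hence it never waits on another execution and runs to completion on its own, whereas $\Progl{0}$ can only wait for $\Progl{1}$ on the low channels; since $\Progl{1}$ supplies all $\length{I|_c}$ low items before halting, no circular wait arises and $\Progl{0}$ is never stuck.

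The delicate, RI-specific step, which I expect to be the main obstacle, is to show that the additional high requests of $\Progl{1}$ do not over-consume the global queue, so that the total consumed input $I^*$ stays channel-equivalent to $I$. The key observation is that in Fig.~\ref{alg:MAP:RI} one serving of a high request reads a single environment item (line~\ref{alg:RI:MAP:l2}) and broadcasts it \emph{simultaneously}: the real value to $\Progl{0}$ (line~\ref{alg:RI:MAP:l3}) and the default to $\Progl{1}$ (line~\ref{alg:RI:MAP:l4}). Therefore whichever execution first reaches the $j$-th item of a high channel $\chnl$ triggers its unique serving, after which the other finds that value already enqueued and issues no request; each high item of $I$ is thus read exactly once, independently of the schedule, and the per-channel serving count is $\length{I|_c}$ for $\Progl{0}$ against at most $\length{I'|_c}\le\length{I|_c}$ for $\Progl{1}$, giving exactly $\length{I|_c}$ reads on every high channel (and symmetrically $\length{I|_c}$ on every low channel, driven by $\Progl{1}$). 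Since $\Progl{0}$ carries the tell privilege on all channels it receives every real item read, so $I^*=\Irc{0}$; Prop.~\ref{prop:relationship:localinputqueue} then gives $\Irc{0}.I_k=I$, and because $\Progl{0}$ has received all per-channel items of $I$ the residual global queue $I_k$ is $\emptyQ$. With the global queue drained and both executions, \MAP\ and \REDUCE\ reduced to \iskip, the enforcement mechanism terminates (Def.~\ref{def:termination:em}) with $\channeleq{I^*}{I}{\chnl}$ for all $\chnl$, as required.
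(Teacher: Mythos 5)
Your proposal is correct and follows essentially the same route as the paper's proof: the same supporting propositions (Prop.~\ref{prop:semantics-eq}, Prop.~\ref{prop:relationship:localinputqueue}, Prop.~\ref{prop:InputOfEMP}, and implicitly Prop.~\ref{prop:reuse:local} and Prop.~\ref{prop:reuse:global}), termination of the low execution obtained from the RI witness $I'$ of Def.~\ref{def:RI} together with determinism, and the same RI-specific key point, namely that the length constraint $\length{I'|_c}\leq\length{I|_c}$ in Def.~\ref{def:RI} is exactly what prevents the low execution's high-channel requests from over-consuming the global queue. The only difference is presentational: the paper phrases this last step as a contradiction (an unconsumed high item in $\Irc{0}$ would mean $\length{\restrict{\Irc{1}}{\chnl}} \not\leq \length{\restrict{\Irc{0}}{\chnl}}$, violating RI), whereas you count the per-channel servings directly; the two formulations are equivalent.
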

\begin{proof}
The proof is similar to the proof of Lem.~\ref{lem:ni:running}.
According to the semantics of the enforcement mechanism of RI, the high execution does not influence the termination of the low execution and the output generated by the low execution. The high execution also does not influence the input consumed by the low execution, since \MAP\ can ask all input items for the low execution,  and the low execution only consumes default high input items.

Therefore, regardless of the order of executing local executions, if the low execution consumes the same low input items as in $I$, then the input consumed by the low execution is $I|_L.(\defvec)^*.I_a$, where $I_a$ contains only low input items.

The proofs that $I_a =\emptyQ$, the low execution is terminated, the high execution does not request any other low input items not in $I$ are similar to the proofs in Lem.~\ref{lem:ni:running}.

We now prove that all the high input items in \Irc{0} are consumed by \Progl{0}. Assume that there exists a high input item $\vec{v}$ in \Irc{0} ($\vec{v}[\chnl] \neq \NIL$) that is not consumed. From Prop.~\ref{prop:reuse:global} and Prop.~\ref{prop:reuse:local}, the existence of such a high input item means that the low execution requested a high input item that was not required by \Progl{0}. In other words, $\length{\restrict{\Irc{1}}{\chnl}} \not\leq \length{\restrict{\Irc{0}}{\chnl}}$. However, since \Prog\ satisfies RI, this case cannot happen.

Let \Irc{0} be the input queue received by \Progl{0}. Since \Progl{0} does not request any other input items that are not in $I$, then \channeleq{\Irc{0}}{I}{\chnl}. From Prop.~\ref{prop:relationship:localinputqueue},  we have $I^* = \Irc{0}$. Thus \channeleq{I^*}{I}{\chnl} and \execution{\EMP}{I^*}{O^*}.
\end{proof}

\paragraph{Proof of Theorem~\ref{thm:precision} for RI.}
We next prove the precision of the enforcement mechanism of RI. Let \Prog\ be a program satisfying RI, and $I$ be an input queue, such that $\execution{\Prog}{I}{O}$. We have:
\begin{itemize}
\item Regardless of the order of executing local execution, the input $I^*$ and output $O^*$ will be such that \channeleq{I^*}{I}{\chnl}, and \execution{\EMP}{I^*}{O^*} (by Lem.~\ref{lem:ri:running}).
\item \channeleq{O^*}{O}{\chnl} (by Prop.~\ref{prop:outputOfEMP} and \Prog\ satisfying RI).
\end{itemize}

Therefore, the theorem holds for the case of RI. \qed

\paragraph{Proof of Theorem~\ref{thm:precision} for DI.}
The proof for DI follows the same structure.


\section{Further Properties}\label{sec:discussion}

Our framework can capture other properties. Other BSPs from \cite{MANT-00-CSF} can also be enforced. Removal of events (RE) requires that if there is no high input, there is no high output. To enforce RE, when receiving an output request for a high channel from the high execution, \REDUCE\ needs to check whether there are any other high input items different from the default values and affecting the output generated by the high execution. Enforcement of strict removal of inputs (SRI) is similar to the enforcement of RI, but only the low execution can generate output items for both high and low channels. Strict deletion of inputs, deletion of events, and backward strict deletion can be enforced by using the \NCLONE\ instruction and the \REDUCE\ check mentioned above.

In \cite{Suth-86-SP} Sutherland defines the notion of \emph{non-deducibility} (ND) under the assumption that attackers have knowledge about the program, i.e. they know all possible executions of the program. ND can be enforced in our framework by running three local executions: the low, the high and the normal execution. Configuration of the low execution is similar to the low execution in the enforcement mechanism for NI. The normal execution will be privileged to use only input items read by other executions, and to output to high output channels. The high execution will read high input items and consume default low input items. In this way the attackers cannot deduce which execution occurred since there are other executions that can generate the same low behaviour. If an attacker based on his observations tries to construct a set of all possible executions that are low-equivalent, he cannot deduce which sequence of high events did not occur since the set he constructed contains all possible sequences of high events.

\begin{figure}

\begin{minipage}{0.5\textwidth}
\centering
\subfloat[\MAP\ for \prop1\ for an input from \chnl\ from \Progl{i}]{
\label{alg:MAP:SubDI}
\fbox{
\hspace{-12pt}
\begin{minipage}{1\columnwidth}
\begin{algorithmic}[1]
    \IF {$a \in \TAV[i][\chnl]$}                           \label{alg:SubDI:MAP:l1}
        \STATE \iinput{x}{\chnl} 						   \label{alg:SubDI:MAP:l2}
        \STATE \imap{x}{\chnl}{\canMap{\chnl}}			   \label{alg:SubDI:MAP:l3}
        \STATE \imap{\defVal}{\chnl}{\neg \canMap{\chnl}}  \label{alg:SubDI:MAP:l4}
        \STATE \iwake{\isReady{\chnl}}					   \label{alg:SubDI:MAP:l5}
    \ELSE												   \label{alg:SubDI:MAP:l6}
        \STATE \iskip									   \label{alg:SubDI:MAP:l7}
    \ENDIF
\end{algorithmic}
\end{minipage}
}
}\\
\subfloat[\TAV]{
\label{fig:table:SubDI:MAP}
\begin{minipage}{\columnwidth}
\begin{tabular}{|c|c|c|}
\hline
~ & \tindex{0} & \tindex{1} \\
\hline
$\LVL[\chnl] = H$ & \tgetask & \tnoaction \\
\hline
$\LVL[\chnl] = L$ & \tget & \tgetask \\
\hline
\end{tabular}
\end{minipage}
}
\end{minipage} \hfill
\begin{lrbox}{\mylistingbox}%
\begin{minipage}{0.35\columnwidth}
\vspace{53pt}
\begin{javascript}
input h1 from cH1
if h1 then
    input l2 from cL2
    input h2 from cH2
else
    input h2 from cH2
    input l2 from cL2
\end{javascript}%
\end{minipage}%
\end{lrbox}%
\subfloat[A program that satisfies RI, but not \prop1]{
\label{fig:example:RInotQ}
\usebox{\mylistingbox}
}
\caption{The new property - \prop1}
\label{fig:subDI}
\end{figure}

By modifying the privileges of local executions or modifying the \MAP\ or \REDUCE\ programs, we can enforce new properties. A possible modification is shown in Fig.~\ref{fig:table:SubDI:MAP}, where the configuration of \TAV\ is the same as the configuration of \TAV\ for NI, and the \MAP\ program is the same as the one for RI. The low execution needs to wait for high input items requested by the high execution even though the low execution can only consume default values. This option leads to a novel strict property, which we have called \emph{substitution-deletion of inputs} (\prop1). \prop1\ requires that when all high input items in an input $I$ are substituted by a default item or deleted, then the remaining input items can be corrected to $I'$, which preserves the low prefixes of high input items in $I$. A program that satisfies RI, but not $\prop1$, is described in Fig.~\ref{fig:example:RInotQ}; therefore these properties are actually different.

\begin{figure}
\centering
\begin{tabular}{|c|c|c|}
\hline
~ &  \tindex{0} & \tindex{1} \\
\hline
$\LVL[\chnl] = H$ & \ttellput & \ttellput \\
\hline
$\LVL[\chnl] = L$ & \tnoaction & \ttellput \\
\hline
\end{tabular}
\caption{A configuration of \TPV, in which the low execution can send output to high channels}
\label{fig:table:NI:nowaitoption}
\end{figure}

The configuration of \TPV\ also leads to discovery of new properties. A possible configuration of \TPV\ is described in Fig.~\ref{fig:table:NI:nowaitoption} in which the low execution can send output items to high channels.

\section{Relationships among the properties enforced}\label{sec:discussion:relationship}
We have defined enforcement mechanisms for several information flow properties, but it might be unclear whether these properties are actually different (in our notation). Further we demonstrate that the properties that we have investigated are not the same.



\paragraph{The relationship between RI and NI.}\label{sec:discussion:relationship:rini}
The RI property is stricter than the property of TINI, because the RI property requires that if a real value is replaced by a default one, then the other real values are either also replaced by the default ones, or will not appear in the input squeue at all. Actually, if a program satisfies the RI property, then it also satisfies the NI property. However, the opposite is not true.

\begin{figure}
\centering
\begin{lrbox}{\mylistingbox}%
\begin{minipage}{0.37\textwidth}
\vspace{55pt}
\begin{javascript}
input x from cH1
input value from cL1
if x then
    input y from cH2
else
    input z from cH3
output value to cL2
\end{javascript}
\end{minipage}
\end{lrbox}
\subfloat[A program that satisfies NI, but not RI]{
\label{fig:example:counter:rini:source} \usebox{\mylistingbox}
} \hspace{10pt}
\begin{lrbox}{\mylistingbox}%
\begin{minipage}{0.37 \textwidth}
\begin{javascript}
input h1 from cH1
if h1 then
    input h2 from cH2
    if h2 then
        input l1 from cL1
    else
        input l1 from cL1
        while true do skip
else
    input h2 from cH2
    input l1 from cL1
output l1 to cL2
\end{javascript}
\end{minipage}
\end{lrbox}
\subfloat[A program that satisfies RI, but not DI]{
\label{fig:example:RInotDI}
\usebox{\mylistingbox}
}
\caption{Examples of programs satisfying a property, but not another}
\label{fig:example:counter:source}
\end{figure}

A program, which satisfies NI, but not RI, is shown in  Fig.~\ref{fig:example:counter:rini:source}. In this example the level of the channels \linecode{cH1, cH2} and \linecode{cH3} is high, while the level of the channels \linecode{cL1} and \linecode{cL2} is low. This program satisfies the NI property, since the output item on the channel \linecode{cL5} is independent from the secret values from confidential channels. However, this program does not satisfy the RI property. The reason is that the execution of the program with the input $(\linecode{cH1}=\VTRUE) (\linecode{cL1}=\valueM)(\linecode{cH2}=\valueM)$ is terminated, but if we apply the procedure of perturbation and correction on this input, the results are inputs with which the execution of the program will be error.

\paragraph{The relationship of DI and RI}
A program satisfying RI, but not DI, is shown in Fig.~\ref{fig:example:RInotDI}. In this example, $I = (\linecode{cH1}=\VTRUE)(\linecode{cH2}=\VTRUE)(\linecode{cL1}=\valueM)$ is an input and the execution of the program on $I$ generates the output $O = (\linecode{cL2}=\valueM)$. If we replace all high inputs with the default values, we obtain another input $(\linecode{cH1}=\VFALSE)(\linecode{cH2}=\VFALSE)(\linecode{cL1}=\valueM)$, such that the program in Fig.~\ref{fig:example:RInotDI} will produce an output low equivalent with $O$. However, if we replace the last high input event by the default value, then the new input is $I^*  = (\linecode{cH1}=\VTRUE)(\linecode{cH2}=\VFALSE)(\linecode{cL1}=\valueM)$, and the execution of the program with $I^*$ is not terminated. Therefore, the DI property does not hold for the program.

\section{Limitations}\label{sec:discussion:limitations}
Currently, the enforcement mechanism is not independent from the choice of the default values (\defVal). We prove soundness and precision of enforcement with respect to all possible choices of the default values, and we assume that for each channel it is possible to determine a suitable (``non-leaking'') default value.

The definition of DI in our notation requires that high items in $I_2'$ are default ones. This constraint is enough to prevent attackers from deducing whether the value of the last high input is default or not. However, we can put another constraint on $I_2'$, i.e, $\length{\restrict{I_2'}{\chnl}} \leq \length{\restrict{I_2}{\chnl}}$ for all \chnl. Regarding this additional constraint, the relationship between DI and RI as shown in \cite{MANT-00-CSF} is preserved.


Our enforcement mechanism in \S\ref{sec:em:NI} inherits the limitations of the SME mechanism \cite{Devr-Pies-10-IEEESP}. SME can soundly enforce TINI, but not TSNI. This happens in the case when the low execution is terminated but the high execution is not, and thus the whole enforcement mechanism is not terminated. Respectively, SME (and our enforcement mechanism for NI) can precisely enforce TSNI, but not TINI.




In \cite{Kash-Wied-Hard-11-SSP} Kashyap, Wiedermann and Hardekopf evaluate the security guarantees of SME for the termination covert channel; they have proposed to mediate the security problems of SME related to this channel with more sophisticated schedulers. In our approach we do not schedule the order of local executions, therefore, we cannot immediately adapt their suggestions. However, our framework can be extended to control the order of executing local executions by specifying a new rule to control the start of local executions, and the predicate \isReady{} used in the \NWAKE\ instruction.

%
%

 We see one of the main limitations of our current proposal in the absence of a practical implementation. It is still an open question, whether the memory and performance overhead will be acceptable, especially for complex properties, such as DI. Devriese and Piessens in the original SME paper \cite{Devr-Pies-10-IEEESP}, as well as Bielova et al. in \cite{Biel-etal-11-NSS} and De Groef et al. in \cite{DeGroef-etal-12-CCS} report on complications while instrumenting SME for real browsers, which we will have to address. A working implementation is our next target.

\section{Related Work}\label{sec:relwork}
The information flow policies enforcement is a deeply investigated field. We will briefly recall the developed approaches for information flow policies enforcement and discuss the most relevant techniques in more details.

Static analysis techniques for information flow security inspect the program code in order to check whether there is any unwanted information flow. We refer the interested reader to the survey by Sabelfeld and Myers \cite{Sabe-Myer-2003} with an excellent overview of static language-based approaches for information flow security.




In contrast to the static verification techniques, dynamic analysis for information flow enforcement tracks propagation of confidential information when a program is executed;  an extensive review on the dynamic approach can be found in \cite{LeGu-07}. The trade-offs between static and dynamic analysis approaches are evaluated by Russo and Sabelfeld in \cite{Russ-Sabe-10-CSF}. 




Our choice of using the multi-execution approach, despite its performance overhead, was dictated by its advantages over the static and dynamic information flow analysis techniques. Static analysis can fall short in scenarios when the program can be composed dynamically, such as in the case of JavaScript; dynamic runtime monitoring for information flow can can suffer from impossibility to account for the branch of execution that was not taken, and can leak the control flow details \cite{Sabe-Myer-2003}, e.g. introduce information leaks through the halting behaviour of the program. 

Secure multi-execution \cite{Devr-Pies-10-IEEESP} has inspired many researchers to push further investigation of this technique. Jaskelioff and Russo in \cite{Jask-Russ-2012} describe their adaptation of SME to Haskell and provide an SME implementation in a handy library. SME is applied to a reactive model of a browser in \cite{Biel-etal-11-NSS}, and is implemented as a fully functional web browser FlowFox that embeds an SME-based runtime enforcement mechanism  in \cite{DeGroef-etal-12-CCS}. FlowFox is a modification of Firefox, it introduces a noticeable memory and performance overhead, but works with most of the existing web sites. We plan to learn from \cite{DeGroef-etal-12-CCS} how to implement a fully working solution and how to evaluate the usability.

Barthe et al. \cite{Bart-Crespo-FMOODS-2012} achieve the effects of SME through static program transformation instead of modifying the runtime environment. The transformation technique, which provides sound and precise enforcement of non-interference, is based on the main SME idea: a program is transformed into the sequential composition of the same code paired with a security level ranging from low to high. The program instances on higher levels reuse the inputs of the low instances through global buffers of inputs. We achieve the same security with our enforcement mechanism configuration for NI.

In \cite{Kash-Wied-Hard-11-SSP} the authors analyse SME with respect to timing and termination covert channels and propose a variety of schedulers for SME to close these channels. Our framework can be extended with a scheduler to orchestrate the local executions order of executions; we plan to apply the subtleties of timing- and termination-sensitive non-interference identified by the authors and improve our framework by closing these covert channels.

Instead of having multiple different executions, in \cite{Aust-Flan-12-POPL} non-interference is achieved by using faceted values (pairs of two values containing low and high information). This allows to effectively simulate multiple executions on different security levels while in fact running a single-process execution (the projection property).   The authors also introduce enforcement of declassification policies with their technique. Our enforcement mechanism for NI ensures the same security as the faceted values approach; however, our framework currently does not include support for declassification. Yet, enforcement of such property as DI seems infeasible for the faceted values approach in its current state, as significant changes in the semantics for treatment of faceted values are required. 

Capizzi et al. in \cite{Capi-Long-Venk-08-ACSAC} propose the shadow execution technique, with the main goal to prevent confidentiality leaks of the user information in an operating system setting. Shadow execution, which idea is similar to SME, consists of replacing the original program with two copies. The private copy (the high execution) receives the confidential data, but is prevented from accessing the network. The public copy (the low execution) receives fake data, but can access the network; the results from the network are supplied also to the private copy. In this way the private copy can avail any network related functionality without leaking the confidential data. Our framework can be configured to fully simulate the shadow execution technique, by using the configuration for enforcement of NI and regulating the network connectivity of the program copies.

With respect to the body of work on the SME technique, we do not push further the security guarantees offered by the SME paper \cite{Devr-Pies-10-IEEESP}; for the non-interference property we do not improve the SME drawbacks and do not close the reported covert channels. However, our goal is different. We aim at creating an enforcement framework which can be easily configured to accommodate different information flow properties; and multi-execution is just the technique we have chosen for achieving our goal. Other techniques (e.g. the faceted values approach) can also be considered.




\section{Conclusion}\label{sec:conclusion}
We have presented an architecture of an extensible framework for enforcement of information flow properties. To the best of our knowledge, this is the first enforcement mechanism capable to accommodate more than one property. The main idea behind our approach is to run several local instances of a program in parallel, following the idea of secure multi-execution \cite{Devr-Pies-10-IEEESP}, and carefully orchestrate processing of input and output operations of the enforced program through two components (\MAP\ and \REDUCE), and two tables (\TAV\ and \TPV).

To support our claims on the extensibility of the framework we have provided a set of configurations of the enforcement framework for enforcement of non-interference and several properties from the framework of Mantel \cite{MANT-00-CSF}. The the framework components programs for each of these properties are quite simple. For these properties we have formally proven soundness and precision of enforcement.

Our approach to enforcement allows the users to define and enforce novel information flow properties, only by fine-tuning the settings. This characteristics of our framework has huge potential. We plan to continue studying new properties that appear due to fine-tuning the components of the framework, and to research configurations for combinations of several properties.

One may argue whether it is actually easy to configure the enforcement in order to add a new property. Of course, we do not expect that an average user might be interested in defining his own property; we only aim to provide such a user with a checkbox selection of a desired information flow property. However, more security-aware users and the security researchers with a desire to investigate some information flow property will be able to obtain the enforcement mechanism by customizing our framework with simple programs for \MAP\ and \REDUCE, and defining the tables \TAV\ and \TPV, instead of hacking their browsers.

Our next steps include the investigation the patterns of \TAV, \TPV, \ProgM, \ProgR\ and the property to be enforced; a proof-of-concept implementation (we have chosen to implement our framework for a web browser; however, our approach can be suitable to any platform), and extension of the framework with more properties and options for declassification.

\section*{Acknowledgment}
This work is partly supported by the projects EU-IST-NOE-NESSOS and EU-IST-IP-ANIKETOS.

\newpage

\newpage
\appendix



\end{document}